\newif\ifaos
\aosfalse

\ifaos
\documentclass[aos,submission]{imsart}
\else
\documentclass{article}
\usepackage[margin=1.7in]{geometry}
\usepackage[hang,flushmargin]{footmisc}
\fi

\RequirePackage{amsthm,amsmath,amsfonts,amssymb}
\RequirePackage[authoryear]{natbib}
\RequirePackage{xr-hyper}
\RequirePackage[colorlinks,citecolor=blue,urlcolor=blue]{hyperref}
\RequirePackage{graphicx}

\RequirePackage{mathtools}
\RequirePackage{array,multirow}
\RequirePackage{verbatim}
\RequirePackage{caption}
\RequirePackage{subcaption}
\RequirePackage{fancyvrb}
\RequirePackage{enumerate}
\RequirePackage{relsize}
\RequirePackage{placeins}

\RequirePackage{xparse}
\RequirePackage{xifthen}
\RequirePackage{enumitem}
\RequirePackage{pdfsync}
\RequirePackage{stefan_tex}
\synctex=1

\theoremstyle{plain}
\newtheorem{prop}{Proposition}

\newtheorem{coro}[prop]{Corollary}
\newtheorem{lemm}[prop]{Lemma}
\newtheorem{theo}[prop]{Theorem}

\theoremstyle{remark}
\newtheorem{exam}{Example}

\newtheorem{rema}{Remark}

\ifaos 
\startlocaldefs 
\fi

\DeclareMathOperator{\var}{Var}

\DeclareMathOperator*{\vspan}{span}
\DeclareMathOperator*{\cspan}{\overline{span}}

\DeclareMathOperator*{\absconv}{absconv}
\DeclareMathOperator*{\conv}{conv}

\newcommand{\MM}{\mathcal M}

\newcommand{\tpsi}{\tilde{\psi}}
\newcommand{\influence}{\iota}
\newcommand{\dpsi}[1][]{\ifthenelse{\equal{#1}{}}{\dot{\psi}}{\dot{\psi}_{#1}}}
\newcommand{\dchi}[1][]{\ifthenelse{\equal{#1}{}}{\dot{\chi}}{\dot{\chi}_{#1}}}
\newcommand{\edchi}[1][]{\ifthenelse{\equal{#1}{}}{\dot{\chi}}{\dot{\chi}_{#1}}}
\renewcommand{\dh}[1][]{\ifthenelse{\equal{#1}{}}{\dot{h}}{\dot{h}_{#1}}}
\newcommand{\riesz}[1][]{\ifthenelse{\equal{#1}{}}{\gamma}{\gamma_{{#1}}}}
\newcommand{\hriesz}[1][]{\ifthenelse{\equal{#1}{}}{\hgamma}{\hgamma_{{#1}}}}
\newcommand{\ariesz}[1][]{\ifthenelse{\equal{#1}{}}{\tilde{gamma}}{\tilde{\gamma_}{{#1}}}}

\newcommand{\tangent}{\mathcal{T}}
\newcommand{\gammaipw}{\gamma^{\star}}
\newcommand{\excess}{\mathcal{L}}

\newcommand{\gapprox}{\tilde{g}}

\newcommand{\HH}{\mathcal{H}}

\DeclarePairedDelimiter\abs{\lvert}{\rvert}
\DeclarePairedDelimiter\norm{\lVert}{\rVert}
\DeclarePairedDelimiter\set{\{}{\}}
\DeclarePairedDelimiter\inner{\langle}{\rangle}

\renewcommand{\S}{\mathcal{S}}
\newcommand{\R}{\mathbb{R}}

\newcommand{\GG}{\mathcal{G}}
\newcommand{\F}{\mathcal{F}}

\NewDocumentCommand{\Pn}{g}{\IfValueTF{#1}{\P{#1}_n}{\P_n}}
\RenewDocumentCommand{\P}{g}{\IfValueTF{#1}{P^{(#1)}}{P}}
\NewDocumentCommand{\einv}{g}{%
    \ensuremath{e^{-1}\IfValueT{#1}{\p{#1}}}}
\NewDocumentCommand{\E}{g}{%
	\ensuremath{\operatorname{{\mathbb E}}\IfValueT{#1}{ \p{#1} }}}
\RenewDocumentCommand{\L}{mg}{%
        \ensuremath{L_{#1}\IfValueT{#2}{(#2)}}}
\NewDocumentCommand{\M}{g}{\IfValueTF{#1}{\smash{\mathbb{M}^{(#1)}}}{\mathbb{M}}}
\NewDocumentCommand{\Mn}{g}{\IfValueTF{#1}{\smash{\mathbb{M}^{(#1)}_n}}{\mathbb{M}_n}}
\NewDocumentCommand{\I}{mmg}{{\IfValueTF{#3}{I_{#2}^{(#3)}}{I_{#2}}}(#1)}

\ifaos 
\endlocaldefs 
\fi


\ifaos
\externaldocument{supplement}
\fi

\begin{document}

\ifaos
\begin{frontmatter}

\title{Augmented Minimax Linear Estimation}
\runtitle{Augmented Minimax Linear Estimation}

\begin{aug}
\author[A]{\fnms{David} \snm{Hirshberg}\ead[label=e1]{davidahirshberg@stanford.edu}}
\and
\author[B]{\fnms{Stefan} \snm{Wager}\corref{}\ead[label=e2]{swager@stanford.edu}}
\address[A]{Department of Statistics, Stanford University, \printead{e1}}
\address[B]{Stanford Graduate School of Business, \printead{e2}}
\end{aug}
\else
\title{Augmented Minimax Linear Estimation}
\author{David A.~Hirshberg \and Stefan Wager}
\date{Stanford University}
\maketitle
\fi

\begin{abstract}
Many statistical estimands can expressed as continuous linear functionals of a conditional expectation function.
This includes the average treatment effect under unconfoundedness and generalizations for continuous-valued and personalized treatments.
In this paper, we discuss a general approach to estimating such quantities: we begin with a simple plug-in estimator based on
an estimate of the conditional expectation function, and then correct the plug-in estimator by subtracting a minimax linear estimate
of its error. We show that our method is semiparametrically efficient under weak conditions and observe promising performance on both real and simulated data.
\end{abstract}

\ifaos
\begin{keyword}[class=MSC2010]
\kwd{62F12}
\end{keyword}

\begin{keyword}
\kwd{causal inference}
\kwd{convex optimization}
\kwd{semiparametric efficiency}
\end{keyword}

\end{frontmatter}
\fi

\section{Introduction}
\label{sec:introduction}
Suppose we observe
$n$ independent and identically distributed samples $(Z_i, Y_i) \sim P$ with support in $\zz \times \RR$,
and we want to estimate a continuous linear functional of the form
\begin{equation}
\label{eq:psi}
\psi(m) = \EE[P]{h(Z_i, \, m)} \ \ \text{ at } \ \ m(z) = \EE[P]{Y_i \cond Z_i = z}. 
\end{equation}
Our main result establishes that we can build 
efficient estimators for a wide variety of such problems simply by subtracting from 
a plugin estimator $\psi(\hm)$ a minimax linear estimate of its error $\psi(\hm)-\psi(m)$. 

The following estimands from the literature on causal inference and missing data are of this type 
and can be estimated efficiently by our approach.

\begin{exam}[Mean with Outcomes Missing at Random]
\label{exam:mar}
We observe covariates $X_i$ and some but not all of the corresponding outcomes $Y_i^{\star}$.
We write $W_i \in \cb{0, \, 1}$ to indicate whether the outcome $Y_i^{\star}$ was observed, 
and define $Z_i=(X_i, W_i)$ and $Y_i=W_i Y_i^{\star}$; we then estimate the 
linear functional $\psi(m) = \EE[P]{m(X_i, \, 1)}$ at $m(x,w) = \EE[P]{Y_i \cond X_i=x, W_i=w}$.
This will be equal to the mean $\EE[P]{Y^{\star}_i}$ if, conditional on covariates $X_i$,
each outcome $Y_i^{\star}$ is independent of its nonmissingness $W_i$ \citep{rosenbaum1983central}. 
\end{exam}
\begin{exam}[Average Partial Effect]
\label{exam:ape}
Letting $Z_i = \p{X_i, \, W_i} \in \xx \times \RR$, we estimate the average of the derivative of
the response surface $m(x,w)$ with respect to $w$,
$\psi(m) = \EE[P]{\frac{\partial}{\partial w} \cb{m(X_i, \, w)}_{w = W_i} }$.
This estimand, and weighted variants of it, quantify the
average effect of a continuous treatment $W_i$ under exogeneity \citep*{powell1989semiparametric}.
\end{exam}
\begin{exam}[Average Partial Effect in the Conditionally Linear Model]
\label{exam:ape-linear}
In the setting of the previous example, we make the additional assumption that the 
regression function $m$ is conditionally linear in $w$, $m(x,w) = \mu(x) + w\tau(x)$.
The average partial effect is then $\psi(m) = \EE[P]{\tau(X_i)}$.
\end{exam}
\begin{exam}[Distribution Shift]
\label{exam:distribution-shift}
We estimate the effect of a shift in the distribution of the
conditioning variable $Z$ from one known distribution, $P_0$, to another, $P_1$, i.e., $\psi(m) =$ \smash{$\int m(z) (d\P_1(z) - d\P_0(z))$} for \smash{$m(z) = \EE[P]{Y_i \mid Z_i=z}$}.
Under exogeneity assumptions, this estimand can be used to compare policies for assigning personalized treatments, and estimators for it form a key building block in methods for estimation of optimal treatment policies.
\end{exam}

Below, we first discuss our estimator in the simple case that 
$h(z, \, m)$ in \eqref{eq:psi} does not depend on $z$, i.e., $h(z,m)=\psi(m)$. In this case, e.g., in Example~\ref{exam:distribution-shift},
we can evaluate
$\psi(m)$ without knowledge of the distribution $\P$ of $z$, and we say that our functional of interest 
$\psi(\cdot)$ is \emph{evaluable}.
From Section~\ref{sec:simple-setting} on, we will address the general case where $h$ also
depends on $z$ and so, even if we knew $m$ a-priori, we could only approximate $\psi(m)$
with a sample average \smash{$n^{-1}\sum_{i=1}^n h(Z_i,\,m)$.}

\subsection{Estimating Evaluable Linear Functionals}
\label{sec:amle}

Consider the estimation of $\psi(m)$ where $\psi(\cdot)$ is an evaluable mean-square-continuous linear functional.
The estimator we propose takes a plugin estimator $\psi(\hm)$, and then subtracts out an estimate of its error
$\psi(\hm)-\psi(m)=\psi(\hm-m)$ obtained as a weighted average of regression residuals,
\begin{equation}
\label{eq:aml-intro}
\hpsi = \psi(\hm) - \frac{1}{n} \sum_{i = 1}^n \hriesz_i \p{\hm(Z_i) - Y_i}.
\end{equation}
Our approach builds on a result of \citet*{chernozhukov2016locally} and \citet*{chernozhukov2018double},
who show that we can use the Riesz representer for $\psi$ to construct efficient estimators of this type.

To motivate this approach recall that, by the Riesz representation theorem,
any continuous linear functional $\psi(\cdot)$ on the square integrable functions from $\zz$ to $\R$
has a Riesz representer $\riesz[\psi](\cdot)$, i.e., a function satisfying \smash{$\int \riesz[\psi](z)f(z) d\P(z) = \psi(f)$} for all square-integrable functions $f$ \citep[e.g.,][Theorem 1.41]{peypouquet2015convex}.
Then, if we set \smash{$\hriesz_i = \riesz[\psi](Z_i)$} in \eqref{eq:aml-intro}, the second term
in the estimator acts as a correction for the error of \smash{$\psi(\hm)$} because
\begin{equation}
\label{eq:derivation}
\begin{split}
\psi(\hm) - \psi(m) &= \int \riesz[\psi](z)(\hm-m)(z) d\P(z)  
\approx \frac{1}{n}\sum_{i=1}^n \riesz[\psi](Z_i) (\hm(Z_i) - m(Z_i))  \\
&= \frac{1}{n}\sum_{i=1}^n \riesz[\psi](Z_i) (\hm(Z_i)-Y_i) + \frac{1}{n}\sum_{i=1}^n \riesz[\psi](Z_i) \p{Y_i - m(Z_i)}. 
\end{split}
\end{equation}
Thus, plugging the above expression into \eqref{eq:aml-intro}, we see that if we could compute
our estimator with the oracle Riesz representer weights $\riesz[\psi](Z_i)$, 
its error would very nearly be a weighted sum of mean-zero noise \smash{$n^{-1}\sum_{i=1}^n \riesz[\psi](Z_i) \varepsilon_i$} where \smash{$\varepsilon_i = Y_i - m(Z_i)$.}
This behavior is asymptotically optimal with a great deal of generality \citep[e.g.,][Proposition 4]{newey1994asymptotic}. 

Our goal will be to imitate the behavior of this oracle estimator without a-priori knowledge of the Riesz representer. 
One possible approach is to determine the form of the Riesz representer $\riesz[\psi](\cdot)$ by solving
analytically the set of equations that define it, 
\begin{equation}
\label{eq:riesz}
\int \riesz[\psi](z) f(z) d\P(z)  = \psi(f)\ \text{ for all $f$ satisfying }\ \int f(z)^2 d\P(z)  < \infty,
\end{equation}
then estimate it and plug the resulting weights $\hriesz_i = \hriesz[\psi](Z_i)$ into \eqref{eq:aml-intro}. 
In the context of our first example, the estimation of a mean with outcomes missing,
the Riesz representer is the inverse probability weight $\riesz[\psi](w,x) = w/e(x)$ where $e(x) = \P[W_i=1 \mid X_i=x]$, 
and this plug-in approach involves first obtaining an estimate \smash{$\he(x)$} of
treatment probabilities and then weighting by its inverse.
This is the well-known Augmented Inverse Probability Weighting (AIPW) estimator of \citet{robins1994estimation}. 
\citet{chernozhukov2016double} provide general results on the efficiency of
such estimators, provided \smash{$\hriesz[\psi](Z_i) - \riesz[\psi](Z_i)$} goes to zero fast enough in squared-error loss.

We take another approach. Considering our regression estimator $\hm$ and the design $Z_1 \ldots Z_n$ to be fixed,\footnotemark\,we
simply choose the weights $\hgamma \in \R^n$ that make our correction term $n^{-1} \sum_{i = 1}^n \hriesz_i \p{\hm(Z_i) - Y_i}$
a minimax linear estimator of what it is intended to correct for, $\psi(\hm-m)$.
To be precise, we first choose an absolutely convex set of functions $\ff$ which we
believe should contain the regression error $\hm - m$.
We then choose weights $\hgamma_i$ that perform best in terms of worst case mean squared error
over possible regression errors $\hm - m \in \ff$
and conditional variances satisfying $\Var[P]{Y_i \mid Z_i} \le \sigma^2$. 
This specifies the weights $\hriesz$ as the solution to a convex optimization problem,
\begin{equation*}
\hriesz = \argmin_{\gamma \in \R^n}\set*{ I_{\psi,\ff}^2(\gamma) + \frac{\sigma^2}{n^2}\norm{\gamma}^2 },
\quad I_{\psi,\ff}(\gamma) = \sup_{f \in \ff}\set*{\frac{1}{n} \sum_{i = 1}^n \gamma_i f(Z_i) - \psi(f)}.
\end{equation*}
The good properties of minimax linear estimators like this one are well known. \citet{donoho1994statistical} and related papers
\citep{armstrong2015optimal,cai2003note,donoho1991geometrizing,ibragimov1985nonparametric,
johnstone2015-gaussian-sequence,juditsky2009nonparametric} show that when a regression function $m$ is in a convex set $\ff$ and $Y_i \cond Z_i \sim N(0,\sigma_i^2)$,
a minimax linear estimator of a linear functional \smash{$\psi(m)$} will come within a factor 1.25 of the minimax risk over all estimators.
In addition to strong conceptual support, estimators of the type have been found to perform well
in practice across several application areas \citep{armstrong2015optimal,imbens2017optimized,zubizarreta2015stable}.

Methodologically, the main difference between our proposal and the references cited above
is that we use the minimax linear approach to debias a plugin estimator \smash{$\psi(\hm)$}
rather than as a stand-alone estimator.
Because we `augment' the minimax linear estimator by applying it after
regression adjustment in the same way that the AIPW estimator
augments the inverse probability weighting estimator, we refer to our
approach as the Augmented Minimax Linear (AML) estimator.
Our main result establishes semiparametric efficiency of the AML estimator
under considerable generality.

\footnotetext{If we estimate $\hm$ on an auxiliary sample, this is the case when we condition on both that sample and on $Z_1 \ldots Z_n$.
While it is not necessary to estimate $\hm$ on an auxiliary sample when estimating linear functionals,
it can buy us some additional flexibility. We discuss this in Section~\ref{sec:sample-splitting}.}

We note that the weights $\hgamma$ that underlie minimax linear estimation can be interpreted as a penalized least-squares solution to a set of estimating equations 
suggested by the definition \eqref{eq:riesz} of the Riesz representer $\riesz[\psi]$,
\begin{equation}
\label{eq:riesz-sample}
\frac{1}{n}\sum_{i=1}^n\riesz_i f(Z_i)  \approx \psi(f) \ \text{ for all }\ f \in \ff. 
\end{equation}
These estimating equations generalize covariate balance conditions
from the literature on the estimation of average treatment effects, and 
when analyzing our estimator we build on approaches used to study 
treatment effect estimators that use balancing weights \citep[e.g.,][]{athey2016approximate,graham1,imai2014covariate,kallus2016generalized,zubizarreta2015stable};
see Section~\ref{sec:balancing} for further discussion.

The restriction of $f$ to a strict subset $\ff$ of the square-integrable functions is 
necessary, as there are infinitely many square-integrable functions 
$f$ that agree on our sample $Z_1 \ldots Z_n$ and they need not even approximately agree in terms of $\psi(f)$. 
Our choice of this subset $\ff$, a set that characterizes our uncertainty about the regression error function $\hm - m$, 
focuses our estimated weights $\hgamma$ on the role they play in ensuring that \eqref{eq:riesz-sample} is satisfied for this function $f=\hm - m$. The size of this subset $\ff$, measured by, e.g., its Rademacher complexity, 
determines the accuracy with which these equations \eqref{eq:riesz-sample} can be simultaneously satisfied.
The smaller we can make $\ff$, i.e., the better the consistency guarantees we have for $\hm$, the more accurately we can solve
\eqref{eq:riesz-sample}. In practice, we may take $\ff$ to be a set of smooth functions, functions that are approximately sparse in some basis, 
functions of bounded variation, etc.

That our weights $\hriesz_i$ approximately solve the estimating equations \eqref{eq:riesz-sample}
does not imply that they estimate the Riesz representer \smash{$\riesz[\psi](\cdot)$} well in the mean-square
sense. 
However, to whatever degree the oracle weights $\riesz_i = \riesz[\psi](Z_i)$ also approximately solve 
\eqref{eq:riesz-sample}, it will imply that $\hriesz$ and $\riesz[\psi](\cdot)$ are close in the sense that
\begin{equation}
\label{eq:bilin-uniform}
\frac{1}{n}\sum_{i=1}^n [\hriesz_i-\riesz[\psi](Z_i)]f(Z_i) \approx 0 \ \text{ for all}\ f \in \ff.
\end{equation}
This property holds if and only if the vector with elements $\hriesz_i - \riesz[\psi](Z_i)$ is small or 
approximately orthogonal to every vector with elements $f(Z_i)$ for $f \in \ff$. 
And it implies that when $\hm - m \in \ff$, our estimator \eqref{eq:aml-intro} approximates the corresponding
oracle estimator, as the difference between them is $n^{-1}\sum_{i=1}^n [\hriesz_i-\riesz[\psi](Z_i)][(\hm-m)(Z_i) - \varepsilon_i]$. 

We state below a simple version of our main result. In essence, 
if an estimator $\hm$ converges to $m$ in mean square and our regression error $\hm - m$ is in a uniformly bounded Donsker class $\ff$ 
or more generally satisfies $(\hm - m)/O_P(1) \in \ff$, then our approach can be used to define an efficient estimator.

\subsection{Definitions}

As a measure of the scale of a function $f$ relative to an absolutely convex set $\ff$, 
we define the \emph{gauge} \smash{$\norm{f}_{\ff}$} $= \inf\set{\alpha > 0 : f \in \alpha\ff}$.
We will write $\ff_r$ to denote the localized class \smash{$\set{ f \in \ff : \norm{f}_{L_2(\P)} \le r}$},
$g\ff$ to denote the class of products $\set{ gf : f \in \ff}$, and
$h(\cdot, \ff)$ to denote the image class $\set{ h(\cdot, f) : f \in \ff}$.
We will write $\overline{\mm}$ to denote the closure of a subspace $\mm$ of the square-integrable functions
and $\mm_{\perp}$ to denote its orthogonal complement, and will write $\cspan \ff$ to denote the closure of $\vspan \ff$.
We will say that a set of functions $\ff$ from $\zz \to \R$ is pointwise bounded if $\sup_{f \in \ff} \abs{f(z)} < \infty$ for all $z \in \zz$,
uniformly bounded if $\sup_{f \in \ff}\norm{f}_{\infty} < \infty$ where $\norm{f}_{\infty}=\sup_{z \in \zz}\abs{f(z)}$, 
and pointwise closed if $f \in \ff$ whenever it is the limit of a sequence $f_j \in \ff$ in the sense that $\lim_{j \to \infty} f_j(z)=f(z)$ for all $z \in \zz$.

\subsection{Setting}
\label{sec:simple-setting}
We observe $(Y_1,Z_1) \ldots (Y_n,Z_n) \overset{iid}{\sim} P$ with $Y_i \in \R$ and $Z_i$ in an arbitrary set $\zz$.
We assume that $m(z)=\E[Y_i \mid Z_i=z]$ is in a subspace $\mm$ of the square integrable functions
 and that $v(z)=\Var[P]{Y_i \mid Z_i=z}$ is bounded. 
And we let $\ff$ be an absolutely convex set of square integrable functions.

Our estimand is $\psi(m)$ for a continuous linear functional $\psi(\cdot)$ on a subspace $\mm \cup \vspan \ff$
of the square integrable functions, which takes the form $\psi(m)=\E h(Z_i, m)$.
The Riesz representation theorem guarantees the existence and uniqueness of a function $\riesz[\psi] \in \cspan \ff$ 
satisfying the set of equations $\set{\E \riesz[\psi](Z) f(Z) = \psi(f) : f \in \cspan \ff}$.\footnotemark \ 
We call this function the Riesz representer of $\psi$ on the \emph{tangent space} $\cspan \ff$.
This generalizes our prior definition \eqref{eq:riesz}, coinciding when $\cspan\ff$ is the space of square integrable functions.

Our regularity and efficiency claims are relative to the set of all one-dimensional submodels $\P_t$ 
through $\P_0=\P$ for which, letting $(Y_t,Z_t) \sim P_t$, the regression functions \smash{$m_{\P_t}(z)=\E[Y_t \mid Z_t=z]$} are in $\mm$
and satisfy  \smash{$\lim_{t \to 0}\norm{m_{\P_t} - m_{\P}}_{L_2(\P)} = 0$} and the squares of \smash{$\epsilon_t = Y_t - m_{P_t}(Z_t)$} are uniformly integrable.
For these claims, we use the additional assumptions that there is a regular conditional probability $\P[Y_i \in \cdot \mid Z_i=z]$ and 
that $\mm_{\perp}$ has a dense subset of bounded functions.

\footnotetext{In this statement we implicitly work with the unique extension of the continuous functional $\psi(\cdot)$ defined on $\vspan \ff$
to a functional defined on its closure $\cspan \ff$ \citep[e.g.,][Theorem IV.3.1]{lang1993real}.}

\begin{theo}
\label{theo:simple}
In the setting above, choose finite $\sigma > 0$ and consider the estimator 
\begin{align}
\label{eq:aml}
&\hpsi_{AML} = \frac{1}{n} \sum_{i = 1}^n [ h(Z_i, \hm) - \hriesz_i \p{\hm(Z_i) - Y_i}] \quad \text{ where }\  \\
\label{eq:aml-primal}
&\hriesz = \argmin_{\gamma \in \R^n}\set*{I_{h,\ff}^2(\gamma) + \frac{\sigma^2}{n^2}\norm{\gamma}^2},\ \ I_{h,\ff}(\gamma) = \sup_{f \in \ff}\set*{ \frac{1}{n} \sum_{i = 1}^n [\gamma_i f(Z_i) - h(Z_i,f)]}. 
\end{align}
If $\ff$ is uniformly bounded and pointwise closed; $\ff$, $\riesz[\psi]\ff$, and $h(\cdot,\ff)$ are Donsker;
and $h(Z,\cdot)$ is pointwise bounded and mean-square equicontinuous on $\ff$ in the sense that 
$\sup_{f \in \F} \abs{ h(z,f) } < \infty$ for each $z \in \zz$ and 
$\lim_{r \to 0}\sup_{f \in \F_r}\norm{ h(\cdot,f) }_{L_2(P)} = 0$;
then our weights converge to the Riesz representer of $\psi$ on the tangent space $\cspan \ff$, i.e.,
\begin{equation}
\label{eq:gamma-consistency}
\frac{1}{n}\sum_{i=1}^n (\hgamma_i - \riesz[\psi](Z_i))^2 \to_P 0.
\end{equation}
If, in addition, $\hm$ has the tightness and consistency properties 
\begin{equation}
\label{eq:consistency-properties}
\norm{\hm - m}_{\ff} = O_P(1) \ \text{ and }\ \norm{\hm - m}_{\L{2}{\Pn}} = o_P(1) 
\end{equation}
then our estimator $\hpsi_{AML}$ is asymptotically linear, i.e., 
\begin{equation}
\label{eq:asymptotic-linearity}
\begin{split}
&\hpsi_{AML} - \psi(m) = \frac{1}{n}\sum_{i=1}^n \influence(Y_i, \, Z_i) + o_{P}(n^{-1/2}) \ \text{ where }\\
&\influence(y, \, z) = h(z,m) - \riesz[\psi](z)(m(z) - y) - \psi(m),
\end{split}
\end{equation}
and therefore $\sqrt{n}(\hpsi_{AML} - \psi(m))/V^{1/2} \Rightarrow \nn\p{0, \, 1}$ with $V = \EE[P]{\influence(Y, \, Z)^2}$.

Furthermore, an estimator satisfying \eqref{eq:asymptotic-linearity} is 
regular on the model class $\mm$ if $\mm \subseteq \cspan \ff$,
and asymptotically efficient if, in addition, $v(\cdot) \riesz[\psi](\cdot) \in \overline{\mm}$.\footnote{If an estimator satisfies
\eqref{eq:asymptotic-linearity}, 
a combination of two simple conditions implies efficiency: 
$\cspan \ff = \overline \mm$ and $v(\cdot)\overline{\mm} \subseteq \overline{\mm}$. The first says that
we correct for all error functions $\hm - m$ permitted by our assumption that $m \in \mm$, 
and waste no effort on those (in $\mm_{\perp}$) ruled out by it. The second
holds when the conditional variance $v(z)$ is sufficiently simple relative to $\overline{\mm}$,
e.g., when $v(z)$ is constant or when the model class $\mm$ is fully nonparametric in the sense that it contains an approximation to
every square integrable function.} 
\end{theo}

Theorem~\ref{theo:simple} follows from a finite sample result, Theorem~\ref{theo:simple-rate},
that we will discuss in Section~\ref{sec:main-results}. 
We end this section with a few remarks on the statistical behavior of the estimator, 
focusing on the choices of $\hm,\ff,\sigma$ that define a specific estimator $\hpsi$ of this type. 
We defer the discussion of computational issues to 
\ifaos
Appendix~\ref{sec:computation} \citep{amle-supplement}.
\else
Appendix~\ref{sec:computation}.
\fi

\begin{rema}
\label{rema:riesz-assumptions}
Our approach does not require knowledge of the functional form of the Riesz representer $\riesz[\psi](\cdot)$,
sparing us the trouble of solving \eqref{eq:riesz} analytically. 
\end{rema}

\begin{rema}
\label{rema:consistency}
If $\norm{m}_{\ff} < \infty$, the tightness and consistency properties \eqref{eq:consistency-properties}
are satisfied by the penalized least squares estimator $\hm=\argmin$ $n^{-1}\sum_{i=1}^n (Y_i - m(Z_i))^2 + \lambda \norm{m}_{\ff}$
for an appropriate choice of $\lambda$ (see Appendix~\ref{appendix:penalized-least-squares}).
For example, we might choose $\ff$ to be the absolutely convex hull $\set{ \sum_j  \beta_j \phi_j : \norm{\beta}_{\ell_1} \le 1}$
of a sequence of basis functions satisfying \smash{$\sum_{j=1}^{\infty} \E \phi_j^2(Z_i) < \infty$.}
It is Donsker \citep[Section 2.13.2]{vandervaart-wellner1996:weak-convergence}
and the corresponding estimator $\hm$ is $\ell_1$-penalized regression in this basis.
This approach is easy to implement and performs well in simuation when $\lambda$ is chosen by cross-validation.
In our simulations, we use a class of this type defined in terms a basis of scaled Hermite polynomials.
\end{rema}

\begin{rema}
\label{rema:semiparametric-models}
The choices we make for $\hm$ and $\ff$ reflect 
assumptions about the regression function $m$. In addition to nonparametric assumptions like smoothness, we may
make parametric or semiparametric assumptions. A semiparametric assumption 
distinguishes Examples~\ref{exam:ape} and \ref{exam:ape-linear}, which consider the average partial effect for arbitrary functions 
$m(x,w)$ and for functions of the form $m(x,w) = \mu(x) + w\tau(x)$ respectively. 

In the latter case, which we discuss in detail in
Section~\ref{sec:ape}, the tangent space $\cspan\ff$ is smaller than the space of all square integrable functions,
and the Riesz representer $\riesz[\ff]$ for $\psi(\cdot)$ will be the orthogonal projection onto $\cspan \ff$ of the Riesz representer $\riesz[L_2]$ for $\psi(\cdot)$ 
on the tangent space of all square-integrable functions. An important consequence is that,
under our efficiency condition $v\riesz[\psi]\in \overline\mm$,
the optimal asymptotic variance in Example~\ref{exam:ape-linear} is smaller than that in Example~\ref{exam:ape}.\footnotemark\
This reflects the ease of estimating the average partial effect in the conditionally linear model relative to the general case.

Naturally, such an estimator will be considered superefficient if we entertain the possibility that $m(x,w)$ does not have
the form $\mu(x) + w\tau(x)$, i.e., if our regularity condition $\mm \subseteq \cspan \ff$ is not satisfied. In this case,
our weights fail to adjust for the deviation $\hm - m$ for some possible regression function $m \in \mm$ in a neighborhood of $\hm$, 
and any gain in efficiency possible by doing so is, in a local minimax sense, spurious.
Characterization of the behavior of our estimator under this form of misspecification is important but beyond the scope of this paper.

This phenomenon is not unique to our approach; for additional discussion of the choice of tangent space when estimating a Riesz representer, see e.g., 
Remark 2.5 of \citet{chernozhukov2016double} and Section 3 of \citet{robins2007comment}. It pervades the literature on inference in high dimensional statistics,
which typically involves an estimate of the Riesz representer on an appropriate tangent space of high-dimensional parametric functions \citep[e.g.,][]{athey2016approximate, javanmard2014confidence, zhang2014confidence}. For example, when estimating a mean with outcomes missing at random in a high-dimensional linear model $m(x,w) = w x^T \beta$, 
$\riesz[\psi]$ is the best linear-in-$x$ approximation to the inverse propensity weights $w/e(x)$.

\footnotetext{
The difference in asymptotic variance between estimators using weights converging to $\riesz[L_2]$ (Example~\ref{exam:ape}) and weights converging to $\riesz[\ff]$ (Example~\ref{exam:ape-linear})
is $\E v(Z)[ \riesz[L_2]^2(Z) - \riesz[\ff]^2(Z)] = \E v(Z)[\riesz[L_2](Z) - \riesz[\ff](Z)]^2  + 2\E v(Z)\riesz[\ff](Z)[\riesz[L_2](Z) - \riesz[\ff](Z)].$
The first term in this decomposition is positive and the second term is zero \emph{if} $v\riesz[\ff] \in \cspan{\ff}$, as in this case 
$\E \riesz[L_2](Z) [v(Z)\riesz[\ff](Z)] = \psi(v\riesz[\ff]) = \E \riesz[\ff](Z)[v(Z)\riesz[\ff](Z)]$. 
}
\end{rema}

\begin{rema}
\label{rema:overlap}
Our assumption that $\psi(\cdot)$ has a square-integrable Riesz representer $\riesz[\psi]$,
equivalent to its mean-square continuity, 
is necessary in the sense that $\psi(m)$ does not have a regular estimator when it is violated 
\citep[Theorem 2.1][see Section~\ref{sec:pathwise-derivative} here for details]{van1991differentiable}.
 If $\ff$ has a finite uniform entropy integral, it is also sufficient. Theorem~\ref{theo:simple} requires no additional conditions on $\riesz[\psi]$ because
under this condition on $\ff$, the square integrability of $\riesz[\psi]$ implies our condition that $\riesz[\psi]\ff$ is Donsker \citep[Example 2.10.23]{vandervaart-wellner1996:weak-convergence}.

In the context of Example~\ref{exam:mar}, in which $\riesz[\psi](x,w)$ is the inverse probability weight $w/e(x)$ for 
$e(x)=P[W_i=1 \mid X_i=x]$, this means that all we require of $e(x)$ is that
$\E \riesz[\psi]^2(X_i,W_i) = \E 1/e(X_i) < \infty$.
\citet{d2017overlap} highlights the need for a weak condition like this, showing that the usual `strict overlap' condition that $e(x)$ is bounded away from zero
 implies strong constraints on the conditional distribution of $X_i \mid W_i$.
\citet{chen2008semiparametric} discusses the estimation of parameters defined by nonlinear moment conditions  
using overlap assumptions comparable to what we use here.

In simulation settings in which $\riesz[\psi](Z_i)$ has a spiky distribution, our estimator sometimes outperforms a double robust oracle estimator that weights using
the true Riesz representer $\riesz[\psi]$, while a typical double robust estimator performs substantially worse
than this oracle estimator. This suggests that common responses to limited overlap, like changing the estimand \citep[e.g.,][]{crump, li2018balancing} or 
assuming a semiparametric model as in Remark~\ref{rema:semiparametric-models},
may not be needed as frequently with our approach.
\end{rema}

\begin{rema}
\label{rema:riesz-estimation}
Although we assume no regularity conditions on the Riesz representer $\riesz[\psi]$,
 our weights $\hriesz_i$ still estimate it consistently. This is a universal consistency result,
in line with well known results about $k$-nearest neighbors regression and related estimators
\citep{lugosi1995nonparametric,stone1977consistent}. Heuristically, the reason
for this phenomenon is that the Riesz representer \smash{$\riesz[{\psi}]$} is the unique\footnotemark\ weighting function that sets a
population-analogue of $I_{h,\ff}$ to 0; because $\hgamma$ comes close to doing the same, 
it must also approximate \smash{$\riesz[{\psi}]$}. This universal consistency property
is not what controls the bias of our estimator \smash{$\hpsi$}. In fact, the rate of convergence of
\smash{$\hgamma_i$} to $\riesz[{\psi}](X_i)$ is in general too slow for standard arguments for
plugin estimators to apply. However, it plays a key role in understanding why we get efficiency under heteroskedasticity even though
we choose our weights by solving an optimization problem \eqref{eq:aml-primal} that is 
not calibrated to the conditional variance structure of $Y_i$.

\footnotetext{This uniqueness is violated when the tangent space $\cspan \ff$ that $\psi$ acts on is
not the space of all square integrable functions. However, the dual characterization Lemma~\ref{lemma:duality}
shows that our weights must converge to a function in this tangent space, and it follows that they
converge to the unique Riesz representer $\riesz[\psi]$ on this tangent space.}

To understand this phenomenon, observe that under the conditions of Theorem~\ref{theo:simple}, 
the conditional bias term 
$n^{-1}\sum_{i=1}^n h(Z_i, \hm-m) - \hriesz_i (\hm(Z_i)-m(Z_i))$ in our error is $o_P(n^{-1/2})$. It is therefore 
unnecessary to make an optimal bias-variance tradeoff by this sort of calibration
to get efficiency under heteroskedasticity and heteroskedasticity-robust confidence intervals;
the asymptotic behavior of our estimator is determined by the asymptotic behavior of our noise term 
\smash{$n^{-1}\sum_{i=1}^n \hriesz_i \varepsilon_i$} and therefore by the limiting weights $\riesz[\psi](Z_i)$.

For the same reason, it is not necessary to know the error scale $\norm{\hm - m}_{\ff}$ to form asymptotically valid confidence intervals.
We stress that this is an asymptotic statement; in finite samples, there are strong impossibility results
for uniform inference that is adaptive to the scale of an unknown signal \citep{armstrong2015optimal}. 
Furthermore, tuning approaches that estimate and incorporate individual variances $\sigma_i$ into the minimax weighting problem \eqref{eq:aml-primal}
like those discussed in \citet{armstrong2017finite} may offer some finite-sample improvement. 
\end{rema}

\subsection{Comparison with Double-Robust Estimation}
\label{sec:double-robustness}

Perhaps the most popular existing paradigm for building asymptotically efficient estimators 
in our setting is via constructions that first compute stand-alone estimates
\smash{$\hm(\cdot)$} and \smash{$\hriesz[\psi](\cdot)$} for the regression function and the Riesz representer,
and then plug them into the following functional form
\citep{chernozhukov2016locally,newey1994asymptotic,robins1},
\begin{equation}
\label{eq:DR}
\hpsi_{DR} = \frac{1}{n}\sum_{i=1}^n [h(Z_i, \hm) -  \hriesz[\psi](Z_i) \p{\hm(Z_i)-Y_i} ],
\end{equation}
or an asymptotically equivalent expression \citep[e.g.,][]{van2006targeted}.
This estimator has a long history in the context of many specific estimands,
e.g., the aforementioned AIPW estimator for the estimation of a mean with outcomes missing at random \citep*{cassel1976some,robins1994estimation}.
In recent work, \citet*{chernozhukov2018double} describe a general approach of this type,
making use of a novel estimator for the Riesz representer of a functional $\riesz[\psi]$ 
in high dimensions motivated by the Dantzig selector of \citet{candes2007dantzig}.

In considerable generality, this estimator $\hpsi_{DR}$ is efficient when we use sample splitting\footnotemark\ to construct $\hm$ 
and these estimators satisfy \citep{chernozhukov2016double, zheng2011cross}
\begin{equation}
\label{eq:bilinear-error-negligible}
 \frac{1}{n}\sum_{i=1}^n [\hriesz[\psi](Z_i)-\riesz_{\psi}(Z_i)] [ \hm(Z_i)-m(Z_i)] = o_P(n^{-1/2}).
\end{equation}
Taking the Cauchy-Schwarz bound on this bilinear form results in a well-known sufficient condition on
the product of errors, \smash{$\norm{\hriesz[\psi]-\riesz[\psi]}_{\L{2}{\Pn}}\norm{\hm-m}_{\L{2}{\Pn}}$} \smash{$= o_P(n^{-1/2})$}.
This phenomenon, that we can trade off accuracy in how well the two nuisance functions $m$ and $\riesz[\psi]$ are
estimated, is called \emph{double-robustness}. 

\footnotetext{In particular, this result holds if we use the cross-fitting construction of
\citet{schick1986asymptotically}, where separate data folds are used to estimate the nuisance
components \smash{$\hm$} and \smash{$\hriesz[\psi]$} and to compute the expression \eqref{eq:DR} given those estimates.
The three-way sample splitting scheme of \citet{newey2018cross}, discussed below, refines this by using different folds to estimate the two nuisance functions,
and the remaining ones to compute the expression \eqref{eq:DR}.
}

While the estimator $\hpsi_{AML}$ defined in \eqref{eq:aml} shares the form of $\hpsi_{DR}$, it is not designed to be double robust.
The weights $\hgamma$ used in $\hpsi_{AML}$ are optimized for the task of correcting the error of the plugin estimator 
$\psi(\hm)$ when our assumptions on the regression error function $\hm - m$ are correct. When this is the case and the class $\ff$ 
characterizing our uncertainty about this function is sufficiently small (e.g., Donsker), this allows us to be completely robust
to the difficulty of estimating the Riesz representer $\riesz[\psi]$. Our estimator will be efficient essentially because 
the error \smash{$\hriesz-\gamma_{\psi}$} will be sufficiently orthogonal to all functions $f \in \ff$ 
that \eqref{eq:bilinear-error-negligible} will be satisfied uniformly over the class
of possible regression error functions $\hm - m \in \ff$.
As the existence of an estimator \smash{$\hm$}
whose error \smash{$\hm - m$} is tight in the gauge of some Donsker class $\ff$
is equivalent to the existence of an \smash{$o_P(n^{-1/4})$}-consistent estimator of $m$,
relative to the aforementioned sufficient condition on the product of error rates, 
this characterization completely eliminates regularity requirements on the Riesz representer $\riesz[{\psi}]$ 
while requiring the same level of regularity on the regression function \smash{$m$}. 

This type of phenomenon is not unique to our approach. The higher order influence function estimator of \citet{mukherjee2017semiparametric} 
is efficient under the minimal H\"older-type smoothness conditions on $\riesz[\psi]$ and $m$. This includes the case
where either $m$ or $\riesz[\psi]$ admits an \smash{$o_P(n^{-1/4})$}-consistent estimator with no conditions on the other,
as well as possibilities interpolating these in which neither does \citep{robins2009semiparametric}.
Furthermore, \citet{newey2018cross} show that, if $\hm$ and $\hriesz[\psi]$ are 
appropriately tuned series estimators fit using a three-way cross-fitting scheme,  \smash{$\hpsi_{DR}$} is efficient under 
minimal or nearly minimal H\"older-type smoothness conditions. They also show that 
for this $\hm$, a cross-fit plug-in estimator \smash{$n^{-1}\sum_{i=1}^n h(Z_i, \hm)$} will
be efficient if $m$ is H\"older-smooth enough to admit an $o_P(n^{-1/4})$-consistent estimator, and 
beyond this regime exhibits some double robustness --- 
it is also efficient when $m$ is less smooth and $\riesz[\psi]$ is smooth enough.

The use of undersmoothed, i.e., less biased than variable, nuisance estimators 
seems to be an important ingredient in estimators that beat the error rate product bound \citep[see also][]{kennedy2020optimal,van2019efficient}. 
Both here and in \citet{newey2018cross}, $\riesz[\psi]$ is estimated by solving a set of Riesz representer estimating equations \eqref{eq:riesz-sample}
subject to weak regularization or constraints. 
Furthermore, when $\F$ is a ball in a reproducing kernel Hilbert space,
the minimax linear estimator ($\hpsi_{AML}$ with $\hm \equiv 0$) is
equivalently described as a plug-in using a undersmoothed ridge regression estimator $\hm$ \citep[Theorem 22]{kallus2016generalized}.
\citet{hirshberg2019minimax} show that this estimator is efficient essentially whenever $\norm{m}_{\F} < \infty$.


\subsection{Comparison with Minimax Linear and Balancing Estimators}
\label{sec:balancing}

As discussed above, our approach is primarily motivated as a refinement of conditional-on-design minimax linear estimators as
developed and studied by a large community over the past decades
\citep[e.g.,][]{donoho1994statistical, ibragimov1985nonparametric, juditsky2009nonparametric};
however, our focus is on its behavior in a random-design setting,
as in the literature on semiparametrically efficient inference and local asymptotic minimaxity, including results on doubly robust methods
\citep[e.g.,][]{bickel,robins1,van2006targeted}.
The conceptual distinction between these two settings is strong in causal inference and missing data
problems, where in the former we consider an adversary that chooses $m(\cdot)$ having observed the realized covariates and pattern of missing data,
and in the latter we consider an adversary that chooses $m(\cdot)$ having observed no part of the realized data. 

We are aware of three estimators that can be understood as special cases of our augmented
minimax linear estimator \eqref{eq:aml}.
In the case of parameter estimation in high-dimensional linear models, \citet{javanmard2014confidence}
propose a type of debiased lasso that combines a
lasso regression adjustment with weights that debias the $\ell_1$-ball, a convex class known to
capture the error of the lasso; \citet*{athey2016approximate} develop a related
idea for average treatment effect estimation with high-dimensional linear confounding;
and \citet{kallus2016generalized, kallus2018balanced} proposes analogs
for treatment effect estimation and policy evaluation, a special case of Example~\ref{exam:distribution-shift},
that adjust for nonparametric confounding using weights that debias the unit ball of a reproducing kernel Hilbert space.   
The contribution of our paper
relative to this line of work lies in the generality of our results, and also in characterizing the
asymptotic variance of the estimator under heteroskedasticity and proving efficiency
in the fixed-dimensional nonparametric setting. Given heteroskedasticity, the aforementioned papers
prove $\sqrt{n}$-consistency but do not characterize the asymptotic variance directly in terms of the distribution of the data; 
instead, they express the variance in terms of the solution to an optimization problem analogous to \eqref{eq:aml-primal}.

In the special case of mean estimation with outcomes missing at random, the optimization
problem \eqref{eq:aml-primal} takes on a particularly intuitive form, with
\begin{equation}
I_{h,\ff}(\gamma) = \sup_{f \in \ff} \cb{\frac{1}{n} \sum_{i = 1}^n \p{1 - W_i \gamma_i} f(X_i, \, 1)}
\end{equation}
measuring how well the \smash{$\gamma$}-weighted average of $f(x,1)$ over the units with observed outcomes
matches its average over everyone. In other words, the minimax linear weights enforce
``balance'' between these subsamples, which has been emphasized as fundamental to this problem by several authors
including \citet{rosenbaum1983central} and \citet*{hirano2003efficient}. Recently there has been considerable
interest in the use of balancing weights, chosen to control $I_{h,\ff}$ or a variant,
in linear estimators and in augmented linear estimators \eqref{eq:aml} like those we consider here \citep{athey2016approximate,chan2015globally,graham1,graham2,hainmueller,imai2014covariate,kallus2016generalized,ning2017high,wang2017approximate,wong2017kernel,zhao2016covariate,zubizarreta2015stable}. 
In addition to generalizing beyond the missing-at-random problem, our Theorem \ref{theo:simple-rate} provides the sharpest results we are aware of for
balancing-type estimators in this specific problem. 

To do this, we bring together arguments from two strands of the balancing literature.
The first focuses on balancing small finite-dimensional classes, 
and in several instances it has been shown that when tuned so that $I_{h,\ff}(\hgamma)$ is sufficiently small, 
the linear estimator is efficient under strong assumptions on both $m$ and $\riesz[\psi]$ \citep{chan2015globally,fan2016improving,graham1,wang2017approximate}.
The arguments used to establish these results rely on the convergence of $\hriesz$ to $\riesz[\psi]$ at sufficient rate, much like those
used with the estimators discussed in the previous section. 
The second focuses on balancing high or infinite-dimensional classes, 
and in several instances it has been shown that when tuned so that \smash{$I_{h,\ff}(\hgamma) = O_P(n^{-1/2})$},
a level of balance that is attainable under assumptions comparable to ours, the linear estimator is $\sqrt{n}$-consistent and 
the augmented linear estimator is $\sqrt{n}$-consistent and asymptotically unbiased \citep{athey2016approximate,kallus2016generalized,wong2017kernel}. 
The arguments used to establish these results fundamentally rely on balance to bound the estimator's bias, 
and do not fully characterize the estimator's asymptotic distribution. Our argument is a refinement of this one,
using balance to do the bulk of the work, but relying on the convergence of the balancing weights $\hgamma$ to $\riesz[\psi]$ 
to characterize the asymptotic distribution of our estimator and to establish asymptotic unbiasedness under weaker conditions.

\section{Estimating Linear Functionals}
\label{sec:estimating-linear-functionals}

In this section, we give a more general characterization of the behavior of our estimator. We begin by sketching our argument, 
which is based on a decomposition of our estimator's error into a bias-like term and a noise-like term. 
We consider error relative to a sample-average version of our estimand, $\tilde{\psi}(m) = n^{-1}\sum_{i=1}^n h(Z_i,m)$,
as the difference $\psi(m)-\tpsi(m)$ is out of our hands:
\begin{equation}
 \label{eq:error-decomp}
\begin{split}
&\hpsi_{AML} - \tilde{\psi}(m)  
= \frac{1}{n}\sum_{i=1}^n h(Z_i, \hm) - \hgamma_i \p{\hm(Z_i) - Y_i} - h(Z_i, m)  \\
&\quad = \frac{1}{n}\sum_{i=1}^n \underbrace{h(Z_i,\hm-m) - \hgamma_i (\hm-m)(Z_i)}_{\text{bias}} 
  + \underbrace{\hgamma_i \p{Y_i - m(Z_i)}}_{\text{noise}}.
\end{split}
\end{equation}
In Appendix~\ref{sec:finite-sample-proofs}, we prove finite sample bounds on the bias term and 
the difference between the noise term and that of the oracle estimator with weights $\riesz[\psi](Z_i)$.
Our estimator will be asymptotically linear, with the influence function of the oracle estimator,
if both of these quantities are $o_p(n^{-1/2})$. We establish these bounds in three steps.

\paragraph*{Step 1}
We bound \smash{$n^{-1}\sum_{i=1}^n (\hgamma_i - \gamma^{\star}_i)^2$} for $\gamma^{\star}_i = \riesz[\psi](Z_i)$.
To do this,  we work with a dual characterization of our weights $\hgamma_i$ as evaluations $\hriesz[\psi](Z_i)$ of a penalized least squares estimate
of the Riesz representer $\riesz[\psi]$. 
\begin{equation}
\label{eq:dual}
\begin{split}
\hriesz[\psi] &= \argmin_{g} \set*{ \norm{g}_{L_2(\Pn)}^2 - \frac{2}{n}\sum_{i=1}^n h(Z_i, g) + \frac{\sigma^2}{n}\norm{g}_{\ff}^2 } \\
	      &= \argmin_{g} \set*{  \norm{g - \riesz[\psi]}_{L_2(\Pn)}^2 - \frac{2}{n}\sum_{i=1}^n h_{\riesz[\psi]}(Z_i, g) + \frac{\sigma^2}{n}\norm{g}_{\ff}^2}
\end{split}
\end{equation}
where $h_{\riesz}(z,f) = h(z,f) - \riesz(z)f(z)$. Here the term involving $h_{\riesz[\psi]}$ plays the role of `noise' in our least squares problem, 
as it has mean zero for any function $f \in \mm$. The first characterization is established using strong duality in Lemma~\ref{lemma:duality}
and the second is derived by completing the square.

\paragraph*{Step 2} We bound the difference between our noise term and that of the oracle estimator,
$n^{-1}\sum_{i=1}^n (\hgamma_i - \gamma^{\star}_i)(Y_i - m(Z_i))$, using the result of Step 1.

\paragraph*{Step 3} We bound our bias term by $\norm{\hm - m}_{\ff}I_{h, \ff}(\hgamma)$,
where as a consequence of the definition of our weights $\hgamma$ in \eqref{eq:aml-primal}, 
\begin{equation}
\label{eq:gamma-optimality-cond}
I_{h, \ff}^2(\hgamma) \le I_{h,\ff}^2(\gamma^{\star}) + \frac{\sigma^2}{n^2} \sum_{i=1}^n \p{{\gamma^{\star}_i}^2 - \hgamma_i^2}. 
\end{equation}
The first term on the right side can be characterized using empirical process techniques, as $I_{h,\ff}(\gamma^{\star})$ 
is the supremum of the empirical measure indexed by the class of mean-zero functions 
$h_{\riesz[\psi]}(\cdot,\ff)$. And the second term can be shown, using 
some simple arithmetic, to be $o_p(n^{-1})$ when $\hriesz$ is consistent. 
Thus, our bias term will be bounded by $\norm{\hm - m}_{\ff}[I_{\ff}(\gamma^{\star}) + o_p(n^{-1/2})]$.

\paragraph*{Step 3'}
We refine this bound to take advantage of the consistency of $\hm$. 
To do this, we show that our estimator behaves essentially the same way as an oracle
that knows a sharp bound $\norm{\hm - m}_{L_2(\Pn)} \le \rho$ on our regression error
and uses a refined model class \smash{$\F_{\rho}' = \{ f : \norm{f}_{\F}^2 +$} \smash{$ \rho^{-2}\norm{f}_{L_2(\Pn)}^2$ $\le 1\}$}
in place of $\F$. The key insight is that this substitution changes the dual \eqref{eq:dual} and its solution $\hgamma$ very little,
so replacing $\F$ with $\F_{\rho}'$ in our bound \eqref{eq:gamma-optimality-cond} yields an inequality that is approximately satisfied.
Given the assumptions of Theorem~\ref{theo:simple}, 
the resulting refined bias term bound will be \smash{$o_p(n^{-1/2})$}, as 
\smash{$\norm{\hm - m}_{\F_{\rho}'} = O_p(1)$} for $\rho \to 0$ given our
tightness and consistency assumptions \eqref{eq:consistency-properties} and
\smash{$I_{h, \F_{\rho}'}(\gammaipw) = o_p(n^{-1/2})$} when $\rho \to 0$
given our Donskerity and equicontinuity assumptions.\\

Following a few definitions, we will state our main result.
Due to space constraints, all proofs are given in the appendix.

\subsection{Finite sample results}
\label{sec:main-results}
To characterize the size of a set $\GG$, we will use its \emph{Rademacher complexity}, $R_n(\GG) = \E \sup_{g \in \GG}\abs{n^{-1}\sum_{i=1}^n \epsilon_i g(Z_i)}$ where $\epsilon_i = \pm 1$ each with probability $1/2$
independently and independently of the sequence $Z_1 \ldots Z_n$, as well as the uniform bound \smash{$M_{\infty}(\GG)$} \smash{$= \sup_{g \in \GG} \norm{g}_{\infty}$}.
Letting $h_{\gamma}(z,f)=h(z,f)-\gamma(z)f(z)$, our bound depends
on the Rademacher complexity of the classes $\ff_r$,
$h_{\riesz[\psi]}(\cdot,\ff_r)$, and $h_{\tilde \riesz}(\cdot,\ff_r)$ 
for a regularized approximation $\tilde\riesz$ to $\riesz[\psi]$. The regularity of that approximation, 
and therefore the regularity of $\riesz[\psi]$ itself, will be a factor in a higher order term.
Without loss of generality, we will write our weights as function evaluations $\hriesz_i=\hriesz(Z_i)$,
and we will write $a \vee b$ and $a \wedge b$ respectively for the maximum and minimum of $a$ and $b$
and $a \lesssim b$ and $a \ll b$ meaning $a = O(b)$ and $a = o(b)$.

\begin{theo}
\label{theo:simple-rate}
In the setting described in Section~\ref{sec:simple-setting}, 
consider the estimator $\hpsi_{AML}$ defined in \eqref{eq:aml} with $\sigma > 0$ and $\ff$ 
a uniformly bounded absolutely convex set of functions for which $h(\cdot,\ff)$ is pointwise bounded.
Let $\riesz[\psi]$ be the Riesz representer of $\psi$ on the tangent space $\cspan \ff$ and 
$\tilde \riesz$ minimize \smash{$\norm{\riesz[\psi] - \riesz}_{L_2(Q)}^2 + (\sigma^2/n)\norm{\riesz}_{\F}^2\ $} for $Q=\P$ or $Q=\Pn$.
If $\F$ is \smash{$\norm{\cdot}_{L_2(Q)}$}-closed,
this argmin exists and is unique, and for any positive $\delta$, on the intersection of an event of probability \smash{$1-4\delta-3\exp(-c_2nr_Q^2/M_{\infty}^2(\F))$}
and one on which $\norm{\hm - m}_{\F} \le s_{\F}$ and $\norm{\hm - m}_{L_2(\Pn)} \le s_{L_2(\Pn)}$,
\begin{equation}
\label{eq:sample-riesz-rate} 
\begin{aligned}
&\norm{\hriesz - \tilde\riesz}_{L_2(\Pn)}^2 
\le 6\p{nr^4/\sigma^2 + \norm{\tilde\riesz}_{\ff}r^2} \vee 8 r^2 \quad \text{ for } \quad  r = r_Q \vee r_M, \\
r_Q &= \inf\set{r > 0 : R_n(\F_{c_0r}) \le c_1 r^2/M_{\infty}(\F)},\\
r_M &= \begin{cases} \inf\set{ r > 0 : R_n(h_{\tilde\riesz}(\cdot,\F_r))  \ \le \ \delta r^2/2} & \text{ for } \quad Q=\P,  \\
  		     \inf\set{ r > 0 : R_n(h_{\riesz[\psi]}(\cdot,\F_r)) \le \delta r^2/2} & \text{ for } \quad Q=\Pn, \end{cases}  \\
\end{aligned}
\end{equation}
and for $\influence_{\riesz}(y,z) = h(z,m) - \riesz(z)(m(z) - y) - \psi(m)$
and any positive $\epsilon \le 9/16$,
\begin{equation}
\label{eq:remainder-rate} 
\begin{aligned}
&\sqrt{n}\abs{\hpsi_{AML} - \psi(m) - n^{-1}\sum_{i=1}^n \influence_{\tilde\riesz}(Y_i,Z_i)} 
\le (1/\sqrt{\delta}) \norm{v}_{\infty} \norm{\hriesz - \tilde\riesz}_{L_2(\Pn)} \\ 
&\quad +  \sqrt{2n}s_{\F}\ \phi\p{\frac{s_{L_2(\Pn)}}{s_{\F}} \vee c_0 r \vee \frac{6 \sigma}{\epsilon \sqrt{n}}} \p{1+2\epsilon\ /\ \sqrt{1-\epsilon^2/36}}\\
&\quad +  \sqrt{2}\sigma s_{\F} \p{\norm{\riesz[\psi]}_{L_2(\Pn)} \wedge \norm{\riesz[\psi]}_{L_2(\Pn)}^{1/2} \norm{\hgamma-\riesz[\psi]}_{L_2(\Pn)}^{1/2}}\ /\ \sqrt{1-\epsilon^2/36}. 
\end{aligned}
\end{equation}
Here $c_0 \ldots c_2$ are universal constants and 
\begin{align*} 
\phi(\rho) &= \frac{2R_n(h_{\riesz[\psi]}(\cdot, \F_{\sqrt{2}\rho}))}{\delta} 
	   \vee \frac{216}{\epsilon^2}\p{r^2 + \frac{\sigma^2 \norm{\tilde\riesz}_{\ff}}{n}}
	   \vee \frac{ 36\sigma^2 \norm{\riesz[\psi]}_{L_2(\P)}}{\epsilon^2\sqrt{\delta} c_0 n r} \vee \frac{288\sigma^2}{\epsilon^2 n}.
\end{align*}
\end{theo}
Generalization to classes $\F$ that are not uniformly bounded is discussed in Appendix~\ref{sec:not-uniformly-bounded}.
We will briefly interpret this result by considering several asymptotic settings. 
Throughout, we will use the bounds above for $Q=\Pn$ and
the bound $\norm{\tilde\riesz}_{\F} \le (\sqrt{n}/\sigma)\norm{\riesz[\psi]}_{L_2(\Pn)}$.\footnote{
This bound holds because \smash{$\norm{\riesz - \riesz[\psi]}_{L_2(\Pn)}^2 + (\sigma^2/n)\norm{\riesz}_\F^2$}
is smaller at its minimizer than at $\riesz=0$.} 

\subsection{Nonparametric asymptotics}
\label{sec:nonparametric-asymptotics}
In the asymptotic setting we considered in the introduction,
in which the distribution $\P$, the class $\F$, and the tuning parameter $\sigma$ are fixed,
this result implies Theorem~\ref{theo:simple}.
The key steps of the proof are as follows. 
\begin{enumerate}
\item 
As $\riesz[\psi]$ is fixed, the regularized approximation $\tilde\riesz$ converges to $\riesz[\psi]$
in \smash{$\norm{\cdot}_{L_2(\Pn)}$} as the weight of regularization $\sigma^2/n \to 0$, so our `influence function' $\influence_{\tilde\riesz}$ converges to 
the limit $\influence_{\riesz[\psi]}$. 

\item
Given our tightness and consistency assumptions
\eqref{eq:consistency-properties}, we can take $s_{\F} \ge \norm{\hm-m}_{\F}$ to be of constant order and \smash{$s_{L_2(\Pn)} \ge \norm{\hm - m}_{L_2(\Pn)}$} to be converging to zero on a high probability event.
Thus, our remainder bound \eqref{eq:remainder-rate} goes to zero if 
$\sqrt{n}\phi(s_n) \to 0$ for any sequence $s_n$ converging to zero and \smash{$r \ll n^{-1/4}$} and therefore $\norm{\hriesz - \tilde\riesz}_{L_2(\Pn)} \to 0$ (via \ref{eq:sample-riesz-rate}).

\item
Both of these conditions hold if
$\lim_{t \to 0}\sqrt{n}R_n(\F_{t}) = \lim_{t \to 0}\sqrt{n}R_n(h_{\riesz[\psi]}(\cdot,\F_{t})) = 0$.
The first limit is zero because $\F$ is Donsker. 
And the second is zero for the same reason, as
$h_{\riesz[\psi]}(\cdot,\F_t) \subseteq \HH_{\omega(t)}$ where $\HH = h_{\riesz[\psi]}(\cdot,\F)$
is Donsker and $\omega(t)=\sup_{f \in \F_t}\norm{h_{\riesz[\psi]}(\cdot, f)}_{L_2(\P)}$ 
satisfies $\lim_{t \to 0}\omega(t) = 0$ under our equicontinuity and uniform boundedness assumptions. 
\end{enumerate}

\subsection{High dimensional asymptotics}
\label{sec:high-dim-asymptotics}
Now we consider estimation of the mean with outcomes missing at random (Example~\ref{exam:mar}) in the high dimensional linear model,
i.e., with $m(x,w)=wx^T\beta$ for $\beta \in \R^p$. In this setting, $\hpsi_{AML}$ for the class $\F = \set{m(x,w)=wx^T\beta : \norm{\beta}_{1} \le 1}$
is the ``approximate residual balancing'' estimator proposed in \citet{athey2016approximate}. We can derive from 
Theorem~\ref{theo:simple-rate} the main result from that paper: that this estimator is $\sqrt{n}$-consistent 
and an associated $t$-statistic is asymptotically standard normal.
Furthermore, Theorem~\ref{theo:simple-rate} also characterizes the limit of the weights $\hriesz$, and therefore the asymptotic variance of the estimator,
as a simple function of the distribution $\P$.

Specifically, suppose the coordinates of the covariates are bounded, 
$\norm{\riesz[\psi]}_{L_2(\P)}$ is bounded  (an overlap assumption), 
and \smash{$\norm{\hat{\beta} - \beta}_1 = O_P(s_\F)$} for \smash{$s_\F \ll 1/\sqrt{\log(p)}$.}
As discussed in \citet{athey2016approximate}, when \smash{$\hat \beta$} is estimated via the lasso, 
the third holds under standard sparsity and restricted eigenvalue conditions.
Then for any choice of tuning parameter $\sigma$ satisfying \smash{$\sqrt{\log(p)} \ll \sigma \ll 1/s_\F$},
 \smash{$\hat{\psi}_{AML} - \psi(m)$} is first-order equivalent to $n^{-1}\sum_{i=1}^n \influence_{\tilde\riesz}(Y_i,Z_i)$,
as our remainder bound \eqref{eq:remainder-rate} is vanishingly small.

To check this, note that by the finite class lemma of \citet[Lemma 5.2]{massart2000some},
\[ R_n(\F) \vee R_n(h_{\riesz[\psi]}(\cdot,\F)) \lesssim (1 \vee \norm{\riesz[\psi]}_{L_2(\P)})\sqrt{\log(p)/n}. \]
Thus, \eqref{eq:sample-riesz-rate} implies the convergence of $\hriesz$ to $\tilde\riesz$,
as $r^2 \lesssim R_n(\F) \vee R_n(h_{\riesz[\psi]}(\cdot,\F))$. 
It follows that the first term in our remainder bound \eqref{eq:remainder-rate} vanishes.
The second term vanishes as well,
as it is proportional to \smash{$\sqrt{n} s_\F \cdot \phi(x) \ll \sqrt{n/\log(p)} \cdot \phi(x)$} for some $x$ 
and \smash{$\phi(x) \lesssim R_n(h_{\riesz[\psi]}(\cdot,\F)) \lesssim \sqrt{\log(p)/n}$.}
So does the third term, as $\sigma s_\F \ll (1/s_\F) s_\F$.

\subsection{Sieve asymptotics}
\label{sec:sieve-asymptotics}
In the sieve asymptotics often considered \citep[e.g.,][]{newey2018cross, wang2017approximate},
we do not characterize the regression function $m$ by membership in a set $\F$ directly,
but instead by the existence of an element $\tilde m \in \F$ that approximates 
it with a certain degree of accuracy.
Our argument requires modification for this asymptotic setting,
as our bound $\norm{\hm - m}_{\ff}I_{h, \ff}(\hgamma)$ on the `bias term' 
in our error decomposition \eqref{eq:error-decomp} 
will tend to be vacuous: when $\hm - m \not \in \vspan \F$, $\norm{\hm - m}_{\F}=\infty$. 
We can modify our error decomposition as follows.
\begin{equation*}
\begin{split}
&\hpsi_{AML} - \tilde{\psi}(m)  
= \frac{1}{n}\sum_{i=1}^n h(Z_i,\hm-\tilde m) - \hgamma_i (\hm-\tilde m)(Z_i) 
+ \frac{1}{n}\sum_{i=1}^n \hgamma_i \p{Y_i - m(Z_i)} \\
&\quad 
+ \frac{1}{n}\sum_{i=1}^n h_{\riesz[\psi]}(Z_i,\tilde m-m)
+ \frac{1}{n}\sum_{i=1}^n (\riesz[\psi] - \tilde\riesz)(\tilde m-m)(Z_i)
+ \frac{1}{n}\sum_{i=1}^n (\tilde\riesz - \hriesz)(\tilde m-m)(Z_i).
\end{split}
\end{equation*}
The sum of the first two terms tends to converge to the influence function average $n^{-1}\sum_{i=1}^n \influence_{\tilde\riesz}(Y_i,Z_i)$.
The proof of 
Theorem~\ref{theo:simple-rate} implies the remainder satisfies the bound \eqref{eq:remainder-rate} for $s_\F \ge \norm{\hm-\tilde m}_\F$ and $s_{L_2(\Pn)} \ge \norm{\hm - \tilde m}_{L_2(\Pn)}$. We briefly discuss the remaining terms.

The third term is the sample average of 
a deterministic function with mean zero. 
It is negligible if our approximation is consistent in 
the sense that $\E[h_{\riesz[\psi]}^2(Z_i, \tilde m - m)] \to 0$.

The fourth term is the  the empirical inner product of two approximation errors.
It is comparable to the corresponding population inner product 
$\E[(\riesz[\psi]-\tilde\riesz)(\tilde m-m)(Z_i)]$, 
which can be analyzed deterministically using properties of the approximations.

The fifth term is the empirical inner product between the approximation error $\tilde m - m$
and $\tilde\riesz - \hriesz$, which satisfies  
$\norm{\tilde\riesz - \hriesz}_{\F} = O_P( nr^2/\sigma^2 + \norm{\tilde\riesz}_{\F})$ for $r$ as in \eqref{eq:sample-riesz-rate}
(see Appendix~\ref{sec:consistency}). We can sometimes get a useful bound on this inner product
based on the approximate orthogonality of $\tilde m - m$ to functions in $\F$.
This is natural when $\F$ is a subspace and $\tilde m$ is the $L_2(\P)$ 
orthogonal projection of $m$ onto it, as in that case $\tilde m - m$ is orthogonal to any element of $\vspan \F$.

\citet{newey2018cross}, working with subspaces $\F$ of finite sample-size-dependent dimension,
used techniques along these lines to characterize a cross-fit variant of the estimator we discuss, 
showing efficiency under near-minimal assumptions. The extension of their argument is a
promising area for future work \citep[see also][]{kennedy2020optimal}.  
\ \\

We conclude the section with a few practical considerations.

\subsection{The role of the tuning parameter $\sigma$}
\label{sec:tuning-sigma}
We generally recommend that the tuning parameter $\sigma$ be chosen without consideration of sample size. 
The simple heuristic $\sigma^2 \approx \max_{i \le n}\Var{Y_i \mid Z_i}$ arises 
from the minimax interpretation of our estimator, in which $\sigma^2$ is a bound on the conditional variance.\footnotemark\ 
However, \smash{$\hpsi_{AML}$} is fairly robust to our choice of $\sigma$, and Theorem~\ref{theo:simple-rate} 
justifies a wide range of choices.

\footnotetext{In our minimax framework in Section~\ref{sec:amle}, we also assume that $\norm{\hm - m}_{\ff} \le 1$.
If we instead believe that $\norm{\hm - m}_{\ff} \approx \alpha$, our heuristic suggests $\sigma^2 \approx \alpha^{-2} \max_{i \le n}\Var{Y_i \mid Z_i}$.}

To consider the impact of $\sigma$, we look at the role it plays in the dual characterization \eqref{eq:dual} of our weights.
As discussed above, this is a penalized least squares problem for estimating $\riesz[\psi]$.
From this perspective, taking $\sigma$ to be of constant order is regularizing very weakly, 
and we can improve the rate of convergence of $\hriesz$ to our regularized approximation $\tilde\riesz$ by increasing $\sigma$. On the other hand, 
consideration of the primal \eqref{eq:aml-primal} shows that this comes at a cost in terms of the 
maximal conditional bias $I_{h,\ff}(\hriesz)$, and if we have confidence that $\hm-m$ is in a small class \smash{$\ff$},
we can decrease $\sigma$ so that \smash{$I_{h,\ff}(\hriesz)$} and therefore our bias is zero or nearly zero.  
Recalling our discussion in Section~\ref{sec:double-robustness}, our choice of $\sigma$ essentially trades off between two properties of 
the error $\hriesz[\psi]-\riesz[\psi]$: its degree of orthogonality to the specific functions in $\ff$, and its degree of `orthogonality' to all 
square integrable functions, i.e., its magnitude $\norm{\hriesz - \riesz[\psi]}_{L_2(\P)}$.

When we choose $\sigma$ proportional to $\sqrt{n}r$, $\hpsi_{AML}$ is essentially a standard doubly robust estimator.
Our estimate of $\riesz[\psi]$ is not undersmoothed as discussed in Section~\ref{sec:double-robustness};
with this tuning, if $\norm{\riesz[\psi]}_{\ff} < \infty$, our weights converge to $\riesz[\psi]$ in empirical mean square at the rate $r$,
typically the minimax rate for estimating $\riesz[\psi]$ satisfying $\norm{\riesz[\psi]}_{\ff} < \infty$ 
(see Appendix~\ref{sec:estimating-riesz-optimal}).
The asymptotic linearity of \smash{$\hpsi_{AML}$} may then follow from the rate-product condition \smash{$\norm{\hriesz[\psi] - \riesz[\psi]}_{L_2(\Pn)}$} \smash{$\norm{\hm - m}_{L_2(\Pn)} = o_P(n^{-1/2})$},
 which is a sufficient condition when we use sample splitting to fit $\hm$.\footnotemark\ 
However, to improve our rate of convergence, we sacrifice orthogonality of $\hriesz[\psi] - \riesz[\psi]$ to possible realizations of $\hm - m$ in $\ff$. 
This makes our estimator sensitive to the rate of convergence of $\hm-m$.
We see this in our bound \eqref{eq:remainder-rate}; the term proportional to $\sigma$ will be large. 

\footnotetext{It is common to use sample splitting to fit $\hriesz[\psi]$ as well. Our bound \eqref{eq:sample-riesz-rate}
does not justify this, as it concerns empirical mean squared error on the sample used to estimate $\hriesz[\psi]$.
However, in the course of our proof in Appendix~\ref{sec:finite-sample-proofs}, we show that with this tuning,
$\hriesz[\psi]$ converges to $\riesz[\psi]$ in population mean square at the rate $r$, which is sufficient.} 
\subsection{Flexible regression adjustments and cross-fitting}
\label{sec:sample-splitting}
In some applications, we may want to base our regression adjustment on flexible, adaptive methods like boosting, random forests, or neural networks.
In this case, it may be hard to argue that $\norm{\hm - m}_{\ff} = O_P(1)$ because $\hm$ itself is irregular.
And the violation of this assumption may result in bias. For example, when we take $\ff$ to be a class of smooth functions, 
the weights $\hgamma$ that we use in $\hpsi_{AML}$ will control its bias only when $\hm-m$ is smooth. In this sense,
a nonsmooth estimator $\hm$ is incompatible with this smooth class $\ff$. This problem is easy to fix,
as we can ensure compatibility for any estimator $\hm$ simply by including it in $\ff$. A natural approach 
is to choose a class $\GG$ intended to capture $m$, and let $\ff$ be the absolutely convex hull of $\hm - \GG$. 
For this class, $\norm{\hm - m}_{\ff} \le \norm{m}_{\GG}$.

This set $\ff$ is random, presumably depending on $Y_1 \ldots Y_n$ through $\hm$,
and a problem arises because of the dependence this induces between $\hgamma_i$ and $Y_i$:
the `noise term' in \eqref{eq:error-decomp} can have nonzero mean. We can sidestep this problem by cross-fitting \citep{schick1986asymptotically}, i.e.,
fitting $\hm$ using a subsample of our observations, and defining $\hpsi_{AML}$ in terms of it on the remaining observations.
We will call the former sample the \emph{auxiliary sample} and the latter the \emph{estimation sample}. 
Asymptotic linearity can be established by Theorem~\ref{theo:simple-rate}, applied conditionally on the auxiliary sample.
We get efficiency, under the conditions stated in Theorem~\ref{theo:simple}, by averaging over multiple splits of the sample.

We can generalize this construction by training multiple candidate estimators $\hm_1 \ldots \hm_K$ on the auxiliary sample
and taking $\ff$ to be the absolutely convex hull of $\set{\hm_1 \ldots \hm_K} - \GG$.
We then define $\hpsi_{AML}$ using an estimator $\hm$ chosen from $\hm_1 \ldots \hm_K$ 
or their absolutely convex hull,
e.g., by minimizing empirical mean squared error or a targeted 
loss function \citep[see e.g.,][]{juditsky2000functional, van2003unified}. In addition 
to allowing irregular regression estimators $\hm$, this approach offers robustness to the irregularity 
of the regression function $m$ itself; recalling Section~\ref{sec:sieve-asymptotics}, 
$\norm{\hm - \tilde m}_{\ff}$ and $\norm{\tilde m - m}_{L_2(\Pn)}$
will be small for some $\tilde m$ when $m$ is approximated well by a function in $\GG$ or in $\hm_1 \ldots \hm_K$. 
In ideal conditions, the theorem below justifies the use of up to 
$K = o(n^{1/(2+\alpha)})$ candidates when $\HH \in \set{ \GG,\ \riesz[\psi]\GG,\ h(\cdot,\GG)}$
satisfy the metric entropy bound \smash{$\log \hat N(\HH,\tau) \le \tau^{-\alpha}$} for $\alpha < 2$. 

\begin{theo}
\label{theo:simple-hull}
In the setting of Theorem~\ref{theo:simple},
let $\GG \subseteq \mm $ be an absolutely convex and pointwise closed set,
and let $\ff_n$ be the absolutely convex hull of $\set{m_1 \ldots m_{K_n}} - \GG$
for $m_1 \ldots m_{K_n} \in \cspan \GG$.
Define \smash{$\hpsi_{AML}$} as in \eqref{eq:aml} with $\ff =\ff_n$.
It is asymptotically linear, satisfying \eqref{eq:asymptotic-linearity} with $\riesz[\psi]$ denoting the Riesz representer of $\psi(\cdot)$ on the tangent space $\cspan\GG$, if
\begin{enumerate}
\item $\norm{\hm - m}_{\ff_n} = O_P(1)$ and $\norm{\hm - m}_{L_2(\Pn)} = O_P(s_n)$ \ for \ $s_n \to 0$ 
\item for all $\chi \in \set{ f \to f,\ f \to \riesz[\psi]f,\ f \to h(\cdot,f)}$,
\begin{enumerate}
\item $\chi(\GG)$ is Donsker, 
\item $\sup_{f \in \ff_n}\norm{\chi(f)}_{L_p(\P)}$ is bounded uniformly in $n$ for some $p \in (2,\infty]$, 
\item when $a_n \to 0$ sufficiently slowly,
\[ (K_n+1)\hat N(\chi(\GG),\ a_n \log(K_n+1)^{-1/2}) = o_P(\omega'_{\chi,\ff_n}(n^{-1/4} \vee s_n)^{-2}). \]
\end{enumerate}
\end{enumerate}
Here $\hat N(\HH, \tau)$ is the minimal number of $\norm{\cdot}_{L_2(\Pn)}$-balls of radius $\tau$ covering $\HH$ and
\[ \omega_{\chi,\ff_n}'(r) = \omega_{\chi,\ff_n}(r) \vee r^{(p-2)/(p-1)} \ \text{ where }\
 \omega_{\chi,\ff_n}(r) =\sup_{f \in \ff_n : \norm{f}_{L_2(\P)} \le r} \norm{\chi(f)}_{L_2(\P)}. \]

\end{theo}

Candidates $\hm_1 \ldots \hm_K$ need not be good estimators of $m$ individually. We may benefit, for example, 
from including indicators for strata of estimates of $\riesz[\psi]$ and $m$, motivated 
by the ideas of propensity score and prognostic score stratification in causal inference \citep{rosenbaum1984reducing}.

\begin{rema}
\label{rema:hull-k-constant}
In the case most similar to that of Theorem~\ref{theo:simple}, in which $K_n = O(1)$
and \smash{$\sup_{f \in \ff_n}\norm{f}_{\infty}$} is bounded uniformly in $n$, the assumptions of Theorem~\ref{theo:simple-hull} essentially 
reduce to those of Theorem~\ref{theo:simple} and additional $L_p$ boundedness assumptions on $\riesz[\psi]\ff_n$ and $h(\cdot,\ff_n)$
from (2b). In particular, for any $s_n \to 0$, (2c) is implied by the equicontinuity of $h(Z,\cdot)$ on $\ff_n$ in the sense that $\lim_{r \to 0}\sup_{n}\omega_{\chi,\ff_n}(r)=0$ for $\chi(f)=h(\cdot,f)$. 
\end{rema}

\section{Estimating the Average Partial Effect in a Conditionally Linear Outcome Model}
\label{sec:ape}

As a concrete instance of our approach, we consider the problem
of estimating an average partial effect, assuming a conditionally linear treatment effect model.
A statistician observes features $X \in \xx$, a treatment dose $W \in \RR$, and an outcome
$Y \in \RR$ and wants to estimate $\psi$, where
\begin{equation}
\label{eq:ape_again}
\psi = \EE{\tau(X)} \ \text{ assuming } \ \EE{Y \cond X = x, \, W = w} = \mu(x) + w \,\tau(x).
\end{equation}
By Theorem~\ref{theo:simple}, our AML estimator will be efficient for
$\psi$ under regularity conditions when $\Var{Y_i \cond X_i, \, W_i} = v(X_i)$ is only a function of $X_i$.

In the classical case of an unconfounded binary
treatment, the model \eqref{eq:ape_again} is general and the estimand $\psi$ corresponds
to the average treatment effect \citep{rosenbaum1983central,imbens2015causal}. At the other
extreme, if $W$ is real valued but $\tau(x) = \tau$ is constrained not to depend on $x$, then
\eqref{eq:ape_again} reduces to the partially linear model as studied by \citet{robinson1988root}.
The specific model \eqref{eq:ape_again} has recently been studied by \citet*{athey2016generalized}, \citet*{graham2018semiparametrically},
and \citet*{zhao2017selective}.
We consider the motivation for \eqref{eq:ape_again} in Section \ref{sec:application} in the
context a real-world application; here, we focus on estimating $\psi$ in this model.

Both $\mu(\cdot)$ and $\tau(\cdot)$ in the model \eqref{eq:ape_again} are assumed to have
finite gauge with respect to an absolutely convex class $\HH$, and we define
\begin{equation}
\label{eq:F_ape}
\F_{\HH} = \cb{m : m(x, \, w) = \mu(x) + w \tau(x), \ \Norm{\mu}_\HH^2 + \Norm{\tau}_\HH^2 \leq 1}.
\end{equation}
We can simplify the definition \eqref{eq:aml-primal} of the minimax weights for this class. 
\begin{equation}
\label{eq:ape_gamma}
\begin{split}
\hgamma = \argmin_{\gamma \in \R^n}
\sup_{\mu \in \HH}\sqb{\frac{1}{n} \sum_{i = 1}^n \gamma_i \mu(X_i)}^2 
+ \sup_{\tau \in \HH}\sqb{\frac{1}{n} \sum_{i = 1}^n \p{W_i\gamma_i - 1} \tau(X_i)}^2 + \frac{\sigma^2 \norm{\gamma}^2}{n^2}.
\end{split}
\end{equation}
Given these weights, the augmented minimax linear estimator is  
\begin{equation}
\label{eq:ape_aml}
\hpsi_{AML} = \frac{1}{n} \sum_{i = 1}^n \p{\htau(X_i) - \hgamma_i \p{\hmu(X_i) + W_i \htau(X_i) - Y_i}}.
\end{equation}
Our formal results above give conditions under which it is asymptotically efficient.
In this section, our goal is to explore the behavior of this estimator empirically.
For comparison, we introduce some alternatives.
The first is the minimax linear estimator $\hpsi_{MLIN} = n^{-1}\sum_{i=1}^n \hgamma_i Y_i$,
i.e., $\hpsi_{AML}$ with $\hm \equiv 0$.
The others are variants of the doubly robust estimator \smash{$\hpsi_{DR}$.}
In this setting, the Riesz representer has the form $\riesz[\psi](x, w) = (w - e(x)) / v_w(x)$ with
$e(x) = \EE{W \cond X = x}$ and $v_w(x) = \Var{W \cond X = x}$, so we consider 
a natural doubly robust estimator based on plug-in estimates of these quantities,\footnote{For example,
a random forest version of this estimator is available in the \texttt{grf} package of
\citet*{athey2016generalized}. In the binary treatment assignment case $W_i \in \cb{0, \, 1}$,
we know that $v_w(x) = e(x)(1 - e(x))$; and if we set \smash{$\hv_w(x) = \he(x)(1 - \he(x))$}, then the estimator
in \eqref{eq:ape_dr} is equivalent to the augmented inverse-propensity weighted estimator of
\citet*{robins1994estimation}. For more general $W_i$, however, $v_w(x)$ is not necessarily determined by $e(x)$
and so we need to estimate it separately.}
\begin{equation}
\label{eq:ape_dr}
\hpsi_{DR} = \frac{1}{n} \sum_{i = 1}^n \p{\htau(X_i) - \p{\frac{W_i - \he(X_i)}{\hv_w(X_i)}} \p{\hmu(X_i) + W_i \htau(X_i) - Y_i}}.
\end{equation}
Below, we numerically compare the relative merits of minimax linear, augmented
minimax linear, and plug-in doubly robust estimation of the average partial effect.

\subsection{A Simulation Study}
\label{sec:simu}

To better understand the merits of different approaches to average partial effect estimation, we
conduct a simulation study. 
As baselines, we consider the {\bf plug-in doubly robust estimator} defined in \eqref{eq:ape_dr}, 
where $\he(\cdot)$ and $\hv_w(\cdot)$ are fit separately,
and an {\bf oracle doubly robust estimator} that uses the same
functional form \eqref{eq:ape_dr} but with oracle values of $e(X_i)$ and $v_w(X_i)$.
We compare these baselines to an {\bf augmented minimax linear estimator} (AML) 
that uses minimax linear weights for a class $\ff_{\hh}$ as described in \eqref{eq:ape_aml}, as well as an 
{\bf augmented minimax linear estimator over an extended class} (AML+), a variant 
that uses the same functional form but with the minimax linear weights for an extended class $\ff_{\hh_+}$
that includes a set of estimated functions. We also consider the simpler {\bf minimax linear estimator}
for each class. We provide further implementation details below.


\subsubsection{Construction of Augmented Minimax Linear Estimators}

We first describe how we implement our approach,
an augmented minimax linear estimator for the class $\ff_{\hh}$ described in the section above \eqref{eq:F_ape}.
We take $\hh$ to be the absolutely convex hull of a mean-square summable set of basis functions as described in Remark~\ref{rema:consistency}.
Specifically, we use a basis sequence $\phi_j=a_j\phi_j'$, where $\phi_j'$ are $d$-dimensional
interactions of Hermite polynomials that are orthonormal with respect to the standard normal
distribution. The sequence of weights $\cb{a_j}$ varies with order $k$ of the polynomial $\phi_j$;
\smash{$a_j =1/(k\sqrt{n_{k,d}})$} where $n_{k,d}$ is the number of terms of order $k$.
Observe that \smash{$\sum_{j=1}^{\infty} a_j^2$} \smash{$=\sum_{k=1}^{\infty}1/k^2 < \infty$} and therefore \smash{$\sum_{j=1}^{\infty}\E \phi_j^2(X) < \infty$} for standard normal $X$ or $X$ with bounded density with respect to the standard normal.

Following our discussion in Remark \ref{rema:consistency}, we take
an $\ell_1$-penalized least squares approach to estimating the regression function $m$.
Rather than using a fully nonparametric estimate $\hat m(x,w)$,
which would not be in our class $\ff_{\hh}$, 
we fit a conditionally linear model $\hmu(x) + w\htau(x)$
using the $R$-lasso method proposed by \citet{nie2017learning}.
To do this, we first estimate the marginal response function \smash{$r(x)=\EE{Y_i \cond X_i = x}$} and \smash{$e(x)$}
via a cross-validated lasso \citep{tibshirani1996regression}
on the basis $\phi(x)$.\footnotemark\ We then fit $\tau_{\beta}(x)= \phi(x)^T\beta$ 
by minimizing the $\ell_1$-penalized R-loss
$n^{-1}\sum_{i=1}^n [Y_i - \hat r(X_i) - (W-\hat e(X_i))\tau_{\beta}(X_i)]^2 + \lambda\norm{\beta}_{\ell_1}$,
with $\lambda$ chosen by cross-validation.
Finally, we set \smash{$\hmu(x) = \hat r(x)- \htau(x) \he(x)$}.
As discussed in \citet{nie2017learning}, this method is appropriate when the treatment effect function $\tau(x)$
is simpler than \smash{$r(x)$} and \smash{$e(x)$},
and allows for faster rates of convergence on $\tau(x)$ than the other regression components
whenever the nuisance components can be estimated at $o_p(n^{-1/4})$ rates in root-mean squared error.

\footnotetext{We emphasize that, although we use lasso software for fitting $\beta$, we do not follow the default
practice of standardizing the basis functions before applying the $\ell_1$-penalty. Rather, we estimate coefficients $\beta$ for the square-summable
basis \smash{$\phi_1,\phi_2,\ldots$} using a penalty proportional to $\norm{\beta}_{\ell_1}$. 
As discussed in Remark~\ref{rema:consistency}, this is penalized least squares estimation of the functions $r$ and $e$ (and $v_{w}$, which we discuss later) with a penalty proportional to the gauge of a Donsker class, where that Donsker class is the absolutely convex hull of $\phi_1,\phi_2,\ldots$.}

We consider two options for the bias-correcting weights $\hgamma$. The simpler option is to use
the minimax weights for the class $\ff_{\hh}$ described in \eqref{eq:F_ape}.
This choice is directly motivated by our formal results given in Theorem \ref{theo:simple}.
As an alternative, motivated by popular idea of propensity-stratified estimation in the causal inference
literature \citep{rosenbaum1984reducing}, we use minimax weights for an extended class $\ff_{\hh_+}$
where $\hh_+$ extends $\hh$ by adding to our basis expansion $\phi(x)$ the following random basis functions:
\begin{itemize}
\item Multi-scale strata of the estimated average treatment intensity \smash{$\he(X_i)$}
(we balanced over histogram bins of width 0.05, 0.1, and 0.2),
\item Basis elements obtained by depth-3 recursive dyadic partitioning (i.e., pick a feature, split along its median, and recurse), and
\item Leaves generated by a regression tree on the $W_i$ \citep{breiman1984classification}.
\end{itemize}
The underlying idea is that we may be able to improve the practical performance of the method by opportunistically adding
a small number of basis functions that help mitigate bias in case of misspecification (i.e., when $\mu$ and
$\tau$ do not have finite gauge \smash{$\norm{\cdot}_\hh$}). The motivation for focusing on transformations of
\smash{$\he(X_i)$} is that accurately stratifying on \smash{$e(X_i)$} would
suffice to eliminate all confounding in the model \eqref{eq:ape_again}.\footnote{In the case of binary
treatments $W_i$, this corresponds to the classical result of \citet{rosenbaum1983central}, who showed
that the propensity score is a balancing score. With non-binary treatments, \smash{$\EE{W_i \cond X_i}$}
is not in general a balancing score \citep{imbens2000role}; however, it is 
a balancing score for our specific model \eqref{eq:ape_again}.} 
Because $\F_{\hh+}$ is a function of $Z_1 \ldots Z_n$ for $Z_i=(X_i,W_i)$,
it is not necessary to cross-fit as described in 
Section~\ref{sec:sample-splitting} to avoid bias from the `noise term'. 
With both $\ff_{\hh}$ and $\ff_{\hh+}$, we take $\sigma^2=1$ in \eqref{eq:ape_gamma}.

\subsubsection{Baselines and Software Details}

The baselines we consider combine the aforementioned 
regression $\hmu(x) + w\htau(x)$ with various weighting schemes. 
The weights used in the plug-in double robust estimator \eqref{eq:ape_dr} 
involve $\he$ as estimated above and an estimate of $v_w(x)=\Var{W \mid X=x}$,
which we fit by cross-validated lasso regressing \smash{$(W_i-\he_{f_i,\hlambda_e}(X_i))^2$} on $\phi(X_i)$. 
The weights used in the double-robust oracle substitute the true values of $e(x)$ and $v_w(w)$ in our simulated design.

Ten-fold cross-fitting is used throughout: where $\htau(X_i)$ and $\hmu(X_i)$ appear in
\eqref{eq:ape_aml} and \eqref{eq:ape_dr}, we use estimators $\htau^{(-i)}$ and $\hmu^{(-i)}$
trained on the folds that do not include unit $i$. This 
reduces dependence on $(Y_i,X_i,W_i)$ and therefore mitigates potential own-observation bias 
in $\hpsi_{DR}$ \citep[see e.g.,][]{chernozhukov2016double}. However, we do get some dependence
through the estimates of $\hat r$ and $\he$ used to train $\htau$ and
through lasso tuning parameters, which are chosen once for all $i$ by cross-validation.
While this dependence could be eliminated using a computationally demanding nested sample splitting scheme,
we here follow the approach taken in the \texttt{grf} package of \citet*{athey2016generalized} and use a simplified scheme
described in Appendix~\ref{sec:simu_details}.
 Our theoretical results for $\hpsi_{AML}$ do not formally justify the use of this cross-fitting scheme,
as $\hm^{(-i)}(x,w) = \hmu^{(-i)}(x) + w\htau^{(-i)}(x)$ is a function of the fold indicator $f_i$ as
well as $x,w$, and for this reason $\norm{\hm - m}_{\ff_{\hh}} = \infty$;
however, this does not seem to cause problems in our simulations.

All methods are implemented in the \texttt{R} package \texttt{amlinear}, and replication files are
available at \url{https://github.com/davidahirshberg/amlinear}.
We computed minimax linear weights via the cone solver \texttt{ECOS} \citep*{domahidi2013ecos},
available in \texttt{R} via the package \texttt{CVXR} \citep{CVXR}.
When needed, we run penalized regression using the \texttt{R} package
\texttt{glmnet} \citep*{friedman2010regularization}.

\subsubsection{Simulation Design}
\label{sec:spec}

We considered data-generating distributions of the form
\begin{equation*}
\begin{split}
X_i \sim \nn\p{0, \, I_{d \times d}}, \ \ W_i  \cond X_i \sim \law_{X_i}, \ \
Y_i \cond X_i, \, W_i = \nn\p{b(X_i) + W_i \tau(X_i) , \, 1},
\end{split}
\end{equation*}
for different choices of
dimension $d$,
treatment assignment distribution $\law_{X_i}$,
baseline main effect $\mu(\cdot)$ and
treatment effect function $\tau(\cdot)$.
We considered the following 4 setups, each of which depends on a sparsity level
$k$ that controls the complexity of the signal.
\begin{enumerate}
\item Beta-distributed treatment,
\smash{$W_i \cond X_i \sim B(\alpha(X_i), \, 1-\alpha(X_i))$}, with
$\zeta(x) = \sum_{j = 1}^k x_{j}/\sqrt{k}$,
$\eta(x) = \sign(\zeta(x)) \zeta^2(x)$,
$\alpha(x) = \max\{0.05, \, \min\{0.95, $ $ 1/(1 + \exp[-\eta(x)]) \}\}$,
$\mu(x) = \eta(x) + 0.2 (\alpha(x) - 0.5)$, and
$\tau(x) = -0.2$.
\item Scaled Gaussian treatment,
\smash{$W_i \cond X_i \sim \nn\p{\lambda(X_i), \, \lambda^2(X_i)}$}, with
$\eta(x) = 2^{k-1} \prod_{j = 1}^k x_j$,
$\mu(x) = \sign(\eta(x)) \sqrt{\abs{\eta(x)}}$,
$\lambda(x) = 0.1 \, \text{sign}(\mu(x)) + \mu(x)$, and
$\tau(x) = \max\cb{x_{1} + x_{2}, \, 0}/2$.
\item Poisson treatment,
\smash{$W_i \cond X_i \sim \text{Poisson}(\lambda(X_i))$}, with
$\tau(x) = k^{-1} \sum_{j =1}^k$ $ \cos\p{\pi x_j /3}$,
$\lambda(x) = 0.2 + \tau^2(x)$, and
$\mu(x) = 4d^{-1}\sum_{j = 1}^d x_{j} + 2\lambda(x)$.
\item Log-normal treatment,
\smash{$\log(W_i) \cond X_i \sim \nn\p{\lambda(X_i), \, 1/3^2}$}, with
$\zeta(x) = \sum_{j = 1}^k$ $x_{j}/\sqrt{k}$,
$\mu(x) = \max\cb{0, \, 2\zeta(x)}$,
$\lambda(x) = 1 / (1 + \exp[-\sign(\zeta(x))\zeta^2(x)])$, and
$\tau(x) = \sin\p{2\pi x_{1}}$.
\end{enumerate}

\subsection{Results}

We first compare our augmented minimax linear estimators with the corresponding minimax linear estimators.
Figure \ref{fig:augment} compares the resulting mean-squared errors for $\psi$ across
several variants of the simulation design (the exact parameters used are the same as those used in Table \ref{tab:simu_results}).
The left panel shows results where the weights are minimax over $\ff_{\hh}$, while the right panel has minimax weights over $\ff_{\hh_+}$.
 
Overall, we see that the augmented minimax linear estimator is sometimes comparable to the
minimax linear one and sometimes substantially better. Thus, while results of \citet{donoho1994statistical} and \citet{armstrong2015optimal} 
imply that the augmented estimator can be little better than the minimax linear estimator for a convex signal class $\ff$ 
in terms of its behavior at a few specific signals $m \in \ff$,
this does not appear representative of behavior in general.
Furthermore, as the bias of our augmented estimator is bounded as a proportion of \smash{$\norm{\hm-m}_{\ff}$}
rather than \smash{$\norm{m}_{\ff}$}, our approach offers a natural way to accomodate signals in some non-convex signal classes: 
those for which, for some choice of $\hm$, the regression error function $\hm - m$ is well-characterized in terms of some strong norm $\norm{\cdot}_{\ff}$.
This can be the case, for example, when estimating a vector of regression coefficients $\beta$ by the lasso: 
\smash{$\norm{\hbeta - \beta}_{\ell_1}$} will be small
either if \smash{$\norm{\beta}_{\ell_1}$} is small or,
to a degree determined by incoherence properties of $\phi(X)$, if $\beta$ is sparse \citep[e.g.][]{lecue2018regularization}.
This phenomenon offers some explanation for the good behavior we observe empirically, as 
the functions $\mu(x)=\phi(x)^T \beta_{\mu}$ and $\tau(x) = \phi(x)^T \beta_{\tau}$ defining our signal $m(x,w)=\mu(x)+w\tau(x)$ have some degree of sparsity 
and \smash{$\norm{\hm - m}_{\ff_{\hh}}^2=\norm{\hbeta_{\mu} - \beta_{\mu}}_{\ell_1}^2 + \norm{\hbeta_{\tau} - \beta_{\tau}}_{\ell_1}^2$}.

\begin{figure}[t]
\begin{center}
\begin{tabular}{cc}
\includegraphics[width=0.49\textwidth]{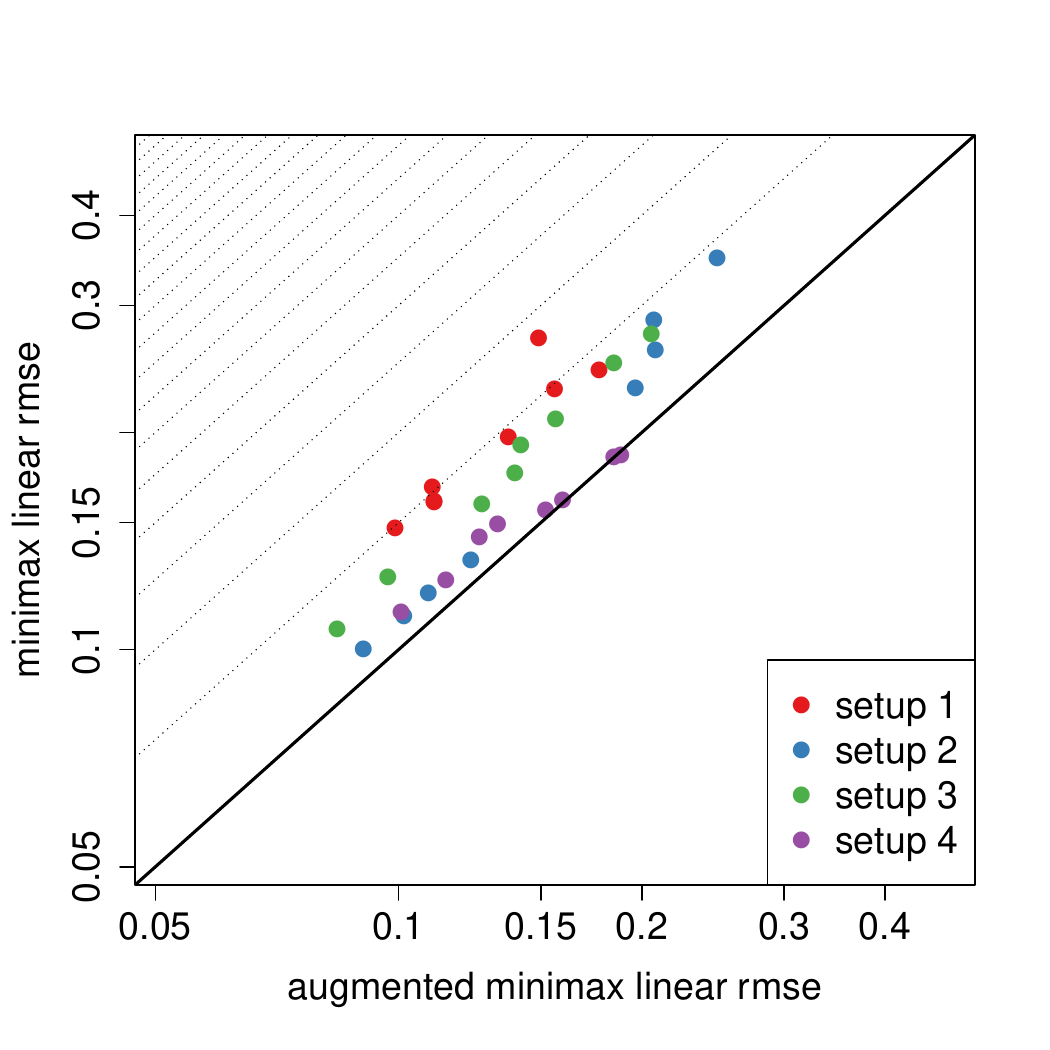} &
\includegraphics[width=0.49\textwidth]{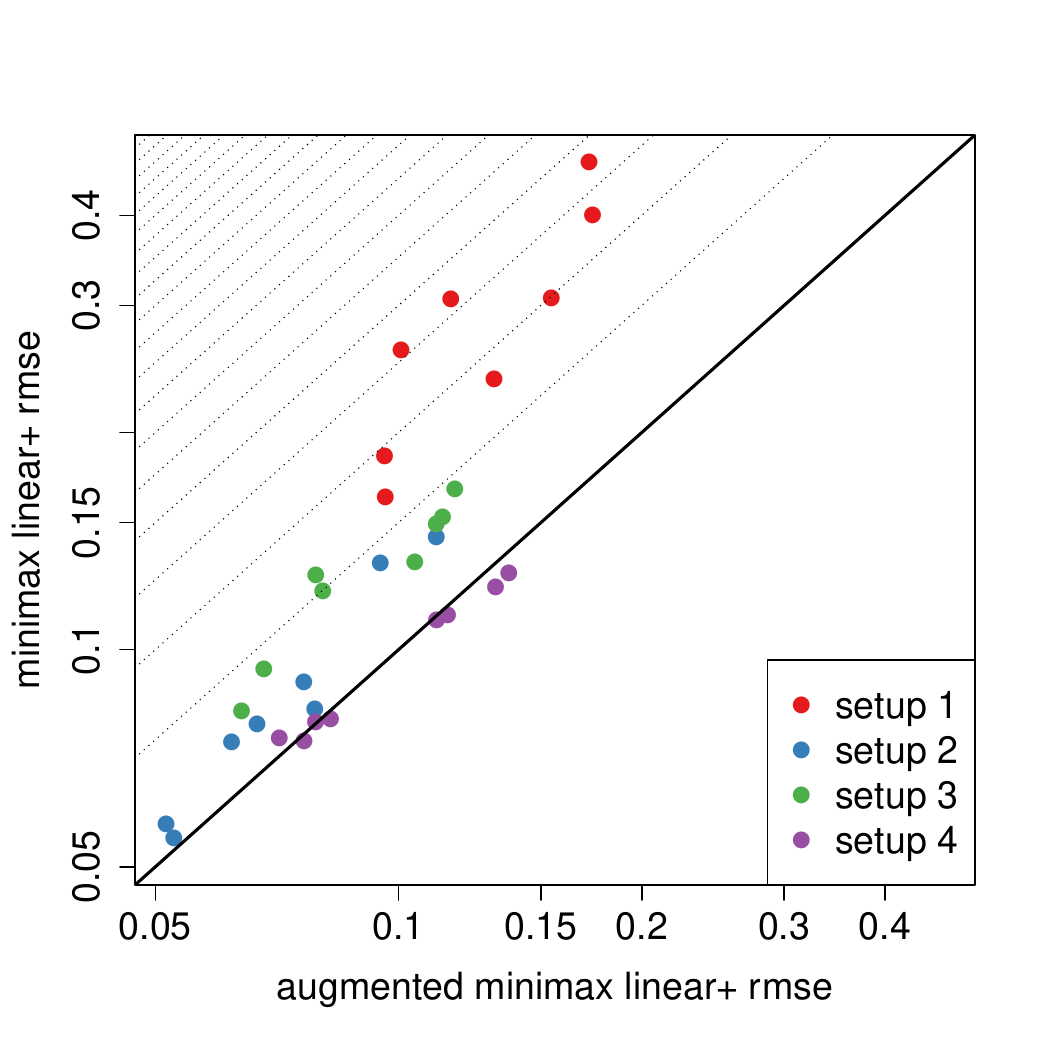}
\end{tabular}
\caption[Comparing augmented minimax linear estimation with linear estimation.]{Comparing augmented minimax linear estimation with minimax linear estimation.
The solid line $y=x$ indicates equivalent performance and the dotted lines indicate improvements of 50\%, 100\%, 150\%, etc. in root mean squared error.}
\label{fig:augment}
\end{center}
\end{figure}

In Table \ref{tab:simu_results}, we compare augmented minimax linear estimation
with doubly robust estimators, both using an estimated and an oracle Riesz representer.
In terms of mean-squared error, our simple AML estimator already performs well relative to
the main baseline (i.e., plug-in doubly robust estimation), and the AML+ estimator does better yet. Perhaps more surprisingly, our methods sometimes also
beat the doubly robust oracle, achieving comparable control of bias with a substantial decrease in variance. 
This reduction in variance arises from shrinkage due to the penalty term in \eqref{eq:aml-primal}.
It costs us little bias then because, although the oracle weights must be large to control bias
for all square integrable regression errors $\hm - m$ (i.e., to solve \ref{eq:riesz}), 
large weights are not necessary to control bias for $\hm - m$ in $\ff$ (i.e., to solve \ref{eq:riesz-sample}).

In terms of coverage, some of our simulation designs are extremely difficult and all non-oracle estimators
have substantial relative bias. However, in settings 1 and 4, the asymptotics
appear to kick in and our estimators get close to nominal coverage.

\setlength{\tabcolsep}{5pt}
\begin{table}[t]
\centering
\makebox[\textwidth]{

\begin{tabular}{||c|rrr||ccc|ccc|ccc|ccc||}
  \hline
\hline
& \multicolumn{3}{r||}{method} &
\multicolumn{3}{c|}{double rob.~plugin} &
\multicolumn{3}{c|}{augm.~minimax} &
\multicolumn{3}{c|}{augm.~minimax+} &
\multicolumn{3}{c||}{double rob.~oracle} \\ \hline
& $n$ & $p$ & $k$ & rmse & bias & covg & rmse & bias & covg & rmse & bias & covg & rmse & bias & covg \\ 
  \hline
\parbox[t]{2.8mm}{\multirow{8}{*}{\rotatebox[origin=c]{90}{setup 1}}}
 & 600 & 6 & 3 & \bf 0.13 & 0.03 & 0.98 & 0.14 & 0.03 & 0.98 & \bf 0.13 & 0.00 & 0.98 & 0.18 & -0.01 & 0.96 \\ 
   & 600 & 6 & 4 & 0.16 & 0.06 & 0.92 & 0.16 & 0.04 & 0.94 & \bf 0.15 & 0.03 & 0.93 & 0.21 & 0.00 & 0.92 \\ 
   & 600 & 12 & 3 & 0.22 & 0.09 & 0.78 & 0.18 & -0.00 & 0.87 & \bf 0.17 & 0.05 & 0.90 & 0.27 & -0.04 & 0.90 \\ 
   & 600 & 12 & 4 & 0.21 & 0.14 & 0.78 & \bf 0.15 & 0.01 & 0.94 & 0.17 & 0.09 & 0.90 & 0.23 & -0.03 & 0.93 \\ 
   & 1200 & 6 & 3 & \bf 0.10 & 0.03 & 0.94 & 0.11 & 0.06 & 0.92 & \bf 0.10 & 0.02 & 0.96 & 0.12 & 0.00 & 0.98 \\ 
   & 1200 & 6 & 4 & 0.11 & 0.03 & 0.94 & 0.11 & 0.05 & 0.92 & \bf 0.10 & 0.02 & 0.96 & 0.13 & 0.00 & 0.94 \\ 
   & 1200 & 12 & 3 & 0.11 & 0.02 & 0.90 & \bf 0.10 & 0.01 & 0.95 & \bf 0.10 & 0.02 & 0.94 & 0.14 & 0.00 & 0.94 \\ 
   & 1200 & 12 & 4 & 0.15 & 0.06 & 0.86 & \bf 0.11 & 0.00 & 0.92 & 0.12 & 0.04 & 0.90 & 0.16 & -0.00 & 0.94 \\ 
   \hline
   \parbox[t]{2.8mm}{\multirow{8}{*}{\rotatebox[origin=c]{90}{setup 2}}}
 & 600 & 6 & 1 & 0.15 & 0.12 & 0.52 & 0.11 & 0.09 & 0.74 & \bf 0.08 & 0.02 & 0.94 & 0.09 & 0.00 & 0.92 \\ 
   & 600 & 6 & 2 & 0.23 & 0.22 & 0.08 & 0.21 & 0.20 & 0.04 & \bf 0.09 & 0.07 & 0.85 & 0.10 & 0.00 & 0.94 \\ 
   & 600 & 12 & 1 & 0.16 & 0.14 & 0.44 & 0.12 & 0.11 & 0.62 & \bf 0.08 & 0.03 & 0.93 & 0.08 & 0.00 & 0.98 \\ 
   & 600 & 12 & 2 & 0.27 & 0.26 & 0.02 & 0.25 & 0.24 & 0.00 & \bf 0.11 & 0.09 & 0.76 & 0.10 & 0.01 & 0.95 \\ 
   & 1200 & 6 & 1 & 0.12 & 0.11 & 0.30 & 0.09 & 0.08 & 0.52 & \bf 0.05 & 0.01 & 0.95 & 0.06 & -0.00 & 0.96 \\ 
   & 1200 & 6 & 2 & 0.20 & 0.20 & 0.00 & 0.20 & 0.19 & 0.00 & \bf 0.06 & 0.04 & 0.90 & 0.06 & -0.00 & 0.96 \\ 
   & 1200 & 12 & 1 & 0.12 & 0.11 & 0.31 & 0.10 & 0.09 & 0.48 & \bf 0.05 & 0.01 & 0.96 & 0.06 & -0.00 & 0.98 \\ 
   & 1200 & 12 & 2 & 0.22 & 0.22 & 0.00 & 0.21 & 0.20 & 0.00 & \bf 0.07 & 0.04 & 0.86 & 0.07 & 0.00 & 0.94 \\ 
   \hline
   \parbox[t]{2.8mm}{\multirow{8}{*}{\rotatebox[origin=c]{90}{setup 3}}}
 & 600 & 6 & 3 & 0.23 & 0.23 & 0.04 & 0.14 & 0.13 & 0.44 & \bf 0.11 & 0.09 & 0.72 & 0.08 & -0.00 & 0.96 \\ 
   & 600 & 6 & 4 & 0.20 & 0.20 & 0.12 & 0.13 & 0.11 & 0.54 & \bf 0.10 & 0.09 & 0.72 & 0.07 & -0.00 & 0.96 \\ 
   & 600 & 12 & 3 & 0.25 & 0.24 & 0.03 & 0.21 & 0.20 & 0.10 & \bf 0.12 & 0.10 & 0.70 & 0.08 & -0.01 & 0.95 \\ 
   & 600 & 12 & 4 & 0.21 & 0.20 & 0.09 & 0.18 & 0.17 & 0.16 & \bf 0.11 & 0.10 & 0.72 & 0.08 & -0.01 & 0.94 \\ 
   & 1200 & 6 & 3 & 0.20 & 0.19 & 0.01 & 0.10 & 0.09 & 0.55 & \bf 0.07 & 0.05 & 0.78 & 0.05 & -0.01 & 0.97 \\ 
   & 1200 & 6 & 4 & 0.18 & 0.18 & 0.01 & 0.08 & 0.07 & 0.68 & \bf 0.06 & 0.05 & 0.85 & 0.05 & -0.01 & 0.96 \\ 
   & 1200 & 12 & 3 & 0.23 & 0.22 & 0.00 & 0.16 & 0.15 & 0.02 & \bf 0.08 & 0.07 & 0.76 & 0.05 & -0.00 & 0.96 \\ 
   & 1200 & 12 & 4 & 0.19 & 0.19 & 0.00 & 0.14 & 0.14 & 0.13 & \bf 0.08 & 0.07 & 0.70 & 0.05 & 0.00 & 0.94 \\ 
   \hline
   \parbox[t]{2.8mm}{\multirow{8}{*}{\rotatebox[origin=c]{90}{setup 4}}}
 & 600 & 6 & 4 & 0.22 & 0.16 & 0.84 & 0.16 & -0.03 & 0.94 & \bf 0.11 & -0.02 & 1.00 & 0.16 & 0.03 & 0.94 \\ 
   & 600 & 6 & 5 & 0.20 & 0.14 & 0.88 & 0.15 & -0.05 & 0.93 & \bf 0.11 & -0.02 & 1.00 & 0.15 & 0.00 & 0.93 \\ 
   & 600 & 12 & 4 & 0.23 & 0.15 & 0.86 & 0.18 & -0.09 & 0.88 & \bf 0.14 & -0.04 & 0.96 & 0.17 & -0.01 & 0.91 \\ 
   & 600 & 12 & 5 & 0.24 & 0.17 & 0.82 & 0.19 & -0.09 & 0.89 & \bf 0.13 & -0.05 & 0.97 & 0.17 & -0.01 & 0.94 \\ 
   & 1200 & 6 & 4 & 0.13 & 0.09 & 0.90 & 0.10 & -0.03 & 0.94 & \bf 0.07 & -0.01 & 1.00 & 0.10 & 0.00 & 0.96 \\ 
   & 1200 & 6 & 5 & 0.14 & 0.08 & 0.91 & 0.11 & -0.05 & 0.94 & \bf 0.08 & -0.01 & 1.00 & 0.11 & 0.00 & 0.94 \\ 
   & 1200 & 12 & 4 & 0.14 & 0.08 & 0.88 & 0.13 & -0.07 & 0.88 & \bf 0.08 & -0.02 & 0.98 & 0.11 & -0.00 & 0.94 \\ 
   & 1200 & 12 & 5 & 0.14 & 0.09 & 0.87 & 0.13 & -0.07 & 0.90 & \bf 0.08 & -0.02 & 1.00 & 0.11 & -0.00 & 0.96 \\ 
   \hline
\hline
\end{tabular}

}
\caption[Performance of AMLE and baselines in simulation.]{Performance of 4 methods described in Section \ref{sec:simu} on
the simulation designs from Section \ref{sec:spec}. We report root-mean squared error,
bias, and coverage of 95\% confidence intervals averaged over 200 simulation replications.}
\label{tab:simu_results}
\end{table}
\setlength{\tabcolsep}{6pt}

\section{The Effect of Lottery Winnings on Earnings}
\label{sec:application}

To test the behavior of our method in practice, we revisit a study of
\citet*{imbens2001estimating} on the effect of lottery winnings on
long-term earnings. It is of considerably policy interest to understand
how people react to reliable sources of unearned income; such questions
come up, for example, in discussing how universal basic income would affect employment.
In an attempt to get some insight about this effect, \citet*{imbens2001estimating}
study a sample of people who won a major lottery whose prize is paid out in installments
over 20 years. The authors then ask how \$1 in yearly lottery
income affects the earnings of the winner.

To do so, the authors consider $n = 194$ people who all won the lottery,
but got prizes of different sizes (\$1,000--\$100,000 per year).\footnote{The
paper also considers some people who won very large prizes (more than \$100k per year) and
some who won smaller prizes (not paid in installments); however, we restrict our analysis
to the smaller sample of people who won prizes paid out in installments worth \$1k--\$100k per year.}
They effectively use a causal model $\E[Y_i(w) \mid X_i=x] = m(x) + \tau w$
for observations $Y_i=Y_i(W_i)$ of the average yearly earnings in the 6 years following winning
$W_i$ in yearly lottery payoff, where $X_i$ denotes a set of $p = 12$ pre-win covariates
(year won, number of tickets bought, age at win, gender, education, whether
employed at time of win, earnings in 6 years prior to win). Here $Y_i(w)$
represents the average yearly earnings that would have occurred had, possibly contrary to fact,
unit $i$ won a prize paying $w$ dollars annually \citep[e.g.,][]{imbens2015causal}. 
The authors also consider several other model specifications.

As discussed at length by \citet*{imbens2001estimating}, although the lottery winnings
were presumably randomly assigned, we cannot assume exogeneity of the form
$W_i \indep \set{ Y_i(w) : w \in \R}$ because of survey non-response. The data was collected by
mailing out surveys to lottery winners asking about their earnings, etc., so
there may have been selection effects in who responded to the survey. 
A response rate of 42\% was observed, and older people with big 
winnings appear to have been relatively more likely to respond than young people with
big winnings. For this reason, the authors only assume exogeneity
conditionally on the covariates, i.e., $W_i \indep \set{ Y_i(w) : w \in \R} \cond X_i$,
which suffices to establish that the aforementioned causal model is identified 
as a regression model $m(x) + \tau w = \E[Y_i \mid X_i=x, W_i=w]$.

Here, we examine the robustness of the conclusions of \citet*{imbens2001estimating}
to potential effect heterogeneity. Instead of assuming that the slope $\tau$ in this model
 is a constant, we let it vary with $x$ and seek to estimate $\psi = \EE{\tau(X)}$;
this corresponds exactly to an average partial effect in the conditionally linear model, which we
studied in Section \ref{sec:ape}. In our comparison, we consider 3 estimators that implicitly
assume constant slope and estimate $\tau$, and 6 that allow
$\tau(x)$ to vary and estimate $\EE{\tau(X)}$.

Among methods that assume constant slope, the first runs ordinary least squares for $Y_i$ on $W_i$, ignoring potential confounding due to
non-response. The second, which most closely resembles the method used by \citet*{imbens2001estimating},
controls for the $X_i$ using ordinary least squares, i.e., it regresses $Y_i$ on $(X_i, W_i)$ and considers the coefficient on $W_i$.
The third uses the method of \citet{robinson1988root} with cross-fitting as in \citet{chernozhukov2016double}:
it first estimates the marginal effect of $X_i$ on $W_i$ and $Y_i$ via a non-parametric adjustment and then
regresses residuals \smash{$Y_i - \hEE{Y_i \cond X_i}$} on \smash{$W_i - \hEE{W_i \cond X_i}$}.
In each case, we report robust standard errors obtained via the \texttt{R}-package \texttt{sandwich} \citep{sandwich}.

The 6 methods that allow for treatment effect heterogeneity correspond to the 5 methods discussed
in Section \ref{sec:ape}, along with a pure weighting estimator using the estimated Riesz representer,
\smash{$\hpsi = n^{-1} \sum_{i = 1}^n \hriesz[\psi](X_i) Y_i$}, with the same choice of
\smash{$\hriesz[\psi](\cdot)$} as used in \eqref{eq:ape_dr}.
For all non-parametric regression adjustments, we run penalized regression as in Section \ref{sec:ape},
on a basis obtained by taking order-3 Hermite interactions of the 10 continuous features, and then
creating full interactions with the two binary variables (gender and employment), resulting in a total
of 1140 basis elements. For AML+, we include propensity
strata of widths $0.05$, $0.1$, and $0.2$ in the class $\hh_+$ .

\begin{table}[t]
\begin{center}
\begin{tabular}{|ll|cc|}
  \hline
  estimand & estimator & estimate & std.~err \\ 
  \hline
  partial effect & OLS without controls & -0.176 & 0.039 \\ 
  partial effect & OLS with controls & -0.106 & 0.032 \\ 
  partial effect & residual-on-residual OLS & -0.110 & 0.032 \\ 
  avg.~partial effect & plugin Riesz weighting & -0.175 & --- \\ 
  avg.~partial effect & doubly robust plugin & -0.108 & 0.042 \\ 
  avg.~partial effect & minimax linear weighting & -0.074 & --- \\ 
  avg.~partial effect & augm.~minimax linear & -0.091 & 0.044 \\ 
  avg.~partial effect & minimax linear+ weighting & -0.083 & --- \\ 
  avg.~partial effect & augm.~minimax linear+ & -0.097 & 0.045 \\ 
   \hline
\end{tabular}
\caption[Estimates for the effect of unearned income on earnings using data from \citet*{imbens2001estimating}.]{Various estimates, estimators, and estimands for the effect of unearned income on earnings,
using the dataset of \citet*{imbens2001estimating}.
The first 3 methods are justified under the assumption of no heterogeneity in $\tau(x)$ (i.e., $\tau(x) = \tau$),
and estimate $\tau$, while the latter 6 allow for heterogeneity and estimate $\EE{\tau(X)}$.}
\label{tab:IRS}
\end{center}
\end{table}

Table \ref{tab:IRS} reports results using the 9 estimators described above, along with standard
error estimates. We do not report standard errors for the 3 pure weighting methods, as these may not be
asymptotically unbiased and so confidence intervals should also account for bias.
The reported estimates are unitless; in other words, the majority of the estimators suggest that
survey respondents on average respond to a \$1 increase in unearned yearly income by reducing
their yearly earnings by roughly \$0.10.

Substantively, it appears reassuring that most point estimates are consistent
with each other, whether or not they allow for heterogeneity in $\tau(x)$. The only two divergent estimators
are the one that doesn't control for confounding at all,
and the one that uses pure plug-in weighting (which may simply be unstable here).
From a methodological perspective, it is encouraging that our method (and here, also the plug-in doubly robust method)
can rigorously account for potential heterogeneity in $\tau(x)$ without excessively inflating uncertainty.

\section*{Acknowledgments}

We are grateful for stimulating
discussions with Timothy Armstrong, Vitor Hadad, Guido Imbens, Whitney Newey, Jamie Robins, Florian Stebegg, and Jos\'e Zubizarreta,
as well as for comments from seminar participants at several venues.
We also thank Guido Imbens for sharing the lottery data with us.
We initiated this research while D.H.~was a Ph.D. candidate at Columbia University and S.W.~was visiting Columbia as a postdoctoral research scientist.

\ifaos
\begin{supplement}
\stitle{Appendices}
\sdescription{We provide complete proofs for the results in the main text, details about our simulation study, and a discussion of computational issues.}
\end{supplement}
\fi

\ifaos
\bibliographystyle{imsart-nameyear}
\else
\bibliographystyle{plainnat-abbrev}
\fi
\bibliography{references}

\ifaos
\else

\newpage

\begin{appendix}
\section{Proof of Finite Sample Results}
\label{sec:finite-sample-proofs}
In this section, we prove the finite sample bounds on which Theorem~\ref{theo:simple-rate} is based.
Here and throughout the appendix we will write $\Pn f$ and $\P f$ for averages of the function $f$ over the empirical and population distributions of $Z$ respectively in accordance with convention in the empirical process literature \citep[see e.g.][]{vandervaart-wellner1996:weak-convergence},
As a slight abuse of notation, we also write $\Pn$ to indicate a sample average in other contexts. We will write $\gapprox$
with the same meaning as $\tilde\riesz$ in Theorem~\ref{theo:simple-rate}, as it will be helpful to distinguish between vectors of weights $\gamma$
and functions $g$ which, when evaluated, give those weights.

\subsection{Setting}
\label{sec:appendix-setting}
We observe iid $(Y_1,Z_1) \ldots (Y_n,Z_n)$ with $Y_i \in \R$ and $Z_i$ in an arbitrary set $\zz$
and define $m(z) = \E[Y_i \mid Z_i=z]$ and $v(z) = \Var{Y_i \mid Z_i=z}$. We assume that $m$ is in a closed subspace $\mm$ 
of the $\P$-square integrable functions. Our estimand is defined as $\psi(m) = \P h(Z,m)$ in terms of a family $\set{h(z,\cdot) : z \in \zz}$ of linear functionals on $\mm$, and we assume that $\psi(\cdot) = \P h(Z,\cdot)$ is continuous on $\mm$.

\subsection{Consistency of the Minimax Linear Weights}
\label{sec:consistency}

In this section, we will prove the following consistency result.
It is stated as a deterministic consequence of two empirical process bounds
that will be shown to hold with high probability in Section~\ref{sec:putting-it-all-together}.

\begin{lemm}
\label{lemma:abstract-finite-sample}
Let $\F \subset \mm$ be absolutely convex with the property that the linear functionals $f \to f(z)$ and $f \to h(z,f)$ for $z \in \set{Z_1 \ldots Z_n}$ are continuous with respect to its gauge $\norm{\cdot}_{\F}$.\footnotemark\  Let $\riesz[\psi]$ be the Riesz representer for $\psi$ on the span of $\F$,
and consider, for $Q \in \set{\P,\Pn}$,
\begin{align*}
\hgamma &= \argmin_{\gamma \in \R^n}\sup_{f \in \F}  \sqb{\Pn h(Z_i, f) - \gamma_i f(Z_i)}^2 + (\sigma^2/n^2)\norm{\gamma}^2, \\
\tilde g  &= \argmin_{g} \ \norm{g - \riesz[\psi]}_{L_2(Q)}^2 + (\sigma^2/n)\norm{g}_{\F}^2. 
\end{align*}
These minimizers exist, are unique, and satisfy
\[ \Pn (\hgamma_i - \tilde g)^2 \le 2\alpha \eta_M r^2 \text{ for } \alpha = \max\set{3(\norm{\gapprox}_{\ff} + \eta_M r^2 n/\sigma^2),\ 2\eta_M/\eta_Q} \]
if $\F$ is $\norm{\cdot}_{L_2(Q)}$-closed and bounded and for all $f \in \F$, 
\begin{equation}
\label{eq:ratio-process-bounds}
\begin{aligned}
& \Pn f^2 \ge \eta_Q \P f^2 
&&\text{ if }\  \P f^2 \ge r^2 &&, \\
&\abs{(\Pn-\P)[ h(\cdot,f) - \gapprox f]} \le \eta_M r^2\
&&\text{ if  }\ \P f^2 \le r^2 && \text{ for }\ Q=\P \\
 &\abs{(\Pn-\P)[ h(\cdot,f) - \riesz[\psi] f]} \le \eta_M r^2\
&&\text{ if  }\ \P f^2 \le r^2 && \text{ for }\ Q=\Pn \\
\end{aligned}
\end{equation}
\end{lemm}
\footnotetext{Gauge-continuity is a convenient rephrasing of the pointwise boundedness assumption
of Theorems~\ref{theo:simple}-\ref{theo:simple-rate}.}
We begin by showing existence and uniqueness. It suffices to show that the functions minimized are lower-semicontinuous,
as they are proper and strongly convex and minimized over reflexive spaces $(\R_n, \norm{\cdot}_2)$ 
and $(\vspan \F, \norm{\cdot}_{L_2(Q)})$ respectively \citep[Corollary 2.20]{peypouquet2015convex}. The first is continuous, as 
a convex function is continuous if it is bounded on an open set \citep[Theorem 5.43]{aliprantis2006infinite},
and it is bounded on any bounded subset of $\R^n$. And the second is lower-semicontinuous,
as it is the sum of the continuous function mapping \smash{$g \to \norm{g - \riesz[\psi]}_{L_2(Q)}^2$} and 
the square of the gauge of the absorbing closed convex set $(\sqrt{n}/\sigma)\F$, which is lower-semicontinuous \citep[Theorem 5.52]{aliprantis2006infinite}.

To show that our weights converge to $\gapprox$, we will characterize them as the solution to a least squares problem for estimating $\gapprox$.
This least squares problem is the dual of the problem \eqref{eq:aml} solved by our weights $\hgamma$.
We use the following lemma to establish duality.
\begin{lemm}
\label{lemma:duality}
Let $\S$ be a normed vector space with norm $\norm{\cdot}$   
and $L_0:\S \to \R$ and $\bar L:\S \to \R^n$ be continuous linear maps.
Define a primal $p:\R^n \to R$ and dual $d:\S \to \R$ by
\begin{align*}
p(\gamma) &=  \frac{1}{2}\sup_{\norm{f} \le 1}\sqb{ L_0(f) - \gamma^T \bar L(f)}^2 +\frac{1}{2} \norm{\gamma}_2^2, \\
d(g)      &= \frac{1}{2} \norm{\bar L(g)}_2^2  - L_0(g)  + \frac{1}{2}\norm{g}^2
\end{align*}
Then:
\begin{enumerate}
\item $\min_{\gamma \in \R^n} p(\gamma) = -\inf_{g} d(g)$. 
\item $p$ has a unique minimum at a vector $\hgamma \in \R^n$.
\item For every sequence $\hat g^j$ along which $d(\hat g^j) \to \inf_{g} d(g)$, \\
      $\bar L(\hat g^j) \to \hgamma$.
\end{enumerate}
\end{lemm}
\noindent In our estimator \eqref{eq:aml}, we use the weights $\hriesz$ that minimize 
\smash{$(2\sigma^2/n^2) p(\gamma)$} where \smash{$L_0(f) = \sum_{i=1}^n h(Z_i, f)$}, \smash{$\gamma^T \bar L(f) = \sum_{i=1}^n \gamma_i f(Z_i)$},
and $\norm{\cdot}$ is $\sigma$ times the gauge of $\F$, and we can characterize our weights as the limit of a minimizing sequence for the corresponding dual $d(g)$.
\begin{equation}
\label{eq:duality-explicit}
\begin{aligned}  
&\hgamma_i = \lim_{j \to \infty} \hg_j(Z_i) \quad \text{ if } \quad d(\hat g_j) \to \inf_{g}d(g) \quad \text{ for } \\
&(2/n) d(g) = \Pn g^2  - 2\Pn h(\cdot,g)  + (\sigma^2/n)\norm{g}_{\F}^2.
\end{aligned}
\end{equation}
We will show that $g_j \approx \tilde g$ whenever $d(g_j) \le d(\tilde g)$.
This characterizes $\hgamma$, as each of its coordinates $\hgamma_i$ 
is the limit of $g_j(Z_i)$ for a sequence of functions with this property.

To do this, we will show that the excess loss $d(g) - d(\tilde g)$ is large
unless $g \approx \tilde g$. We begin by lower bounding the excess loss.
Via the elementary identity $g^2 - \tilde g^2 = (g-\tilde g)^2 + 2\gapprox (g-\gapprox)$,
\begin{equation}
\label{eq:concrete-dual}
\begin{aligned}
(2/n)[d(g) - d(\tilde g)] 
&= \Pn (g-\tilde g)^2 + 2\Pn \tilde g (g - \tilde g) - 2 \Pn h(\cdot, g - \tilde g)\\
& + (\sigma^2/n)[\norm{g}_{\F}^2 - \norm{\tilde g}_{\F}^2].
\end{aligned} 
\end{equation}
To lower bound this, we use a convenient property of our approximation $\tilde g$
\begin{equation}
\label{eq:projection-theorem}
0 \le  Q (\tilde g - \riesz[\psi]) (g - \tilde g) + (\sigma^2/n)\norm{\tilde g}_{\F}(\norm{g}_{\F} - \norm{\tilde g}_{\F})
 \ \text{ for all }\  g \in L_2(Q). 
\end{equation}
This is implied by the following generalization of the Hilbert space projection theorem.
The relevant Hilbert space is $\vspan\F \subseteq L_2(Q)$ 
and we take $\phi(x)=\sigma^2 x^2 /(2n)$ and $\rho(g) = \norm{g}_{\F}$.
\begin{lemm}
\label{lemma:projection-theorem}
Let $\phi$ be a nondecreasing convex differentiable function on the nonnegative reals;
$\rho$ be a proper, nonnegative, convex, and lower-semicontinuous function on a Hilbert space;
and $g_{\star}$ be a vector in that space.
Letting $\tilde g = \argmin_{g}  (1/2)\norm{g - g_{\star}}^2 + \phi(\rho(g))$,
$\inner{  \tilde g - g_{\star}, g - \tilde g} + \phi'(\rho(\tilde g))(\rho(g) - \rho(\tilde g))$
is nonnegative for all $g$.
\end{lemm}
Subtracting from the excess loss twice the non-negative right side of \eqref{eq:projection-theorem} yields a simple lower bound.
It is the sum of the empirical mean squared error, a mean-zero empirical process, and a regularization term:
\begin{equation*}
\begin{aligned}
&\Pn (g-\tilde g)^2 + 2(\Pn-Q) \tilde g (g - \tilde g) - 2[\Pn h(\cdot, g - \tilde g) -  Q \riesz[\psi](g-\tilde g)] && \\
& + (\sigma^2/n)[\norm{g}_{\F}^2 - \norm{\tilde g}_{\F}^2 - 2\norm{\tilde g}_{\F}(\norm{g}_{\F}-\norm{\tilde g}_{\F}) ] && \\
&= \Pn (g-\tilde g)^2 + 2(\Pn-\P) [\tilde g (g - \tilde g) \  - \ h(\cdot, g - \tilde g)] + (\sigma^2/n)(\norm{g}_{\F} - \norm{\tilde g}_{\F})^2 && \text{ for }\ Q=\P, \\
&= \Pn (g-\tilde g)^2 + 2(\Pn-\P) [\riesz[\psi](g-\tilde g) - h(\cdot, g - \tilde g)] + (\sigma^2/n)(\norm{g}_{\F} - \norm{\tilde g}_{\F})^2 && \text{ for }\ Q=\Pn. \\
\end{aligned} 
\end{equation*}
Here we've used the Riesz representation property $\P h(\cdot, f) = \P \riesz[\psi] f$ to simplify the first expression
in the two cases $Q=\P$ and $Q = \Pn$. We will use another lower bound that is a function of $\delta = g-\tilde g$,
\begin{equation}
\label{eq:excess-riesz}
\begin{aligned}
&\excess(\delta) 
= \Pn \delta^2 - 2 \abs{M(\delta)} + (\sigma^2/n) (\norm{\delta}_{\ff} - 2\norm{\gapprox}_{\ff})_+^2 \quad \text{ where }\\
&M(\delta) = \begin{cases} 
(\Pn-\P) [h(\cdot,\delta) - \tilde g \delta] & \text{ for } Q = \P, \\
(\Pn-\P) [h(\cdot,\delta) - \riesz[\psi] \delta] & \text{ for } Q = \Pn.
\end{cases} \quad \text{ and }\quad x_+^2 := x^2 1(x \ge 0).
\end{aligned}
\end{equation} 
This bound is derived from the previous one by (i) replacing the second term with its negated absolute value
and (ii) substituting a lower bound on the third term implied by the triangle inequality 
$\norm{\delta}_{\ff} -  \norm{\tilde g}_{\ff} \le \norm{g}_{\ff}$, the increasingness of $x_+^2$,
and the bound $x_+^2 \le x^2$.

By the lemma below, this excess loss lower bound $\excess(\delta)$ can be zero or negative only if 
$\Pn \delta^2 \le 2\alpha \eta_M r^2$. And because $\hgamma_i - \gapprox(Z_i)$ is the limit of a sequence $\delta_j(Z_i)$ 
with $\excess(\delta_j) \le 0$, it follows that $\Pn (\hgamma_i - \gapprox(Z_i))^2 \le 2\alpha\eta_M r^2.$

\begin{lemm}
\label{lemma:consistency-deterministic}
Let $\F$ be a class of functions that is star-shaped around zero,
define $\excess$ as in \eqref{eq:excess-riesz},
and suppose that for all $\delta \in \ff$,
\begin{equation}
\label{eq:ratio-bounds}
\begin{aligned}
& \Pn  \delta^2 \ge \eta_Q \P \delta^2
&& \text{ if }\ \P \delta^2 \ge r^2 \\
&\abs{M(\delta)} \le \eta_M r^2
&& \text{ if }\ \P \delta^2 \le r^2. 
\end{aligned}
\end{equation}
Let $\alpha = \max\set{3(\norm{\gapprox}_{\ff} + \eta_M r^2 n/\sigma^2),\ 2\eta_M/\eta_Q}$. 
Then $\excess(\delta) \le 0$ only if $\norm{\delta}_{\ff} \le \alpha$, 
$\P \delta^2 \le (\alpha r)^2$, and $\Pn \delta^2 \le 2\alpha \eta_M r^2$.
Furthermore, $\excess(\delta) \le \xi$ only if $\norm{\delta}_{\ff} \le \alpha + (\xi n)^{1/2}/\sigma$.
\end{lemm}
\noindent We conclude our proof of Lemma~\ref{lemma:abstract-finite-sample}
by proving Lemmas~\ref{lemma:duality}-\ref{lemma:consistency-deterministic}.

\begin{proof}[Proof of Lemma~\ref{lemma:duality}]
Because $p$ is a proper, strictly convex, coercive, and lower-semicontinuous function on the reflexive space $\R^n$,
it has a unique minimum $\hat p$ at some vector $\hgamma \in \R^n$ \citep[Corollary 2.20]{peypouquet2015convex}. 
Letting $A:\R^n \to \S^{\star}$ be the linear map $A \gamma := -\gamma^T\bar L$,
our primal $p:\R^n \to \R$ has the form of a primal in Fenchel-Rockafellar duality,
\begin{align*}
p(\gamma) &= s(\gamma) + r(A \gamma) \quad \text{ where } \\
s(\gamma) &=  (1/2)\norm{\gamma}^2 \\
r(L) &= (1/2)\norm{L_0 + L}_{\S^{\star}}^2,
\end{align*}
so its dual,
\[  d:\S^{\star\star} \to \R\ \text{ by }\ d(L^{\star}) := s^{\star}(-A^{\star}L^{\star}) + r^{\star}(L^{\star}), \]
has a minimum, and each argmin $\hat L^{\star}$ satisfies $-A^{\star}\hat L^{\star} \in \partial s(\hat \gamma) = \set{ \hat \gamma }$ \citep[Theorem 3.51]{peypouquet2015convex}. Here $s^{\star}$ and $r^{\star}$ are the convex conjugates of $s$ and $r$; $\partial s(\hat \gamma)$ is the subgradient of $s$ at $\hat \gamma$; and 
$A^{\star}$ is the adjoint of $A$, i.e., $A^{\star} L^{\star}$ is the vector in $\R^n$  
satisfying  $\inner{A^{\star} L^{\star}, e_i} = \inner{L^{\star}, A e_i}$ for the standard basis vectors $e_1 \ldots e_n$.

We will now characterize $\hat L^{\star}$ more explicitly, as a minimizer of 
\[ d(L^{\star}) = \frac{1}{2}\sum_{i=1}^n \inner{L^{\star}, A e_i}^2  - \inner{L^{\star}, L_0}  + \frac{1}{2} \norm{L^{\star}}_{\S^{\star\star}}^2. \]
To do this, we first calculate $r^{\star}$ and $s^{\star}$.
\begin{align*}
r^{\star}(L^{\star}) 
&= \sup_{L \in \S^{\star}} \inner{L^{\star}, L} - (1/2)\norm{L_0 + L}_{\S^{\star}}^2 \\
&= \sup_{L' \in \S^{\star}} \inner{L^{\star}, L' - L_0} - (1/2)\norm{L'}_{\S^{\star}}^2 \\ 
&= -\inner{L^{\star}, L_0} + \sup_{t \in \R } \sup_{\norm{L''}_{\S^{\star}}=1} \inner{L^{\star},  t L''} - (1/2)\norm{tL''}_{\S^{\star}}^2 \\
&= -\inner{L^{\star}, L_0} + \sup_{t \in \R} t \norm{L^{\star}}_{\S^{\star\star}} - t^2/2 \\
&= -\inner{L^{\star}, L_0} + (1/2)\norm{L^{\star}}_{\S^{\star\star}}^2.
\end{align*}
In the first step, we reparameterize in terms of $L' = L_0 + L$; in the second, we reparameterize again in terms of $tL''=L'$; 
in the third we substitute $\norm{\cdot}_{\S^{\star\star}}$ for its definition; and in the fourth we use the identity $\max_{t \in \R} at - t^2/2 = a^2/2$.
Similarly, for $y \in \R^n$, 
\begin{align*}
s^{\star}(y) 
&= \sup_{x \in \R^n} \inner{y,x} - \norm{x}^2/2 \\
&= \sup_{t \in \R}\sup_{x' \in \R^n : \norm{x'}=1} \inner{y,tx'} - t^2/2 \\
&= \sup_{t \in \R}t\norm{y} - t^2/2 \\
&= \norm{y}^2/2.
\end{align*}
Taking $y=-A^{\star}L^{\star}$ and establishes our claimed characterization of $d(L^{\star})$.

Now suppose that $L^{\star}$ is an evaluation functional $J_{g} \in \S^{\star\star}$, defined $J_{g}(L) := L(g)$.
Then for any $x \in \R^n$ and any $g \in \S$, $\inner{J_{g}, -Ax} =  \inner{ J_{g}, x^T \bar L } = x^T \bar L(g)$,
and it follows that $\sum_{i=1}^n \inner{J_{g}, A e_i}^2 = \norm{\bar L(g)}^2$. Thus,
\[ d(J_{g}) = \frac{1}{2}\norm{\bar L(g)}^2  - L_0(g)  + \frac{1}{2} \norm{g}_{\S}. \]
If an argmin $\hat L^{\star}$ of $d$ were the evaluation functional $J_{g}$, 
then $g$ would minimize the right side above. When every $L^{\star} \in \S^{\star\star}$ is an evaluation functional,
i.e. when $\S$ is reflexive, because $d$ has a minimum $\hat L^{\star}$ over $\S^{\star\star}$ 
it follows that the right side above has a minimum $\hat g$ over $g \in \S$. 
Furthermore, recalling our first-order optimality condition $-A^{\star}\hat L^{\star} =  \hat \gamma$, 
$\hat \gamma = \bar L(\hat g)$.

This is essentially true whether $\S$ is reflexive or not because evaluation functionals are dense in the bidual $\S^{\star\star}$ in an appropriate sense.
By Goldstine's theorem, for every $L^{\star} \in \S^{\star\star}$, there is a sequence $g_j \in \S$ satisfying $\norm{g_j}_{\S} \le \norm{L^{\star}}_{\S^{\star\star}}$
for all $j$ and $L(g_j) \to L^{\star}(L)$ pointwise for each $L \in \S^{\star}$ \citep[e.g.,][Theorem 2.6.26]{megginson2012introduction}.
Consider such a sequence $\hat g_j$ for an argmin $\hat L^{\star}$ of $d$. We can characterize $\hat \gamma$ as $\lim_{j \to \infty} \bar L(g_j)$,
as $\hat \gamma = -A^{\star}\hat L^{\star} = \lim_{j \to \infty} -A^{\star} J_{g_j}$: $A^{\star}\hat L^{\star}$ is the solution to finitely many linear equations 
\smash{$\inner{A^{\star}\hat L^{\star}, y_k} = \inner{\hat L^{\star}, Ay_k}$} $= \lim_{j \to \infty} \inner{J_{\hat g_j}, Ay_k}$ 
for $\set{ y_k }$ forming a basis for $\R^n$, and pointwise convergence is sufficient to imply convergence 
of the finite dimensional vector with elements $\inner{J_{\hat g_j}, Ay_k}$. 
Furthermore, because $d$ is continuous and depends only on $\norm{\cdot}_{\S^{\star\star}}$ and the value of finitely many functionals, in particular \smash{$L_0$} 
and a basis for the image of $A$, $d(J_{\hat g_j}) \to d(\hat L^{\star})$, and it follows that
$d(J_{\hat g_j}) \to \inf_{g \in \S}d(J_{g})$.

We conclude our proof by showing that every sequence $g_j$ along which $d(J_{g_j})$ converges to its infimum has the 
same limiting value of $\bar L(g_j)$, which therefore must converge to \smash{$\lim_{j \to \infty} \bar L(\hat g_j) = \hat \gamma$}.
This is the case because every term in $d(J_{g})$ is convex in $g$ and 
there is a term that is uniformly convex in $\bar L(g)$: 
if there were two minimizing sequences $g_j$ and $\tilde g_j$ with different limits 
\smash{$\lim \bar L(g_j) \neq \lim \bar L(\tilde g_j)$}, 
their average $(g_j + \tilde g_j) / 2$ would be a sequence along which $d$ converges to 
something strictly smaller than the average of the limit along $g_j$ or $\tilde g_j$, which is its infimum.
\end{proof}

\begin{proof}[Proof of Lemma~\ref{lemma:projection-theorem}]
Let $a(g) = (1/2) \norm{g - g_{\star}}^2 + \phi(\rho(g))$.  
Because it is proper, convex, coercive, and lower-semicontinuous, $a$ has a minimizer $\tilde g$ \citep[Theorem 2.19]{peypouquet2015convex}.
Zero is in its subdifferential $\partial a(\tilde g)$ at its minimizer,
and by a chain rule for subdifferentials \citep[Corollary 3.5]{combari1996note} and 
the Moreau-Rockafellar theorem for subdifferentials of sums \citep[Theorem 3.30]{peypouquet2015convex},
$\partial a(\tilde g)$ is the set of maps $v_a(f) = \inner{  \tilde g - g_{\star}, f} + \phi'(\rho(\tilde g))v_{\rho}(f)$
for $v_{\rho} \in \partial \rho(\tilde g)$. And by definition, $v_{\rho}(g-\tilde g) \le \rho(g) - \rho(\tilde g)$,
so all functionals $v_a \in \partial a(\tilde g)$ satisfy $v_a(g-\tilde g) \le \inner{\tilde g - g_{\star}, g-\tilde g} + \phi'(\rho(\tilde g))(\rho(g) - \rho(\tilde g))$.
This bound implies the claimed nonnegativity property, as $0 = v_a$ for some $a$.
\end{proof}

We prove Lemma~\ref{lemma:consistency-deterministic} with the aid of the following scaling result.
\begin{lemm}
\label{lemm:rescaling}
Let $\ff$ be a set that is star-shaped around zero, $L$ be a homogeneous functional on $\vspan \ff$, 
and $\norm{\cdot}$ be a norm on $\vspan \ff$. If
$L(f) \le \eta r^2$ for all $f \in \ff$ with $\norm{f} \le r$, then
$L(f) \le (\eta/\alpha)\max\set{\norm{f},\ \alpha r}^2$ for all $f \in \alpha \ff$
for every $\alpha > 0$.
\end{lemm}
\begin{proof}
For $f \in \alpha \ff$ with $\norm{f} \le \alpha r$, consider $f' = f/\alpha$. 
Because $f' \in \ff$ and $\norm{f'} \le r$, our assumed bound implies that 
$L(f)=\alpha L(f') \le \eta \alpha r^2=(\eta/\alpha)(\alpha r)^2$.
For $f \in \alpha \ff$ with $\norm{f} \ge \alpha r$, consider $f' = r f / \norm{f}$.
Because $f' \in \ff$ and $\norm{f'} \le r$, our assumed bound implies that 
$L(f) = L(f') \norm{f}/r \le \eta  r \norm{f} \le (\eta/\alpha) \norm{f}^2$,
using in the last step the property $\norm{f} \ge \alpha r$.
\end{proof}

\begin{proof}[Proof of Lemma~\ref{lemma:consistency-deterministic}]
Given our assumed bounds, if $\delta \in \alpha \F$, 
\begin{equation}
\label{eq:rewritten-ratio-bounds}
\begin{aligned}
& \Pn \delta^2 \ge \eta_Q \P \delta^2 && \quad \text{ when }\ \P \delta^2 \ge (\alpha r)^2 \\ 
&\abs{M(\delta)} \le (\eta_M/\alpha) \P \delta^2  && \quad \text{ when }\ \P \delta^2 \ge (\alpha r)^2  \\
&\abs{M(\delta)} \le \eta_M \alpha r^2 && \quad \text{ when }\ \P \delta^2 \le (\alpha r)^2  \\
\end{aligned}
\end{equation}
The first of these is an immediate consequence of the invariance of the ratio $\Pn f^2 / \P f^2$ to scaling
and the second and third follow from Lemma~\ref{lemm:rescaling} with $L(\cdot)=\abs{M(\cdot)}$.  We will now prove our claims using these bounds.

We begin by showing that $\excess(\delta) > 0$ for all $\delta$ with $\norm{\delta}_{\ff} \ge \alpha$.
It suffices to consider $\delta$ with $\norm{\delta}_{\ff} = \alpha$,
as we can write the others as $\delta=s\delta'$ for $s > 1$ and $\norm{\delta'}_{\ff} = \alpha$,
and $\excess(s\delta') \ge s \excess(\delta')$ for $s \ge 1$ when $\alpha \ge 2\norm{\gapprox}_{\ff}$: 
for such $s$, $\delta'$, and $\alpha$,
\begin{align*}
\excess(s\delta') - s \excess(\delta') &= (s^2-s)\Pn (\delta')^2 + (\sigma^2/n)[(s^2-s)\alpha^2 + (1-s)4\norm{\gapprox}^2] \\  
				       &= (s^2-s)\Pn (\delta')^2 + (\sigma^2/n)(s-1)(s\alpha^2 - 4\norm{\gapprox}^2) \ge 0.
\end{align*}
If $\P \delta^2 \ge (\alpha r)^2$, then 
$\excess(\delta) \ge (\eta_Q - 2\eta_M/\alpha) \P \delta^2 + (\sigma^2/n)(\alpha - 2\norm{\gapprox}_{\ff})_+^2$.
If instead $\P \delta^2 \le (\alpha r)^2$, then 
$\excess(\delta) \ge -2\eta_M \alpha r^2 + (\sigma^2/n)(\alpha - 2\norm{\gapprox}_{\ff})_+^2$.
Thus, $\excess(\delta) > \xi$ for all $\delta$ with $\norm{\delta}_{\ff} \ge \alpha$ 
so long as $\eta_Q - 2\eta_M/\alpha \ge 0$ and $(\sigma^2/n)(\alpha - 2\norm{\gapprox}_{\ff})_+^2 > 2\eta_M \alpha r^2 + \xi$.
These conditions hold for $\alpha > \alpha_0 + (\xi n/\sigma^2)^{1/2}$  where
$\alpha_0 = \max\set{2\eta_M/\eta_Q,\ 3(\norm{\gapprox}_{\ff} + \eta_M r^2 n/\sigma^2)}$.
To see that this lower bound implies the latter condition,
observe that for $\alpha \ge 2\norm{\gapprox}_{\ff}$, it expands to
\[ 0 < \alpha^2 - \alpha \p{4\norm{\gapprox}_{\ff} + 2\eta_M r^2 \lambda} + 4\norm{\gapprox}_{\ff}^2 - \xi \lambda \ \text{ for }\ \lambda = n/\sigma^2,\] 
which holds for $\alpha$ exceeding the larger root of the right side,
\begin{align*} 
&2\norm{\gapprox}_{\ff} + \eta_M  r^2 \lambda 
+ \sqrt{ \p{2\norm{\gapprox}_{\ff} + \eta_M  r^2 \lambda}^2 -  4\norm{\gapprox}_{\ff}^2 + \xi \lambda } \\
&= 2\norm{\gapprox}_{\ff} + \eta_M  r^2 \lambda 
+ \sqrt{ 4\norm{\gapprox}_{\ff}\eta_M  r^2 \lambda + \p{\eta_M  r^2 \lambda}^2 + \xi \lambda } \\
&\le 2\p{\norm{\gapprox}_{\ff} + \eta_M  r^2 \lambda}
+ 2 \sqrt{\norm{\gapprox}_{\ff}\eta_M  r^2 \lambda } + \sqrt{\xi \lambda } \\
& \le 3\p{\norm{\gapprox}_{\ff} + \eta_M r^2 \lambda} + \sqrt{\xi\lambda}
\end{align*}
Here the second expression is derived by
expanding and canceling terms under the square root in the first,
the third is follows via the inequality $\sqrt{a+b+c} \le \sqrt{a}+\sqrt{b} + \sqrt{c}$,
and the fourth follows via the inequality $a+b \ge 2\sqrt{ab}$ relating the arithmetic and geometric means.

Now take $\xi=0$ and consider $\delta \in \alpha\F$ for $\alpha > \alpha_0$.
If $\P \delta^2 \ge (\alpha r)^2$, then 
$\excess(\delta) \ge (\eta_Q - 2 \eta_M/\alpha) \P \delta^2 > 0$.
Otherwise, $\excess(\delta) \ge \Pn \delta^2 - 2 \eta_M \alpha r^2$,
which is positive if $\Pn \delta^2 > 2\eta_M \alpha r^2$.

In summary, we've shown that for $\alpha > \alpha_0$, (i) $\excess(\delta) > 0$ if $\norm{g}_{\ff} \ge \alpha$, $\P \delta^2 \ge (\alpha r)^2$,
or $\Pn \delta^2 > 2\eta_M \alpha r^2$,  and (ii) $\excess(\delta) > \xi$ if $\norm{\delta}_{\ff} \ge \alpha + (\xi n)^{1/2}/\sigma$. 
Taking contrapositives, (i)  $\excess(\delta) \le 0$ only if 
$\norm{\delta}_{\ff} < \alpha$, $\P \delta^2 < (\alpha r)^2$, and $\Pn \delta^2 \le 2\eta_M \alpha r^2$,
and (ii) $\excess(\delta) \le \xi$ only if $\norm{\delta}_{\ff} < \alpha + (\xi n)^{1/2}/\sigma$. 
It follows that for $\alpha=\alpha_0$, nonstrict variants of these bounds hold.
\end{proof}

\subsection{Convergence of the noise term} 
\label{sec:convergence-of-the-noise-term}

In this section, we bound the difference between the noise term in the decomposition \eqref{eq:error-decomp}
and the iid sum $\Pn \gapprox(Z_i) \varepsilon_i,\ \varepsilon_i = Y_i - m(Z_i)$. Because $\hriesz$ is a function of $Z_1 \ldots Z_n$,
we can apply Chebyshev's inequality conditionally on $Z_1 \ldots Z_n$ to the difference between our noise term and this sum. 
With conditional and therefore unconditional probability $1-\delta$,
\begin{equation}
\label{eq:noise-term-deviation}
\begin{aligned}
\abs*{ \Pn (\hriesz_i - \gapprox(Z_i)) \varepsilon_i } 
&\le \delta^{-1/2} n^{-1/2} \sqrt{\Pn [\hriesz_i - \gapprox(Z_i)]^2 v(Z_i)}.  \\
&\le \delta^{-1/2} n^{-1/2} \norm{v}_{\infty} \norm{\hriesz - \gapprox}_{L_2(\Pn)}.
\end{aligned}
\end{equation}
The second bound follows from the first via H\"older's inequality.

\subsection{Bounding the bias term}
\label{sec:bias-term-bound}

In this section, we bound the bias term in the decomposition \eqref{eq:error-decomp}. 
As we work with two function classes $\F$ and $\F'$, to avoid ambiguity we indicate the class with a sub or superscript:
\begin{equation}
\label{eq:weight-def-explicit}
\begin{aligned}
 \hgamma^{\GG} &= \argmin_{\gamma \in \R^n} I_{h,\GG}^2(\gamma) + \frac{\sigma_{\GG}^2}{n}\norm{\gamma}_{L_2(\Pn)}^2, \\
 I_{h,\GG}(\gamma) &= \sup_{f \in \GG} \Pn h(Z_i, f) - \gamma_i f(Z_i).
\end{aligned}
\end{equation}
For any absolutely convex set $\GG$, our bias term satisfies the bound
\[ \abs{\Pn h(Z_i, \hm - m) - \hgamma^{\F}_i (\hm - m)} \le \norm{\hm - m}_{\GG} I_{h,\GG}(\hgamma^{\F}). \]
Rather than using this bound for $\GG=\F$, we use it for \smash{$\F' = \{ f : \norm{f}_{\F}^2 + \rho^{-2}\norm{f}_{L_2(\Pn)}^2 \le 1\}$},
a subset of $\F$ containing only small functions. We control the latter factor as follows.
\begin{equation}
\label{eq:bias-bound-abstract}
\begin{aligned}
&I_{h,\F'}(\hgamma^{\F}) \le \rho\norm{\hgamma^{\F'} - \hgamma^{\F}}_{L_2(\Pn)} + I_{h,\F'}(\hgamma^{\F'}) \\
&\le \rho\norm{\hgamma^{\F'} - \hgamma^{\F}}_{L_2(\Pn)} +  \sqb{I_{h,\F'}^2(\riesz[\psi]) +
\frac{\sigma_{\F'}^2}{n}\p{\norm{\riesz[\psi]}_{L_2(\Pn)}^2 - \norm{\hgamma^{\F'}}_{L_2(\Pn)}^2}}^{1/2} \\
&\le \rho\norm{\hgamma^{\F'} - \hgamma^{\F}}_{L_2(\Pn)} + I_{h,\F'}(\riesz[\psi]) \\
&+ \sqb{\frac{\sigma_{\F'}^2}{n}\cb{\norm{\riesz[\psi]}_{L_2(\Pn)}^2 \wedge 2\norm{\riesz[\psi]}_{L_2(\Pn)}\p{\norm{\hgamma^{\F}-\riesz[\psi]}_{L_2(\Pn)} + \norm{\hgamma^{\F'} - \hgamma^{\F}}_{L_2(\Pn)}}}}^{1/2}
\end{aligned}
\end{equation}
The first bound, via the Cauchy-Schwarz inequality, is implied by the property that $\norm{f}_{L_2(\Pn)} \le \rho$ for all $f \in \F'$.
\begin{align*} 
I_{h,\F'}(\hgamma^{\F}) 
&=   \sup_{f \in \F'} \Pn[ h(Z_i, f) - \hgamma^{\F'} f + (\hgamma^{\F'}-\hgamma^{\F})f] \\
&\le \sup_{f \in \F'} \Pn[ h(Z_i, f) - \hgamma^{\F'} f] + \sup_{f \in \F'} \Pn (\hgamma^{\F'}-\hgamma^{\F})f \\
&\le I_{h,\F'}(\hgamma^{\F'}) + \rho \norm{\hgamma^{\F'}-\hgamma^{\F}}_{L_2(\Pn)}. 
\end{align*}
The second bound is implied by the optimality of the weights $\hgamma^{\F'}$. It is a rearrangement of the 
condition that the function minimized by $\hgamma^{\F'}$ is smaller at its minimizer than at the weights $\gamma_i = \riesz[\psi](Z_i)$.
The third bound follows from the second by some elementary arithmetic. As $a^2-b^2=2a(a-b) - (a-b)^2 \le 2a\abs{a-b}$,
using this bound termwise and then taking Cauchy-Schwarz and triangle inequality bounds,
\begin{align*}
\norm{\riesz[\psi]}_{L_2(\Pn)}^2 - \norm{\hgamma^{\F'}}_{L_2(\Pn)}^2 
&\le 2\Pn \riesz[\psi](Z_i) \abs{\riesz[\psi](Z_i) - \hgamma^{\F'}_i} \\
&\le 2\norm{\riesz[\psi]}_{L_2(\Pn)}\norm{\riesz[\psi]-\hgamma^{\F'}}_{L_2(\Pn)} \\
&\le 2\norm{\riesz[\psi]}_{L_2(\Pn)}\p{\norm{\riesz[\psi] - \hgamma^{\F}}_{L_2(\Pn)}  + \norm{\hgamma^{\F} - \hgamma^{\F'}}_{L_2(\Pn)}}.
\end{align*}

Having established the abstract bound \eqref{eq:bias-bound-abstract}, we will derive a concrete version
by controlling $\norm{\hgamma^{\F'}-\hgamma^{\F}}_{L_2(\Pn)}$.
We will take $\sigma_{\F'}^2 = \sigma_{\F}^2 / (1 - \eta)$ with $\eta = \sigma_{\F}^2/(\rho^2 n)$,
as this allows us to control it well. To do this, we recall that the weights $\hgamma_{\GG}$ satisfy
$\hgamma_i^{\GG}=\lim_{j \to \infty}g_j(Z_i)$ where $g_j$ is a minimizing sequence for the dual $d_{\GG}$ \eqref{eq:concrete-dual},
use the similarity of $d_{\F'}$ and $(\sigma_{\F'}/\sigma_{\F})^2 d_{\F}$ to show that a minimizing sequence for the latter
is almost a minimizing sequence for the former, and use the strong convexity of $d_{\F'}$ 
to show that this implies $\hgamma^{\F'} \approx \hgamma^{\F}$. 

\begin{lemm}
\label{lemma:weight-similarity}
For an absolutely convex set $\F$ and $\sigma_{\F} \ge 0$, 
let 
\[ \F' = \set{ f : \norm{f}_{\F}^2 + \rho^{-2}\norm{f}_{L_2(\Pn)}^2 \le 1} \ \text{ and }\ 
\sigma_{\F'}^2 = \sigma_{\F}^2 / (1-\eta) \ \text{ for } \eta = \sigma_{\F}^2 / (\rho^2 n). \]
Define $\hgamma^{\F}$ and $\hgamma^{\F'}$ as in \eqref{eq:weight-def-explicit}
and corresponding duals $d_{\F}$ amd $d_{\F'}$ as in \eqref{eq:concrete-dual}
and suppose that for some $\gapprox$, 
$\excess_{\F}(\delta) = (2/n)[d_{\F}(\gapprox + \delta) - d_{\F}(\gapprox)]$
has the property that for every $\xi \ge 0$, $\excess_{\F}(\delta) \le \xi$ 
only if $\norm{\delta}_{\F} \le \alpha + (\xi n)^{1/2}/\sigma_{\F}$. For any $\gamma \in \R^n$, 

\begin{align*}
&\rho\norm{\hgamma_{\F} - \hgamma_{\F'}}_{L_2(\Pn)} 
\le 6\eta\sqb{I_{h,\F'}(\gamma) + \rho\norm{\gamma}_{L_2(\Pn)}} \\
&\vee   \sqb{\p{4\rho^2\alpha  + 2I_{h,\F'}(\gamma) 
      + 4\rho^2c_{\star} \norm{\hgamma_{\F} - \gapprox}_{L_2(\Pn)}^{1/2}} 6\eta I_{h,\F'}(\gamma) }^{1/2} \\
&\vee   \sqb{24 \eta \rho^{3/2} c_{\star} I_{h,\F'}(\gamma)}^{2/3} \ \ \text{ with }\ \ c_{\star}^2 = 2\rho^{-3}I_{h,\F'}(\gamma) + 2\rho^{-2} \norm{\gamma}_{L_2(\Pn)}. 
\end{align*}
\end{lemm}

To establish our claim of oracle behavior, in the sense that we get essentially the same bias bound with the weights $\hgamma^{\F}$ 
as we would with $\hgamma^{\F'}$, we need to show that \smash{$\rho \norm{\hgamma_{\F}' - \hgamma_{\F}}_{L_2(\Pn)}$} is small relative to $I_{h,\F'}(\hgamma^{\F'})$.
By working with the bound above, we show that subject to some limits on the range of $\rho$, this is the case.

\begin{coro}
\label{coro:weight-similarity}
Under the assumptions of Lemma~\ref{lemma:weight-similarity},
for $\phi \ge I_{h,\F'}(\gamma)$,
$\rho \norm{\hgamma_{\F}' - \hgamma_{\F}}_{L_2(\Pn)} \le \epsilon \phi$ 
if the bounds below are satisfied.
\begin{equation}
\label{eq:rho-lb-implicit}
\begin{aligned}
\rho^2                   &\ge \p{\epsilon^{-1}12 \vee \epsilon^{-2} 36 \vee \epsilon^{-3/2}48 } \sigma_{\F}^2/n, \\
\phi       &\ge \p{\epsilon^{-2} 72 \alpha } \sigma_{\F}^2/n, \\
\rho \phi   &\ge  \p{\epsilon^{-1} 64 \norm{\riesz}_{L_2(\Pn)} 
    			     \vee \epsilon^{-2} 144\norm{\hriesz_{\F} - \gapprox}_{L_2(\Pn)}^{1/2} \norm{\riesz}_{L_2(\Pn)}^{1/2}} \sigma_{\F}^2/n, \\
\rho^3 \phi &\ge \p{\epsilon^{-4} 144^2 \norm{\hriesz_{\F} - \gapprox}_{L_2(\Pn)}} \sigma_{\F}^4/n^2.
\end{aligned}
\end{equation}
\end{coro}

Each of these conditions is a lower bound on an increasing function of $\rho$, as $I_{h,\F'}(\gamma)$ is increasing in $\rho$,
so this is implictly a lower bound on $\rho$. We can simplify these conditions if we can bound
\smash{$\norm{\hriesz_{\F} - \gapprox}_{L_2(\Pn)}$} in terms of $\alpha$ as in Lemma~\ref{lemma:abstract-finite-sample}.

\begin{coro}
\label{coro:weight-similarity-simplified}
Under the assumptions of Corollary~\ref{coro:weight-similarity},
if $\phi \ge I_{h,\F'}(\gamma) \vee \epsilon^{-2} 72 \alpha_{\phi} \sigma_{\F}^2/n$, $\epsilon \le 9/16$,
and $\norm{\hriesz_{\F} - \gapprox}_{L_2(\Pn)}^2 \le \alpha_{\phi} s^2$ and $\alpha_{\phi} \ge n s^2 / \sigma_{\F}^2$
for some $s$ and $\alpha_{\phi} \ge \alpha$, the bounds \eqref{eq:rho-lb-implicit}
are satisfied and therefore $\rho \norm{\hgamma_{\F}' - \hgamma_{\F}}_{L_2(\Pn)} \le \epsilon \phi$ if 
\begin{align*}
\rho &\ge  \epsilon^{-1} 6 \sigma_\F/n^{1/2} \vee (1/2) \norm{\riesz}_{L_2(\Pn)}  \sigma_\F^2/(s^2 n). 
\end{align*}
\end{coro}
\noindent Using the bound on $\norm{\hriesz_{\F} - \gapprox}_{L_2(\Pn)}^2$ from Lemma~\ref{lemma:abstract-finite-sample}, 
we can take $\alpha_{\phi} = \alpha$, $s^2=2\eta_Mr^2$.
Taking $\gamma=\riesz[\psi]$ in \eqref{eq:bias-bound-abstract}
and substituting $\phi \ge I_{h,\F'}(\riesz[\psi]) \vee \epsilon^{-2} 72 \alpha \sigma_{\F}^2/n$
for $I_{h,\F'}(\riesz[\psi])$, when $\rho$ satisfies the lower bound from Corollary~\ref{coro:weight-similarity-simplified}, we get the following oracle bias bound. 
\begin{equation}
\label{eq:bias-bound-concrete}
\begin{aligned}
&I_{h,\F'}(\hgamma^{\F}) \le (1+\epsilon)\phi \\
&+ \sqb{\frac{\sigma_{\F'}^2}{n}\cb{ \norm{\riesz[\psi]}^2_{L_2(\Pn)} \wedge 2\norm{\riesz[\psi]}_{L_2(\Pn)}\p{\norm{\hgamma^{\F}-\riesz[\psi]}_{L_2(\Pn)} + \frac{\epsilon \phi}{\rho} }}}^{1/2} \\
&\le (1+\epsilon + \epsilon') \phi \\ 
&+ \sqb{\frac{\sigma_{\F}^2}{(1-\eta) n}\cb{\norm{\riesz[\psi]}^2_{L_2(\Pn)} \wedge 2\norm{\riesz[\psi]}_{L_2(\Pn)}\norm{\hgamma^{\F}-\riesz[\psi]}_{L_2(\Pn)}}}^{1/2}, \\
(\epsilon')^2 
&= \frac{2\sigma_{\F'}^2\norm{\riesz[\psi]}_{L_2(\Pn)}\epsilon}{n \rho \phi}
\le \frac{2\sigma_{\F'}^2 \norm{\riesz[\psi]}_{L_2(\Pn)} \epsilon }{128\epsilon^{-1}\norm{\riesz[\psi]}_{L_2(\Pn)}\sigma_{\F}^2} = \frac{\epsilon^2}{64(1-\eta)}.
\end{aligned}
\end{equation}
This definition of $(\epsilon')^2$  equates the bracketed term involving $\phi$ and $(\epsilon')^2\phi^2$, 
so the second bound follows by the elementary inequality $\sqrt{a+b}\le \sqrt{a} + \sqrt{b}$.
To bound $(\epsilon')^2$, we've substituted in the denominator one of the lower bounds on $\rho \phi$ from Corollary~\ref{coro:weight-similarity}. 
We conclude the section by proving our lemma and corollaries.

\begin{proof}[Proof of Lemma~\ref{lemma:weight-similarity}]

The bulk of our proof will be devoted to bounding $\norm{\hg_{\F'}- \hg_{\F}}_{L_2(\Pn)}$ where
each $\hg_{\GG}$ is an approximate minimizer of the dual $d_{\GG}$.
We will consider $\hg_{\GG}$ satisfing $d_{\GG}(\hg_{\GG}) \le \min(d_{\GG}(\tilde g), \inf_g d_{\GG}(g) + \epsilon)$
for $\epsilon > 0$.  To simplify our notation, we will work with $d_n := (2/n)d_{\F}$, $d_n' := (2/n)d_{\F'}$, and $\epsilon_n = (2/n)\epsilon$,
and let \smash{$\hg = \hg_{\F}$} and \smash{$\hg' = \hg_{\F'}$} and \smash{$\sigma = \sigma_{\F}$} and \smash{$\sigma' = \sigma_{\F'}$}.

We define $\sigma'$ as we do because it allows us to write $d_n'(g)$ as the sum of $(\sigma'/\sigma)^2 d_n(g)$ and a small remainder.
Observe that  
\[ 
\frac{(\sigma')^2}{\sigma^2} =  \frac{1}{1-\eta} = \frac{\rho^2 n}{\rho^2 n - \sigma^2} = 1 + \frac{\sigma^2}{\rho^2 n - \sigma^2} =  1+\frac{\sigma^2}{\rho^2 n(1-\eta)} = 1+\frac{(\sigma')^2}{\rho^2 n}, \] 
so expanding $\norm{\cdot}_{\F'}^2 = \norm{\cdot}_{\F}^2 + \norm{\cdot}_{L_2(\Pn)}^2/\rho^2$ in the definition \eqref{eq:concrete-dual} of $d_n'$, 
\begin{equation*}
\begin{aligned} 
d_n'(g) 
&= \Pn g^2 - 2\Pn h(\cdot, g)  + ((\sigma')^2/n)(\norm{g}_{\F}^2 + \norm{g}_{L_2(\Pn)}^2/\rho^2) \\
&= (1+(\sigma')^2/(\rho^2 n)) \Pn g^2 - 2\Pn h(\cdot, g)  + ((\sigma')^2/n)\norm{g}_{\F}^2 \\
&= (\sigma'/\sigma)^2 \sqb{\Pn g^2 - 2\Pn h(\cdot, g) +  (\sigma^2/n)\norm{g}_{\F}^2} + 2[(\sigma'/\sigma)^2 - 1]\Pn h(\cdot, g) \\
&= (\sigma'/\sigma)^2 d_n(g) + 2[(\sigma'/\sigma)^2 - 1]\Pn h(\cdot, g).
\end{aligned} 
\end{equation*}
As $\hg'$ and $\hg$ approximately minimize $d_n'$ and $d_n' - 2[(\sigma'/\sigma)^2 - 1]\Pn h(\cdot, g)$,
\begin{align*}
&d_n'(\hat g')  \le d_n'(\hat g) + \epsilon_n, \\
&d_n'(\hat g) - 2[(\sigma'/\sigma)^2 - 1]\Pn h(\cdot, \hat g) \le d_n'(\hat g') - 2[(\sigma'/\sigma)^2 - 1]\Pn h(\cdot, \hat g') + \epsilon_n, \text{ and therefore } \\ 
&d_n'(\hat g')  \le d_n'(\hat g) + \epsilon_n \le d_n'(\hat g') + 2[(\sigma'/\sigma)^2 - 1]\Pn h(\cdot, \hat g - \hat g') + 2\epsilon_n. 
\end{align*}
This implies a bound on the suboptimality of $\hat g$,
\[ d_n'(\hat g) - d_n'(\hat g') \le 2[(\sigma'/\sigma)^2 - 1]\Pn h(\cdot, \hat g - \hat g') + 2\epsilon_n. \]
Furthermore, because $d_n'$ is $2(\sigma'/\sigma)^2$-strongly convex with respect to $\norm{\cdot}_{L_2(\Pn)}$,
\begin{align*}
(1/2)(\sigma'/\sigma)^2 \norm{\hat g - \hat g'}_{L_2(\Pn)}^2 
&\le (1/2)d_n'(\hat g) + (1/2)d_n'(\hat g') - d_n'((\hat g + \hat g')/2) \\
&= (1/2)[d_n'(\hat g) - d_n'(\hat g')] + [d_n'(\hat g') - d_n'((\hat g + \hat g')/2)] \\
&\le [(\sigma'/\sigma)^2 - 1]\Pn h(\cdot, \hat g - \hat g') + 2\epsilon_n. 
\end{align*}
Here we've used the suboptimality bound above and our assumption that $\hat g'$ approximately
minimizes $d_n'$. As $(\sigma/\sigma')^{2}[(\sigma'/\sigma)^2 - 1] = 1- (\sigma/\sigma')^2 = \eta$,
it follows that
\[ \norm{\hat g - \hat g'}_{L_2(\Pn)}^2 \le 2\eta  \Pn h(\cdot, \hat g - \hat g') + 4 (\sigma/\sigma')^2\epsilon_n, \]
and as $\sigma/\sigma' \le 1$,
\begin{equation}
\label{eq:two-gs-l2-bound}
\begin{aligned}
&\norm{\hat g - \hat g'}_{L_2(\Pn)}^2/(2\eta) - (2/\eta)\epsilon_n\\ 
&\le \Pn h(\cdot, \hat g - \hat g')  \\
&=   \sqb{ \Pn[ h(\cdot, \hat g - \hat g') - \gamma (\hat g - \hat g')] 
		   			         	+ \Pn \gamma (\hat g - \hat g') }  \\
&\le I_{h,\F'}(\gamma)\norm{\hat g - \hat g'}_{\F'} + \norm{\gamma}_{L_2(\Pn)} \norm{\hat g - \hat g'}_{L_2(\Pn)} \\ 
&= I_{h,\F'}(\gamma)\sqrt{\norm{\hat g - \hat g'}_{\F}^2 + \rho^{-2}\norm{\hat g - \hat g'}_{L_2(\Pn)}^2} + \norm{\gamma}_{L_2(\Pn)} \norm{\hat g - \hat g'}_{L_2(\Pn)} \\
&\le I_{h,\F'}(\gamma)\norm{\hat g - \hat g'}_{\F} + ( \rho^{-1} I_{h,\F'}(\gamma) + \norm{\gamma}_{L_2(\Pn)}) \norm{\hat g - \hat g'}_{L_2(\Pn)}.
\end{aligned}
\end{equation}

We eliminate the dependence of this bound on $\hg$ by substituting a bound on $\norm{\hat g - \hat g'}_{\F}$.
By the triangle inequality, $\norm{\hat g - \hat g'}_{\F} \le \norm{\hat g - \tilde g}_{\F} + \norm{\hat g' - \tilde g}_{\F}$,
and as $d_n(\hg) \le d_n(\tilde g)$, our assumption about $\excess_{\F}$
implies that the first term is bounded by $\alpha$ and the second by $\alpha + \sqrt{\xi n/\sigma^2}$ if $d_n(\hg') - d_n(\gapprox) \le \xi$.
To establish a bound like this, we use the similarity of
$d_n$ and $d_n'$ like we did above. As $d_n(g) = q d_n'(g) - 2\eta \Pn h(\cdot,g)$ for $q=(\sigma/\sigma')^2$
and $d_n'(\hg') \le d_n'(\gapprox)$, either $d_n(\hg') \le d_n(\gapprox)$ or
\[ \underset{d_n(\gapprox)}{ q d_n'(\gapprox) - 2\eta \Pn h(\cdot, \gapprox)}  \le
   \underset{d_n(\hg')}{q d_n'(\hg') - 2\eta \Pn h(\cdot,\hg')} \le    
   q d_n'(\gapprox) - 2\eta \Pn h(\cdot,\hg'). \]
Consequently, $d_n(\hg') - d_n(\gapprox) \le \xi$ for $\xi= 2\eta\max(0, \Pn h(\cdot, \gapprox - \hg'))$.
It follows that $\norm{\hg' - \tilde g}_{\F} \le \alpha + \sqrt{\xi n/\sigma^2}$,
and the bound remains valid if we subsititute an upper bound on $\sqrt{\xi}$.
We derive an upper bound as follows.
\begin{align*} 
\abs{\Pn h(\cdot, \gapprox - \hg')} 
&\le \abs{\Pn h(\cdot, \gapprox - \hg') - \gamma (\gapprox - \hg')} + \abs{\Pn \gamma (\gapprox - \hg')} \\
&\le \norm{\hg' - \gapprox}_{\F'} I_{h,\F'}(\gamma) + \norm{\gamma}_{L_2(\Pn)} \norm{\hg'- \gapprox}_{L_2(\Pn)} \\
&\le \norm{\hg' - \gapprox}_{\F} I_{h,\F'}(\gamma) + (\rho^{-1}I_{h,\F'}(\gamma) + \norm{\gamma}_{L_2(\Pn)}) \norm{\hg'- \gapprox}_{L_2(\Pn)},
\end{align*}
so 
\[ \norm{\hg' - \tilde g}_{\F} \le \alpha 
+ \sqrt{2\eta n/\sigma^2}\sqb{\norm{\hg' - \gapprox}_{\F}^{1/2} I_{h,\F'}^{1/2}(\gamma) 
	   + \sqrt{\rho^{-1}I_{h,\F'}(\gamma) + \norm{\gamma}_{L_2(\Pn)}} \norm{\hg'- \gapprox}_{L_2(\Pn)}^{1/2}}.
\]
Here $\sqrt{2\eta n/\sigma^2} = \sqrt{2/\rho^2}$, and this 
is a quadratic inequality $y^2 \le  by + c$ for 
\begin{align*} 
y&=\norm{\hg' - \tilde g}_{\F}^{1/2},\\
b&=\sqrt{2} \rho^{-1} I_{h,\F'}^{1/2}(\gamma),\\
c&= \alpha + c_{\star}\norm{\hg'- \gapprox}_{L_2(\Pn)}^{1/2}, 
\quad  c_{\star}^2 = 2\rho^{-3}I_{h,\F'}(\gamma) + 2\rho^{-2} \norm{\gamma}_{L_2(\Pn)}. 
\end{align*}
Its solutions satisfy $y^2 \le (b + \sqrt{b^2+4c})^2/4 \le b^2 + 4c$, so
\begin{align*}
\norm{\hg' - \tilde g}_{\F} 
&\le 2\rho^{-2} I_{h,\F'}(\gamma) + 4\alpha + 4 c_{\star} \p{\norm{\hg-\gapprox}_{L_2(\Pn)}^{1/2} + \norm{\hg'-\hg}_{L_2(\Pn)}^{1/2}}.
\end{align*}
Substituting this in \eqref{eq:two-gs-l2-bound},
\begin{align*}
&\norm{\hat g - \hat g'}_{L_2(\Pn)}^2/(2\eta) \\ 
&\le \sqb{\rho^{-1}I_{h,\F'}(\gamma) + \norm{\gamma}_{L_2(\Pn)}} \norm{\hg' - \hg}_{L_2(\Pn)} \\
&+ \sqb{(2/\eta)\epsilon_n + \p{4\alpha  + 2\rho^{-2}I_{h,\F'}(\gamma) 
      + 4c_{\star} \norm{\hg - \gapprox}_{L_2(\Pn)}^{1/2}} I_{h,\F'}(\gamma)} \\
&+ \sqb{4c_{\star} I_{h,\F'}(\gamma)} \norm{\hg'-\hg}_{L_2(\Pn)}^{1/2}.
\end{align*}
This is $ax^2 \le bx + c + dx^{1/2}$ for $x=\norm{\hat g - \hat g'}_{L_2(\Pn)}$, $a=1/(2\eta)$, and successive bracketed factors $b$,$c$,$d$.
This implies that $ax^2 \le 3\max(bx, c, dx^{1/2})$ and therefore that $x \le \max(3b/a, (3c/a)^{1/2}, (3d/a)^{2/3})$. 
Expanding $a$,$b$,$c$,$d$,
\begin{align*}
&\norm{\hat g - \hat g'}_{L_2(\Pn)} 
\le 6\eta\sqb{\rho^{-1}I_{h,\F'}(\gamma) + \norm{\gamma}_{L_2(\Pn)}} \\
&\vee   \sqb{12\epsilon_n + 6\eta\p{4\alpha  + 2\rho^{-2}I_{h,\F'}(\gamma) 
      + 4c_{\star} \norm{\hg - \gapprox}_{L_2(\Pn)}^{1/2}} I_{h,\F'}(\gamma) }^{1/2} \\
&\vee   \sqb{24 \eta c_{\star} I_{h,\F'}(\gamma)}^{2/3}. 
\end{align*}
This bound is satisfied with $\hgamma$ and $\hgamma'$ in place of $\hg$ and $\hg'$ and $\epsilon_n=0$, 
as $\hgamma_i=\lim_{j\to\infty}\hg_j(Z_i)$ and $\hgamma_i'=\lim_{j \to \infty}\hg_j'(Z_i)$
for approximate minimizers $\hg_j$ and $\hg_j'$ satisying our conditions for $\epsilon^j \to 0$. 
We derive our claimed bound by multiplying by $\rho$.
\end{proof}

\begin{proof}[Proof of Corollary~\ref{coro:weight-similarity}]
Throughout this proof, we will write $\sigma$ meaning $\sigma_\F$.
We work with the bound from Lemma~\ref{lemma:weight-similarity},
which we relax by substituting the upper bound $\phi$ for $I_{h,\F}(\gamma)$.
Then, within each branch of the maximum, we will allocate to each term in our bound a fraction of $\epsilon \phi$.

Consider the first branch. Recalling that $\eta = \sigma^2/(\rho^2 n)$,
\begin{align*}
&6\eta \phi \le \epsilon_{1,1} \phi 
    && \text{ if } 6\sigma^2/n \le \epsilon_{1,1} \rho^2, \\
&6\eta\rho\norm{\gamma}_{L_2(\Pn)} \le \epsilon_{1,2}  \phi
    && \text{ if } 6\norm{\gamma}_{L_2(\Pn)}\sigma^2/n \le \epsilon_{1,2} \rho \phi
\end{align*}
It is bounded by $\epsilon_1 \phi$ for $\epsilon_1 = \epsilon_{1,1}+\epsilon_{1,2}$ if these conditions are satisfied.

Now consider the second branch. $(a\phi)^{1/2} \le \epsilon_2 \phi$ if
$a \le \epsilon_2^2 \phi$, so we will show that each term $a_j$ in $a$ satifies
$a_j \le \epsilon_{2,j}^2 \phi$. It will follow that their sum satisfies
$a \le \epsilon_2^2 \phi$ for $\epsilon_2^2 = \sum_{j}\epsilon_{2,j}^2$,
and therefore that the second branch is bounded by $\epsilon_2 \phi$.
We now bound each term $a_j$.
\begin{align*} 
&24\eta\rho^2\alpha \le \epsilon_{2,1}^2 \phi
    && \text{ if } 24\alpha \sigma^2/n \le \epsilon_{2,1}^2 \phi, \\
&12\eta \phi \le \epsilon_{2,2}^2 \phi
    && \text{ if } 12\sigma^2/n \le \epsilon_{2,2}^2 \rho^2, \\
&24\eta \rho^2 c_{\star} \norm{\hriesz_{\F} - \gapprox}_{L_2(\Pn)}^{1/2} \le \epsilon_{2,3}^2  \phi
    && \text{ if }\ \  48\norm{\hriesz_{\F} - \gapprox}_{L_2(\Pn)}^{1/2}\sigma^2/n \le \epsilon_{2,3}^2 \rho^{3/2} \phi^{1/2},  \\
& && \text{and }    48\norm{\hriesz_{\F} - \gapprox}_{L_2(\Pn)}^{1/2}\norm{\gamma}_{L_2(\Pn)}^{1/2}\sigma^2/n \le \epsilon_{2,3}^2 \rho \phi.
\end{align*}
For the third term, we've used the bound $c_{\star} \le \max(4\rho^{-3}\phi,\ 4\rho^{-2}\norm{\gamma}_{L_2(\Pn)})^{1/2}$.

Finally, consider the third branch. $(a\phi)^{2/3} \le \epsilon_3 \phi$ if
$a^2 \le \epsilon_3^3 \phi$, so we will show that 
we will show that the two terms $a_j^2$ in $a^2=48^2\eta^2\rho^3c_{\star}^2$ 
satisfy $a_j^2 \le \epsilon_{3,j}^3 \phi$ for $\epsilon_3^3 = \sum_{j}\epsilon_{3,j}^3$.
It will follow that the third term is bounded by $\epsilon_3 \phi$.
We now bound these two terms.
\begin{align*}
& 24^2 \eta^2 \rho^3 \cdot 2 \rho^{-3} \phi \le \epsilon_{3,1}^3 \phi
&&\text{ if } 1152\sigma^4/n^2 \le \epsilon_{3,1}^3 \rho^4, \\
& 24^2 \eta^2 \rho^3 \cdot 2 \rho^{-2} \norm{\gamma}_{L_2(\Pn)} \le \epsilon_{3,2}^3 \phi
&&\text{ if } 1152\norm{\gamma}_{L_2(\Pn)}\sigma^4/n^2 \le \epsilon_{3,2}^3 \rho^3 \phi. 
\end{align*}

Bounding the maximum over the three branches by the maximum of our bounds,
$\rho\norm{\hriesz_{\F} - \hriesz_{\F'}}_{L_2(\Pn)} \le \epsilon \phi$
for $\epsilon=\max_{i \in 1 \ldots 3}\epsilon_{i}$ if 
\begin{align*}
\rho^2 &\ge \p{\epsilon_{1,1}^{-1}6 \vee \epsilon_{2,2}^{-2} 12 \vee \epsilon_{3,1}^{-3/2}1152^{1/2} } \sigma^2/n, \\
\phi      &\ge \p{\epsilon_{2,1}^{-2} 24\alpha } \sigma^2/n, \\
\rho \phi &\ge  \p{\epsilon_{1,2}^{-1} 6 \norm{\riesz}_{L_2(\Pn)} 
    			\vee \epsilon_{2,3}^{-2} 48\norm{\hriesz_{\F} - \gapprox}_{L_2(\Pn)}^{1/2} \norm{\riesz}_{L_2(\Pn)}^{1/2}} \sigma^2/n, \\
\rho^3 \phi &\ge \p{\epsilon_{2,3}^{-4} 48^2 \norm{\hriesz_{\F} - \gapprox}_{L_2(\Pn)} \vee \epsilon_{3,2}^{-3} 1152\norm{\gamma}_{L_2(\Pn)}} \sigma^4/n^2.
\end{align*}

To simplify these conditions, first set $\epsilon=\epsilon_1=\epsilon_2=\epsilon_3$ and equally divide
contributions to the $\epsilon_{i}$ between the $\epsilon_{i,j}$ respectively,
taking $\epsilon_{1,j}=\epsilon/2$, $\epsilon_{2,j}^2 = \epsilon^2/3$, and $\epsilon_{3,j}^3 = \epsilon^3/2$.
\begin{align*}
\rho^2 &\ge \p{\epsilon^{-1}12 \vee \epsilon^{-2} 36 \vee \epsilon^{-3/2}48 } \sigma^2/n, \\
\phi      &\ge \p{\epsilon^{-2} 72 \alpha } \sigma^2/n, \\
\rho \phi &\ge  \p{\epsilon^{-1} 12 \norm{\riesz}_{L_2(\Pn)} 
    			\vee \epsilon^{-2} 144\norm{\hriesz_{\F} - \gapprox}_{L_2(\Pn)}^{1/2} \norm{\riesz}_{L_2(\Pn)}^{1/2}} \sigma^2/n, \\
\rho^3 \phi &\ge \p{\epsilon^{-4} 144^2 \norm{\hriesz_{\F} - \gapprox}_{L_2(\Pn)} \vee \epsilon^{-3} 48^2 \norm{\gamma}_{L_2(\Pn)}} \sigma^4/n^2.
\end{align*}
In our lemma statement, we increase the first lower bound on $\rho \phi$ to \smash{$\epsilon^{-1} 64 \norm{\riesz}_{L_2(\Pn)}$}
and then drop the second lower bound on \smash{$\rho^3 \phi$.} The dropped bound is implied by multiplying 
this lower bound on $\rho \phi$ and our lower bound $\epsilon^{-2} 36$ on $\rho^2$.
\end{proof}

\begin{proof}[Proof of Corollary~\ref{coro:weight-similarity-simplified}]
Throughout this proof, we will write $\sigma$ meaning $\sigma_{\F}$.
We will choose $\rho$ so that the bounds \eqref{eq:rho-lb-implicit} are satisfied.
As $\phi \ge \epsilon^{-2}72\alpha_\phi \sigma^2/n$,
we have the lower bounds $\rho^3 \phi \ge \rho^3 \epsilon^{-2}72\alpha_\phi \sigma^2/n$
and $\rho \phi \ge  \rho \epsilon^{-2}72\alpha_\phi \sigma^2/n$. 
These exceed the corresponding bounds
from Corollary~\ref{coro:weight-similarity} as follows. 
\begin{align*} 
&\rho^3 \epsilon^{-2}72\alpha_\phi \sigma^2/n \ge \epsilon^{-4} 144^2 \norm{\hriesz_{\F} - \gapprox}_{L_2(\Pn)} \sigma^4/n^2 \\
&\qquad \text{ if } \quad  \rho^3 \ge \epsilon^{-2} (144^2/72) \norm{\hriesz_{\F} - \tilde g}_{L_2(\Pn)}\sigma^2/(\alpha_\phi n). \\
&\rho \epsilon^{-2}72\alpha_\phi \sigma^2/n \ge \epsilon^{-1} 64 \norm{\riesz}_{L_2(\Pn)} \sigma^2/n \\
&\qquad \text{ if } \quad  \rho   \ge \epsilon (72/64) \norm{\riesz}_{L_2(\Pn)}/\alpha_\phi, \\
&\rho \epsilon^{-2}72\alpha_\phi \sigma^2/n \ge \epsilon^{-2} 144\norm{\hriesz_{\F} - \gapprox}_{L_2(\Pn)}^{1/2} \norm{\riesz}_{L_2(\Pn)}^{1/2} \sigma^2/n \\
&\qquad \text{ if } \quad \rho   \ge (144/72) \norm{\hriesz_{\F} - \gapprox}_{L_2(\Pn)}^{1/2}\norm{\riesz}_{L_2(\Pn)}^{1/2} /\alpha_\phi. 
\end{align*}
Simplifying fractions and substituting the upper bound $\alpha_\phi^{1/2}s \ge \norm{\hriesz_{\F} - \tilde g}_{L_2(\Pn)}$
in the numerator, these bounds hold if
\begin{align*} 
\rho^3 &\ge \epsilon^{-2} 288 s \sigma^2 / (\alpha_\phi^{1/2} n), \\
\rho   &\ge \epsilon (8/9) \norm{\riesz}_{L_2(\Pn)}/\alpha_\phi, \\
\rho   &\ge 2 s^{1/2} \norm{\riesz}_{L_2(\Pn)}^{1/2} /\alpha_\phi^{3/4}, \\
\end{align*} 
And substituting the lower bound $\alpha_\phi \ge n s^2 / \sigma^2$ in the denominators, these hold if
\begin{align*} 
\rho   &\ge \epsilon^{-2/3} 288^{1/3}  \sigma / n^{1/2} && \text{ or equivalently }  \rho^3 \ge \epsilon^{-2} 288  \sigma^3 / n^{3/2}, \\
\rho   &\ge \epsilon (8/9) \norm{\riesz}_{L_2(\Pn)} \sigma^2/(s^2 n), &&\\
\rho   &\ge 2 \norm{\riesz}_{L_2(\Pn)}^{1/2} \sigma^{3/2}/(s n^{3/4}), && \\
\end{align*} 
As $\phi \ge \epsilon^{-2}72\alpha \sigma^2/n$ by construction, 
the bounds \eqref{eq:rho-lb-implicit} from Corollary~\ref{coro:weight-similarity}
hold if the bounds above and the explicit lower bounds on $\rho^2$ from \eqref{eq:rho-lb-implicit} do.
That is, if 
\begin{align*} 
\rho &\ge  \p{\epsilon^{-1/2}\sqrt{12} \vee \epsilon^{-2/3} 288^{1/3} \vee \epsilon^{-3/4}\sqrt{48} \vee \epsilon^{-1} \sqrt{36} }\sigma/n^{1/2} \\
     &\vee \epsilon (8/9) \norm{\riesz}_{L_2(\Pn)} \sigma^2/(s^2 n) \\
     &\vee  2 \norm{\riesz}_{L_2(\Pn)}^{1/2} \sigma^{3/2}/(s n^{3/4}).
\end{align*}

The first term in this lower bound will be $6\epsilon^{-1}\sigma/\sqrt{n}$ 
if $\epsilon \le (36/48)^2 \wedge 36^{3/2}/288 \wedge 36/12 = (36/48)^2 = (3/4)^2$. 
And when this holds, $(1/2)\norm{\riesz}_{L_2(\Pn)}\sigma^2/(s^2 n)$ exceeds the second term.
This yields the simplified bound 
\begin{align*} 
\rho &\ge  \epsilon^{-1} 6 \sigma/n^{1/2} \vee (1/2) \norm{\riesz}_{L_2(\Pn)} \sigma^2/(s^2 n) \vee  2 \norm{\riesz}_{L_2(\Pn)}^{1/2} \sigma^{3/2}/(s n^{3/4}).
\end{align*}
In our stated bound, we drop the third term.
It is not maximal, as it is smaller than the geometric mean of the first two,
which is $\epsilon^{-1/2} \sqrt{3} \norm{\riesz}_{L_2(\Pn)}^{1/2} \sigma^{3/2}/(s n^{3/4})$
with $\epsilon^{-1/2} \sqrt{3} \ge (4/3)\sqrt{3} \ge 2$.

\end{proof}

\subsection{Putting it all together}
\label{sec:putting-it-all-together}
The assumptions of Lemma~\ref{lemma:consistency-deterministic} imply the assumption of
Lemma~\ref{lemma:weight-similarity} concerning $\excess_{\F}$ with the same values of $\tilde g$ and $\alpha$.
Thus, on the intersection of an event of probability $1-\delta$, on which our noise term bound \eqref{eq:noise-term-deviation} holds,
and an event on which the ratio process bounds \eqref{eq:ratio-process-bounds} hold for some $r > 0$,

\begin{equation}
\label{eq:asymptotic-linearity-abstract}
\begin{aligned}
&\abs{\hpsi - \tilde\psi(m) - \sum_{i=1}^n \gapprox(Z_i)(Y_i - m(Z_i))} \le \delta^{-1/2} n^{-1/2} \norm{v}_{\infty} \norm{\hriesz - \gapprox}_{L_2(\Pn)} \\
&+ \norm{\hm-m}_{\F'_{\rho}}\sqb{1+\frac{2\epsilon}{\sqrt{1-\eta_{\rho}}}}  \phi(\rho) \\
&+ \norm{\hm-m}_{\F'_{\rho}}\sqb{\frac{2\sigma^2}{(1-\eta_{\rho}) n}\cb{\norm{\riesz[\psi]}_{L_2(\Pn)}^2 \wedge \norm{\riesz[\psi]}_{L_2(\Pn)}\norm{\hgamma-\riesz[\psi]}_{L_2(\Pn)}}}^{1/2} \\
&\text{for \quad} \eta_\rho = \sigma^2/(\rho^2 n), \\ 
&\text{\hphantom{for \quad}} \F'_{\rho} = \set{f : \norm{f}_{\F}^2 + \rho^{-2}\norm{f}_{L_2(\Pn)}^2 \le 1}, \\
&\text{\hphantom{for \quad}} \phi(\rho) \ge I_{h,\F_{\rho}'}(\riesz[\psi]) \vee \epsilon^{-2} 72 \alpha_{\phi} \sigma_{\F}^2/n.
\end{aligned}
\end{equation}
Here we've used the bounds \eqref{eq:noise-term-deviation} and \eqref{eq:bias-bound-concrete}
on the noise and bias terms in our error decomposition \eqref{eq:error-decomp},
substituting $2\epsilon/\sqrt{1-\eta_{\rho}} \ge \epsilon + \epsilon'$ into \eqref{eq:bias-bound-concrete}.
It holds, with $\epsilon \le 9/16$ and $\alpha_{\phi} = \alpha \vee 2 \eta_M r_{\phi}^2 n / \sigma^2$ for $\alpha$ as in Lemma~\ref{lemma:consistency-deterministic},
when $\rho$ satisfies the lower bound of Corollary~\ref{coro:weight-similarity-simplified} 
with \smash{$s^2=2\eta_Mr_{\phi}^2$} for $r_{\phi} \ge r$, as these are sufficient conditions for the bound \eqref{eq:bias-bound-concrete} to hold
as a consequence of Lemma~\ref{lemma:abstract-finite-sample} and Corollary~\ref{coro:weight-similarity-simplified}.

To complete our proof of Theorem~\ref{theo:simple-rate}, 
we show in Section~\ref{sec:ratio-process-bounds} that the ratio process bounds \eqref{eq:ratio-process-bounds} are satisfied with high probability,
show in Section~\ref{sec:bounding-I} that a certain function $\phi$ bounds $I_{h,\F_{\rho}'}(\riesz[\psi])$ with high probability,
and \ldots in Section \ldots.
In the first two steps, we will use the assumption that $\F$ is uniformly bounded, giving bounds that depend on 
$M_{\infty}(\F)=\sup_{f\in\F}\norm{f}_{\infty}$. 
After we have concluded our proof, in Section~\ref{sec:not-uniformly-bounded}, we will briefly discuss techniques for relaxing this assumption.

\subsubsection{Ratio Process Bounds}
\label{sec:ratio-process-bounds}
Our first bound in \eqref{eq:ratio-process-bounds}, 
a uniform lower bound on the ratio $\Pn f^2/ \P f^2$, holds under a wide range of conditions. These are summarized in \citet{mendelson2017extending}, where
Corollary 3.6 addresses the uniformly bounded case we consider here. 
It establishes that for any $\eta_Q < 1$,
the bound $\Pn f^2 \ge \eta_Q \P f^2$ holds for all $f \in \F$
satisfying $\P f^2 \ge r^2$ with probability $1-2\exp(-c_2nr^2/M_{\infty}^2(\F))$ if 
$R_n(\F_{c_0r}) \le c_1 r^2/M_{\infty}(\F)$ for constants $c_0,c_1,c_2$ that depend only on $\eta_Q$.
And by a scaling argument of \citet*[Lemmas 3.2, 3.4]{bartlett2005local}, 
there is a unique positive $r_Q$ that satisfies the fixed point condition $R_n(\F_{c_0r}) \le c_1 r^2/M_{\infty}(\F)$ 
with equality, and it is satisfied for all $r \ge r_Q$.

For $g=\gapprox$ when $Q=\P$ and $g = \riesz[\psi]$ when $Q=\Pn$, our second bound in \eqref{eq:ratio-process-bounds} is 
on the supremum of the mean-zero empirical process indexed by the 
image $h_{g}(\cdot, \F_r)$ of $\F_r$ under the function $h_{g}(z,f) = h(z,f) - g(z)f(z)$.
By Markov's inequality, this is bounded by 
$\delta^{-1} \E \sup_{h \in h_{g}(\cdot,\F_r)}\abs{(\Pn-\P) h}$ with probability $1-\delta$.
Furthermore, if we prefer to state our bounds in terms of Rademacher complexities,
via symmetrization this is bounded by $2\delta^{-1}R_n(h_{g}(\cdot,\F_r))$ 
\citep[Lemma 2.3.1]{vandervaart-wellner1996:weak-convergence}. 
By the aforementioned scaling argument, 
there is a unique positive $r_M$ satisfying the fixed point condition $2\delta^{-1}R_n(h_{g}(\cdot,\F_r)) \le \eta_M r^2$
with equality, and it is satisfied for all $r \ge r_M$.

In summary, our ratio process bounds \eqref{eq:ratio-process-bounds} hold 
on an event of probability $1-\delta-2\exp(-c_2nr^2/M_{\infty}^2(\F))$ 
for $r \ge r_Q \vee r_M$ where 
\begin{align*}
r_Q &= \inf\set{ r > 0 : R_n(\F_{c_0r}) \le c_1 r^2/M_{\infty}(\F)}, & \\
r_M &= \begin{cases}
    \inf \set{ r > 0 : R_n(h_{\gapprox}(\cdot,\F_r)\ \le \ \delta \eta_M r^2/2} & \text{ for }\quad Q=\P, \\
    \inf \set{ r > 0 : R_n(h_{\riesz[\psi]}(\cdot,\F_r) \le \delta \eta_M r^2/2} & \text{ for }\quad Q=\Pn. \end{cases}
\end{align*}
Here $\eta_Q \in [0,1)$ and $\eta_M > 0$ are arbitrary
and $c_0 \ldots c_2$ are constants dependening on only on $\eta_Q$.
It follows that for such $r$,
the bound \eqref{eq:asymptotic-linearity-abstract} holds with probability $1-2\delta-2\exp(-c_2nr_Q^2/M_{\infty}^2(\F))$
for all $\rho$ satisfying the lower bound from Corollary~\ref{coro:weight-similarity-simplified}.

\subsubsection{Bounding $I_{h,\F_{\rho}'}(\riesz[\psi])$} 
\label{sec:bounding-I}

To bound $I_{h,\F'_{\rho}}(\riesz[\psi])$, we first observe that it is smaller than $I_{h,\GG}(\riesz[\psi])$ 
for any $\GG \supseteq \F'_{\rho}$. As $\F'_{\rho} \subseteq \set{ f \in \F : \Pn f^2 \le \rho^2}$, 
it is contained in $\F_{\sqrt{2}\rho} = \set{ f \in \F : \P f^2 \le 2\rho^2}$ 
for $\rho^2 \ge 20M_{\infty}(\F)R_n(\F_{\rho}) + 26M_{\infty}^2(\F)\log(1/\delta')/n$ on an event of probability $1-\delta'$ \citep[Lemma 3.6]{bartlett2005local}.
Setting $\log(1/\delta') = (20/26)nR_n(\F_{\rho})/M_{\infty}(\F)$, the two terms in this lower bound on $\rho^2$ are equal,
so this containment holds for $\rho^2 \ge 40M_{\infty}(\F)R_n(\F_{\rho})$
on an event of probability $1-\exp\set{- (20/26)nR_n(\F_{\rho})/M_{\infty}(\F)}$.
Using a constant $c_1 \le 1/40$ in the fixed point condition $R_n(\F_{c_0r}) \le c_1 r^2/M_{\infty}(\F)$ 
in the previous section, this condition on $\rho$ is satisfied for $\rho \ge c_0 r_Q$, and for such $\rho$,
$R_n(\F_{\rho}) \ge R_n(\F_{c_0 r_Q}) = c_1 r_Q^2 / M_{\infty}(\F)$, so the probability of this event is
at least \smash{$1-\exp\set{- (20/26)nc_1 r_Q^2/M_{\infty}^2(\F)}$}. By the union bound, it follows
that this containment and 
\eqref{eq:asymptotic-linearity-abstract} hold on an event of probability \smash{$1-2\delta-3\exp(-c_2nr_Q^2/M_{\infty}^2(\F))$},
taking $c_2$ to be no larger than $(20/26)c_1$.

On the intersection of this event and the probability $1-\delta$ event on which Markov's inequality implies 
$I_{h,\F_{\sqrt{2}\rho}}(\riesz[\psi]) \le \delta^{-1}\E I_{h,\F_{\sqrt{2}\rho}}(\riesz[\psi])$,
it follows that $I_{h,\F'_{\rho}}(\riesz[\psi]) \le  \delta^{-1}\E I_{h,\F_{\sqrt{2}\rho}}(\riesz[\psi])$
and a variant of \eqref{eq:asymptotic-linearity-abstract} in which 
$I_{h,\F'_{\rho}}(\riesz[\psi])$ is replaced with the upper bound \smash{$\delta^{-1}\E I_{h,\F_{\sqrt{2}\rho}}(\riesz[\psi])$} holds
 for $\rho$ equal to or exceeding both $c_0 r_Q$ and the lower bound from Corollary~\ref{coro:weight-similarity-simplified}.
We use a deterministic variant of the latter in which $\norm{\riesz[\psi]}_{L_2(\Pn)}$ is replaced with the probability $1-\delta$
Markov's inequality bound \smash{$\delta^{-1/2} \norm{\riesz[\psi]}_{L_2(\P)}$.} Recalling that 
we take $s^2=2\eta_M r_{\phi}^2$ in Corollary~\ref{coro:weight-similarity-simplified}, our bound on $\rho$ is
\begin{equation}
\label{eq:rho-lb}
\rho \ge  c_0 r_Q \vee \frac{6 \epsilon^{-1} \sigma}{n^{1/2}} 
     \vee \frac{\norm{\riesz[\psi]}_{L_2(\P)} \sigma^2}{4\delta^{1/2} \eta_M r_\phi^2 n}.
\end{equation}
The intersection of these events has probability at least
$1-4\delta-3\exp(-c_2nr_Q^2/M_{\infty}^2(\F))$ by the union bound.

\subsubsection{A concrete bound}
We state a bound summarizing the results above. Let
$\gapprox = \argmin_{g} \norm{\riesz[\psi] - g}_{L_2(Q)}^2 + (\sigma^2/n)\norm{g}_{\F}^2$. With probability $1-4\delta-3\exp(-c_2nr_Q^2/M_{\infty}^2(\F))$, 
\begin{equation}
\label{eq:asymptotic-linearity-concrete}
\begin{aligned}
&\norm{\hriesz - \gapprox}_{L_2(\Pn)}^2 \le 2\alpha r^2 \quad \text{ for } \alpha = 3\p{\norm{\gapprox}_{\ff} + r^2 n/\sigma^2} \vee 4, \\
&\sqrt{n}\abs{\hpsi - \tilde\psi(m) - \sum_{i=1}^n \gapprox(Z_i)(Y_i - m(Z_i))} \le \delta^{-1/2} \norm{v}_{\infty} \norm{\hriesz - \gapprox}_{L_2(\Pn)} \\
&+ \norm{\hm-m}_{\F'_{\rho}}\sqrt{n}\phi(\rho) \sqb{1+\frac{2\epsilon}{\sqrt{1-\epsilon^2/36}}}  \\
&+ \norm{\hm-m}_{\F'_{\rho}}\sqb{\frac{2\sigma^2}{\set{1-\epsilon^2/36}}\cb{\norm{\riesz[\psi]}_{L_2(\Pn)}^2 \wedge \norm{\riesz[\psi]}_{L_2(\Pn)}\norm{\hgamma-\riesz[\psi]}_{L_2(\Pn)}}}^{1/2}, \\
& \text{ \quad with \quad }            r = r_Q \vee r_M, \\
& \text{\hphantom{ \quad with \quad }} r_Q = \inf\set{ r > 0 : R_n(\F_{c_0r}) \le c_1 r^2/M_{\infty}(\F)}, \\
& \text{\hphantom{ \quad with \quad }} r_M = \begin{cases} \inf\set{ r > 0 : R_n(h_{\gapprox}(\cdot,\F_r)  \ \le \ \delta r^2/2} & \text{ for } \quad Q=\P,  \\
							   \inf\set{ r > 0 : R_n(h_{\riesz[\psi]}(\cdot,\F_r) \le \delta r^2/2} & \text{ for } \quad Q=\Pn, \end{cases}  \\
& \text{\hphantom{ \quad with \quad }} \phi(\rho) = \delta^{-1}\E I_{h,\F_{\sqrt{2}\rho}}(\riesz[\psi]) \vee \epsilon^{-2} 72 \alpha_{\phi} \sigma^2/n, \\
& \text{\hphantom{ \quad with \quad }} \alpha_{\phi} = \alpha \vee 2nr_{\phi}^2/\sigma^2 \quad \text{ for any } \quad r_{\phi} \ge r \\
& \text{ \quad for any \quad } \rho \ge c_0 r_Q \vee \frac{6 \sigma}{\epsilon \sqrt{n}} 
     \vee \frac{\norm{\riesz[\psi]}_{L_2(\P)} \sigma^2}{4\sqrt{\delta}n r_\phi^2}. 
\end{aligned}
\end{equation}
Here $h_{\gamma}(z,f) = h(z,f) - \gamma(z)f(z)$, $c_0 \ldots c_2$ are universal constants, and $\epsilon \le 9/16$.
To derive this bound, we have taken $\eta_Q=1/2$ and $\eta_M=1$, 
used Lemma~\ref{lemma:abstract-finite-sample} to bound \smash{$\norm{\hriesz - \gapprox}_{L_2(\Pn)}$},
and substituted into \eqref{eq:asymptotic-linearity-abstract} the bounds discussed in the subsections above,
as well as the bound \smash{$\epsilon^2/36 \ge \eta_{\rho}$} implied by the condition $\rho \ge 6 \epsilon^{-1} \sigma / n^{1/2}$.

To simplify our lower bound on $\rho$, we set \smash{$r_\phi^2 = \norm{\riesz[\psi]}_{L_2(\P)} \sigma^2 / (4c_0\sqrt{\delta}n r)$}
to equate $c_0 r$ and \smash{$\norm{\riesz[\psi]}_{L_2(\P)} \sigma^2 / (4\sqrt{\delta}n r_\phi^2)$}. Taking $c_0 \ge 1$, 
this satisfies our assumption $r \ge r_{\phi}$, and by design our lower bound on $\rho$ simplifies 
to $c_0 r \vee  6 \sigma/(\epsilon \sqrt{n})$. For this $r_{\phi}$,
\smash{$\alpha_{\phi}= 3\p{\norm{\gapprox}_{\ff} + r^2 n/\sigma^2} \vee $} \smash{$ \norm{\riesz[\psi]}_{L_2(\P)} / (2c_0\sqrt{\delta} r) \vee 4$},
so the bound above holds for \smash{$\rho \ge c_0 r \vee  6 \sigma/(\epsilon \sqrt{n})$} and
\[
\phi(\rho) = \delta^{-1}\E I_{h,\F_{\sqrt{2}\rho}}(\riesz[\psi]) 
	  \vee \frac{216}{\epsilon^2}\p{\frac{\norm{\gapprox}_{\ff}\sigma^2}{n} + r^2} 
          \vee \frac{36\norm{\riesz[\psi]}_{L_2(\P)}\sigma^2}{\epsilon^2\sqrt{\delta} c_0 nr} \vee \frac{288\sigma^2}{\epsilon^2 n}. 
\]
In our definition of $\phi(\rho)$ in Theorem~\ref{theo:simple-rate}, we substitute the bound \smash{$2R_n(h_{\riesz[\psi]}(\cdot, \F_{\sqrt{2}\rho}))\ge$} \smash{$\E I_{h,\F_{\sqrt{2}\rho}}(\riesz[\psi])$} implied by symmetrization \citep[Lemma 2.3.1]{vandervaart-wellner1996:weak-convergence}.

\paragraph*{Approximately optimizing over $\rho$}
Rather than including $\rho$ explicitly in our bound, 
we approximately optimize over $\rho$ exceeding the lower bound above, which we will call $\rho_{\phi}$.
To do this, we will work with bounds $\norm{\hm - m}_{\F} \le s_{\F}$ and $\norm{\hm-m}_{L_2(\Pn)} \le s_{L_2(\Pn)}$.
Subject to the additional constraint \smash{$\rho \ge s_{L_2(\Pn)}/s_{\F}$}, 
we increase our bound by substituting \smash{$\sqrt{2}s_{\F}$} for $\norm{\hm - m}_{\F_{\rho}'}$, as
\[ \norm{\hm - m}_{\F'_{\rho}}^2 = \norm{\hm - m}_{\F}^2 + \rho^{-2}\norm{\hm - m}_{L_2(\Pn)}^2 
			         \le s_{\F}^2\p{1+ \rho^{-2}s_{L_2(\Pn)}^2/s_{\F}^2}.\]
Thus, our bound \eqref{eq:asymptotic-linearity-concrete} holds for
\smash{$\rho = \rho_{\phi} \vee (s_{L_2(\Pn)}/s_{\F})$}, 
and for this $\rho$, \smash{$\sqrt{2}s_{\F} \ge \norm{\hm - m}_{\F'_{\rho}}$.}
Making these substitutions yields the claim of Theorem~\ref{theo:simple-rate}. 

\subsection{Doing without uniform boundedness}
\label{sec:not-uniformly-bounded}
In Section~\ref{sec:ratio-process-bounds},
we show that the ratio process bounds \eqref{eq:ratio-process-bounds}
hold with high probability when $\F$ is uniformly bounded.
Lower bounds on the ratio process $\Pn f^2 / \P f^2$, like 
our first bound in \eqref{eq:ratio-process-bounds},
hold for classes with $M_p(\F) = \sup_{f \in \F} \norm{f}_{L_p(\P)}$ finite
for $p > 2$. In this case, the fixed point condition determining $r$ is $R_n(\F_{c_0r}) \le c_1 r (r /M_{p}(\F))^{p/(p-2)}$ 
\citep[Corollary 3.6]{mendelson2017extending}. 
The approach we use to establish the second bound in \eqref{eq:ratio-process-bounds}
is based on Markov's inequality and holds without uniform boundedness.
However, if it were known that the class $h_{g}(\cdot,\F_r)$ were uniformly bounded or otherwise had well behaved tails,
a sharper concentration inequality like Talagrand's \citep[e.g.,][Theorem 3.3.9]{gine2015mathematical}
could be used to establish bounds that do not depend strongly on the tail probability $\delta$.

In Section~\ref{sec:bounding-I}, we bound the supremum of the mean-zero empirical process
$(\Pn - \P)h_{\riesz[\psi]}f$ indexed by the random set $\F'_{\rho} \subseteq \set{f \in \F : \Pn f^2 \le \rho^2}$. 
Our approach is based on showing that with high probability, $\F'_{\rho}$ is contained in the 
deterministic set $\set {f \in \F : \P f^2 \le 2\rho^2}$,
and involves the use of bounds based on contraction principle arguments 
that do not generalize well to the unbounded case. In the unbounded case,
it is probably more natural to work with the random set $\F'_{\rho}$ directly, 
for example by using symmetrization to introduce Rademacher multipliers 
and analyzing the resulting Rademacher average conditional on $Z_1 \ldots Z_n$ using
bounds on $L_2(\Pn)$ metric entropy \citep[see e.g.,][Theorem 3.5.1]{gine2015mathematical}.

\section{Asymptotics}
\label{sec:asymptotics}

We will now prove our simple asymptotic result, Theorem~\ref{theo:simple}, using Theorem~\ref{theo:simple-rate} for $Q=\Pn$.
Our assumptions that $\F$ is pointwise closed and therefore $L_2(\Pn)$-closed, that $h(Z,f)$ is pointwise bounded, 
and that $f(Z)$ is uniformly bounded justify the application of the latter.
The following lemma will be used to show that our Rademacher complexity fixed points are $o(n^{-1/4})$.
\begin{lemm}
\label{lemma:fixed-point-fourth-root}
Let $\tau_n(r)$ be a sequence of positive functions, each increasing in $r$, and satisfying $\tau_n(s_n) = o(n^{-1/2})$
for all positive sequences $s_n \to 0$. For any $\eta > 0$, there exists a positive sequence $r_n$ satisfying $r_n = o(n^{-1/4})$ and $\tau_n(r_n) \le \eta r_n^2$ for sufficiently large $n$.
\end{lemm}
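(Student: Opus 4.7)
The plan is to parametrize $r_n$ in the form $r_n = a_n n^{-1/4}$ for some positive sequence $a_n \to 0$ to be chosen. Any such choice automatically gives $r_n = o(n^{-1/4})$, so the work reduces to selecting $a_n$ slowly enough that the required inequality $\tau_n(a_n n^{-1/4}) \le \eta a_n^2 n^{-1/2}$ holds for all large $n$.

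To exploit the hypothesis, I would apply it with the specific sequence $s_n = n^{-1/4} \to 0$ to extract a rate: set $\phi_n := n^{1/2}\tau_n(n^{-1/4})$, which by assumption tends to zero. Once $a_n \le 1$, monotonicity of $\tau_n$ gives $\tau_n(a_n n^{-1/4}) \le \tau_n(n^{-1/4}) = \phi_n n^{-1/2}$, so the target inequality is implied by $\phi_n \le \eta a_n^2$. Choosing $a_n = \phi_n^{1/4}$ then works: $a_n$ is positive (since $\tau_n > 0$) and tends to zero, and $\phi_n / a_n^2 = \phi_n^{1/2} \to 0$, so $\phi_n \le \eta a_n^2$ for all sufficiently large $n$. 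For the finitely many $n$ where $a_n > 1$ or the bound fails, I can redefine $r_n$ arbitrarily without affecting the asymptotic conclusions.

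The one subtle point is that the hypothesis only controls $\tau_n$ along sequences tending to zero, not uniformly over an interval; in particular one cannot conclude $\tau_n(s) = o(n^{-1/2})$ for any fixed positive $s$. This is what forces the two-scale construction $r_n = a_n n^{-1/4}$ rather than a single fixed rate: monotonicity of $\tau_n$ is used to transfer control from the ``test'' sequence $n^{-1/4}$ down to the smaller $a_n n^{-1/4}$, and the free factor $a_n$ is tuned to beat the vanishing rate $\phi_n$ by a square root, which provides exactly the slack needed to absorb the constant $\eta$. Everything else is a one-line verification.
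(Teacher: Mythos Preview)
Your argument is correct and is essentially the same as the paper's: the paper sets $r_n = \sqrt{\tau_n(n^{-1/4})/\eta}$, which in your notation is $r_n = (\phi_n/\eta)^{1/2} n^{-1/4}$, and then uses monotonicity of $\tau_n$ together with $r_n \le n^{-1/4}$ for large $n$ to conclude $\tau_n(r_n)\le \tau_n(n^{-1/4})=\eta r_n^2$. Your choice $a_n = \phi_n^{1/4}$ is a harmless variant of the same construction, and your handling of the finitely many exceptional $n$ matches the paper's.
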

\begin{proof}
Let $r_n = \sqrt{\tau_n(n^{-1/4})/\eta}$. Then $r_n = o(n^{-1/4})$ and $\tau(r_n) \le \eta r_n^2 = \tau(n^{-1/4})$ for $n$ sufficiently large that $r_n \le n^{-1/4}$.
\end{proof}

\begin{proof}[Proof of Theorem~\ref{theo:simple}]
We will prove asymptotic linearity \eqref{eq:asymptotic-linearity} here, deferring our claims about regularity  
and efficiency to Section~\ref{sec:proof-of-efficiency} below. 
We begin by showing that $\sqrt{n}R_n(\chi(\F) \cap \omega_{\chi}(r_n)B) \to 0$ 
whenever $r_n \to 0$ for $\chi \in \set{ f \to f,\ f \to \riesz[\psi] f,\ f \to h(\cdot, f) }$
and $\omega_{\chi}(r) = \sup_{f \in \F \cap r B} \norm{f}_{L_2(\P)}$. 

Because each set $\chi(\F)$ is Donsker, the
corresponding Rademacher processes are asymptotically equicontinuous \citep[e.g.,][Theorem 14.6]{ledoux1991probability}
in the sense that $\sqrt{n}R_n(\chi(\F) \cap s_n B) \to 0$ whenever $s_n \to 0$.
Thus,  $\sqrt{n}R_n(\chi(\F) \cap \omega_{\chi}(r_n)B) \to 0$ whenever $r_n \to 0$ 
if $\lim_{r \to 0}\omega_{\chi}(r) = 0$. For $\chi(f)=f$, this holds tautologically; for $\chi(f)=h(\cdot,f)$, this is assumed;
and for $\chi(f)=\riesz[\psi]f$, this follows from the uniform boundedness of $\F$ and square integrability of $\riesz[\psi]$
via a truncation argument: if $\P f^2 \le r^2$,
\[ \P \riesz[\psi]^2 f^2 
    = \P \riesz[\psi]^2 1(\riesz[\psi]^2 \le 1/r) f^2 +
      \P \riesz[\psi]^2 1(\riesz[\psi]^2 > 1/r) f^2 
    \le r + \norm{f}_{\infty} \P \riesz[\psi]^2 1(\riesz[\psi]^2 > 1/r), \]
and this goes to zero as $r \to 0$. And this implies that 
$R_n(h_{\riesz[\psi]}(\cdot, \F \cap r_n B)) \to 0$ as $r_n \to 0$,
as $R_n(h_{\riesz[\psi]}(\cdot, \GG)) \le R_n(h(\cdot,\GG)) + R_n(\riesz[\psi]\GG)$ for any set $\GG$.
Thus, on an event of arbitarily high probability,
$r = o(n^{-1/4})$ via Lemma~\ref{lemma:fixed-point-fourth-root} and $\sqrt{n}\phi(s_n) \to 0$ for any $s_n \to 0$. 
The remainder of our proof is based on these two rates.

As a consequence of our assumed tightness and consistency properties \eqref{eq:consistency-properties},
to establish the asymptotic linearity property 
\[ \sqrt{n}(\hpsi_{AML} - \psi(m) - n^{-1}\sum_{i=1}^n \influence_{\tilde\riesz}(Y_i,Z_i)) \to_P 0, \]
it suffices to show that for any $\delta > 0$, the three-term remainder bound \eqref{eq:remainder-rate} goes to zero
for any constant $s_{\F}$ and with any sequence $s_n \to 0$ in place of $s_{L_2(\Pn)}$.
\begin{enumerate}
\item The first term of our bound goes to zero if $\norm{\hriesz - \tilde\riesz}_{L_2(\Pn)}$ does.
This happens because $\sqrt{n}r^2 \to 0$ and $\norm{\tilde\riesz}_{\F} \le (\sqrt{n}/\sigma)\norm{\riesz[\psi]}_{L_2(\Pn)} = O_P(\sqrt{n})$.
The latter bound holds because 
\[ \norm{\riesz[\psi] - \riesz}_{L_2(\Pn)}^2 + (\sigma^2/n)\norm{\riesz}_\F^2 \ \text{ is smaller at is minimizer }\ 
\riesz=\tilde\riesz \ \text{ than at }\ \riesz=0. \]
\item The second term goes to zero because $\sqrt{n}\phi(s_n) \to 0$ when $s_n \to 0$.
\item The third term goes to zero if \smash{$\norm{\hriesz - \riesz[\psi]}_{L_2(\Pn)}$} does.
By the triangle inequality, this happens if both $\norm{\hriesz - \tilde\riesz}_{L_2(\Pn)}$
and $\norm{\tilde\riesz - \riesz[\psi]}_{L_2(\Pn)}$ do. We have established that the first does.
To show that the second does, observe that there is a sequence of approximations $\tilde\riesz_j \in \vspan \F$ 
converging to any element in its closure, and therefore to $\riesz[\psi]$, 
and it has a convergent subsequence $\tilde\riesz_{j_n}$ satisfying $\norm{\tilde\riesz_{j_n}}_{\F}/\sqrt{n} \to 0$.
It follows that $\norm{\tilde\riesz - \riesz[\psi]}_{L_2(\Pn)} \to 0$ on an event of probability $1-\delta$, as 
\begin{align*} 
    \norm{\tilde\riesz - \riesz[\psi]}_{L_2(\Pn)}^2 + (\sigma^2/n)\norm{\tilde\riesz}_\F^2 
&\le \norm{\tilde\riesz_{j_n} - \riesz[\psi]}_{L_2(\Pn)}^2 + (\sigma^2/n)\norm{\tilde\riesz_{j_n}}_\F^2 \\
&\le \delta^{-1}\norm{\tilde\riesz_{j_n} - \riesz[\psi]}_{L_2(\P)}^2 + (\sigma^2/n)\norm{\tilde\riesz_{j_n}}_\F^2 \to 0
\end{align*} 
Our first comparison is via the optimality of $\tilde\riesz$ and our second on Markov's inequality.
\end{enumerate}
This establishes asymptotic linearity in the sense stated above.
The form of asymptotic linearity we want to prove \eqref{eq:asymptotic-linearity} 
differs in that it has $\influence_{\riesz[\psi]}$ in place of $\influence_{\tilde\riesz}$.
By the triangle inequality, these are equivalent if $\sqrt{n}\Pn( \influence_{\riesz[\psi]} - \influence_{\tilde\riesz}) = o_p(1)$.
And as $\Pn( \influence_{\riesz[\psi]} - \influence_{\tilde\riesz}) = \Pn (\riesz[\psi] - \tilde \riesz)\varepsilon_i$ for $\varepsilon_i = Y_i- m(Z_i)$,
via Chebyshev's inequality as in the derivation of our noise term bound in Section~\ref{sec:convergence-of-the-noise-term},
this goes to zero because $\norm{\riesz[\psi] - \tilde \riesz}_{L_2(\Pn)} \to 0$.
\end{proof}

\subsection{Theorem~\ref{theo:simple-hull}}
\label{sec:appendix-flexible}
We turn our focus to Theorem~\ref{theo:simple-hull}, a variant of the theorem proven above
in which $\F$ is defined as the absolutely convex hull of $\set{m_1 \ldots m_{K_n}} - \GG$ 
for a Donsker class $\GG$. Our claim that this theorem justifies the use of $K=o(n^{1/(2+\alpha)})$ candidates
in ideal conditions, for example when $\sup_{n} \norm{\chi(\F_n)}_{\infty} < \infty$ and $\omega_{\chi,\F_n}(r) \lesssim r$ for all $\chi$
and $\norm{\hm - m}_{L_2(\Pn)} =$ \smash{$ O_p(n^{-1/4})$}, follows from a straightforward covering number bound.

For large $K$, when $\log \hat N(\HH,\tau) \le \tau^{-\alpha}$ for $\alpha < 2$,
\[ \hat N(\HH,\log(K+1)^{-1/2}) \le \exp(\log(K+1)^{\alpha/2}) \le \exp((\alpha/2)\log(K+1)) = (K+1)^{\alpha/2}. \]
In the second comparison, we've used the property that for $a=\log(K+1)$ and $b=\alpha/2 < 1$,
$a^b \le ab \iff a^{b-1} \le b$, and $a^{b-1} \to 0$ as $a=\log(K+1) \to \infty$ whereas $b$ remains constant.
Thus, if we could take $a_n=1$ in condition $(2c)$, it would suffice that $(K_n+1)^{1+\alpha/2} = o(n^{1/2})$,
which would imply our claim. Modification for $a_n \to 0$ slowly is straightforward.

We will now prove Theorem~\ref{theo:simple-hull} and a related claim from Remark~\ref{rema:hull-k-constant}.
Throughout, we will write $rB$ and $r\hat B$ for the radius-$r$ balls in $L_2(\P)$ and $L_2(\Pn)$
and $R_n(\HH)$ and \smash{$\hat R_n(\HH)$} for $\E \sup_{h \in \HH}\abs{\Pn \varepsilon_i h(Z_i) }$ 
and $\E_{\varepsilon} \sup_{h \in \HH}\abs{\Pn \varepsilon_i h(Z_i)}$, with the latter
denoting expectation conditional on $Z_1 \ldots Z_n$. 
Here $\varepsilon_1 \ldots \varepsilon_n$ is a sequence of independent Rademacher random variables independent of $Z_1 \ldots Z_n$,
and we will write $G_n$ and $\hat G_n$ for analogs of $R_n$ and $\hat R_n$ in which 
a sequence of standard normals $\xi_1 \ldots \xi_n$ replaces the Rademacher sequence.
The lemmas below, which we will use in our proof, will be proven afterward.

\begin{lemm}
\label{lemma:empirical-vs-pop-local-complexity}
Let $\F \subseteq L_2(\P)$ be star-shaped around zero, with finite $M_p := \sup_{f \in \F} \norm{f}_{L_p(P)}$ for $p \in (2,\infty]$,
let $\omega(r)$ be a non-decreasing function on the positive reals, and for any $r_L \ge 0$, let 
\begin{align*}
&r_{\star} = \inf\set{ r > r_L : R_n(\F \cap \omega'(r) B) \le \eta r^2} && \text{ and } \\
&\hat r_{\star} = \inf \set{ r > r_L : \hat R_n(\F \cap \omega'(r) \hat B) \le \delta \eta r^2/2} && \text{ for } \\
&\omega'(r) = \omega(r) \vee c_0 r^{2/(1+q)}, \ \ q=p/(p-2), \ \ \eta < c_1/M_p^q.
\end{align*} 
On an event of probability $1-\delta - 2\exp(-c_2 n r_\star^{4q/(1+q)}/M_p^{2q})$, $\hat r_{\star} \ge r_{\star}$.
Here $c_0$ is a universal constant and $c_1,c_2$ depend only on $p$.
\end{lemm}

\begin{lemm}
\label{lemma:non-increasing-modulus}
Let $\F$ be a subset of a space with norm $\norm{\cdot}$ that is star-shaped around zero and $\chi$ be a linear map from $\F$ into a space with norm $\norm{\cdot}'$. For the continuity modulus $\omega(r) = \sup_{f \in \F : \norm{f} \le r} \norm{\chi(g)}'$, $\omega(r)/r$ is nonincreasing. 
\end{lemm}

\begin{coro}
\label{cor:empirical-vs-pop-local-complexity}
Let $\F_n \subseteq L_2(\P)$ be a sequence of sets, each star-shaped around zero, 
let $\chi$ be a linear map from $\cup_{n}\F_n \to L_2(\P)$ 
with $\sup_n \sup_{f \in \ff_n} \norm{f}_{L_p(\P)} < \infty$ for $p \in (2,\infty]$,
and let 
\[ \omega_n'(r) = \sup_{f \in \F \cap r B} \norm{\chi(f)}_{L_2(\P)} \vee c_0 r^{(p-2)/(p-1)}
\ \text{ for a universal constant }\ c_0.
 \]
Let $\eta > 0$ be a constant and $r_n$ and $r_n'$ be deterministic sequences with $r_n' \ge r_n$.
If $\hat R_n(\chi(\F_n) \cap \omega'(\hat r) \hat B) \le (\eta/4) \hat r^2$ with $\hat r = o_P(r_n)$,
then $R_n(\chi(\F) \cap \omega'(r)B) \le \eta r^2$ with $r=o(r_n)$ 
and furthermore $R_n(\chi(\F) \cap \omega'(r_n')B) = O_P( \hat R_n(\chi(\F) \cap \omega'(r_n') \hat B))$.
\end{coro}

\begin{lemm}
\label{lemma:gaussian-comparison-convex-hull}
Let $\F$ be the absolutely convex hull of $\set{m_1 \ldots m_{K}} - \GG$. For any $r, s > 0$,
\[ \hat G_n\p{\F \cap r \hat B} \le 2 \hat G_n\p{[\GG - \GG] \cap s \hat B} + cn^{-1/2} s \sqrt{\log(K+1)} +  n^{-1/2} r \sqrt{(K+1)\hat N(\GG, s)}. \]
Here $c$ is a universal constant and $\hat N(\GG, s)$ is the minimal size of a cover of $\GG$ by $\norm{\cdot}_{L_2(\Pn)}$-balls of radius $s$.
\end{lemm}

\begin{proof}[Proof of Theorem~\ref{theo:simple-hull} and Remark~\ref{rema:hull-k-constant}]
As in the proof of Theorem~\ref{theo:simple} above, 
it suffices to show two rate bounds: 
$R_n(\chi(\F) \cap \omega_{\chi}(r)B) \le \eta r^2$ with $r=o(n^{-1/4})$ for arbitarily small $\eta > 0$
and $\sqrt{n}R_n(\chi(\F) \cap \omega_{\chi}(s_n)B) \to 0$ for all $\chi$
and $\norm{\hm - m}_{L_2(\Pn)} = O_p(s_n)$. 
And by Corollary~\ref{cor:empirical-vs-pop-local-complexity} for $r_n = n^{-1/4}$ and $r_n'=n^{-1/4} \vee s_n$,
it suffices that\footnotemark 
\begin{align}
&\hat R_n(\chi(\F) \omega_{\chi}'(\hat r)\hat B) \le (\eta/4) \hat r^2 \quad \text{ with } \quad \hat r = o_p(n^{-1/4}), \label{eq:hull-fixed-pt}\\
&\sqrt{n}\hat R_n(\chi(\F) \cap \omega_{\chi}'(s_n \vee n^{-1/4})\hat B) = o_P(1). \label{eq:hull-localized}
\end{align}
\footnotetext{In the statement of Theorem~\ref{theo:simple-hull}, we use a simplified definition of $\omega_{\chi'}$ in which the universal constant
$c_0$ from \eqref{cor:empirical-vs-pop-local-complexity} is taken to be one. This does not affect our proof,
which depends on the order of $\omega_{\chi'}(r)$ but not constant factors.}

By a contraction principle for Rademacher averages \citep[Lemma 4.5]{ledoux1991probability}, 
we can bound each Rademacher complexity by a multiple of the analogous Gaussian complexity:
\[ \sqrt{n}\hat R_n\p{\chi(\F) \cap \omega'_{\chi}(r) \hat B} \le (\pi/2)^{1/2} \sqrt{n}\hat G_n\p{\chi(\F) \cap \omega'_{\chi}(r) \hat B}. \] 
And as  $\chi(\F)$ is the absolutely convex hull of $\set{\chi(m_1) \ldots \chi(m_K)} - \chi(\GG)$,
by Lemma~\ref{lemma:gaussian-comparison-convex-hull}, 
\begin{align*} \sqrt{n}\hat G_n\p{\chi(\F) \cap \omega_{\chi}'(r) \hat B)} 
&\le 2\sqrt{n}\hat G_n\p{[\chi(\GG)-\chi(\GG)] \cap s \hat B} 
+ c s \sqrt{\log(K+1)}  \\
&+ \omega_{\chi}'(r) \sqrt{(K+1)\hat N(\chi(\GG), s)} \quad \text{ for any } s. 
\end{align*}
 The first term in this bound is $o_P(1)$ when $s \to 0$ as $n \to \infty$, 
as (i) $\HH := \chi(\GG) - \chi(\GG)$ is Donsker when $\chi(\GG)$ is Donsker with $\sup_{f \in \chi(\GG)}\abs{\P f} < \infty$ \citep[Example 2.10.7]{vandervaart-wellner1996:weak-convergence},
(ii) $\sqrt{n}G_n(\HH \cap t B) \to 0$ 
when $t \to 0$  for Donsker $\HH$ \citep[e.g.,][Theorem 14.6]{ledoux1991probability},
(iii) $\hat G_n(\HH \cap t B) = O_P(G_n(\HH \cap t B))$ by Markov's inequality,
(iv) for any $s \to 0$, there exists $t \to 0$ for which
    $\HH \cap s \hat B \subseteq \HH \cap t B$ with arbitrarily high probability. This last property
holds because $\HH^2 = \set{h^2 : h \in \HH}$ is Glivenko-Cantelli when $\HH$ is Donsker with $\sup_{h\in\HH}\abs{\P h} < \infty$ \citep[Lemma 2.10.14]{vandervaart-wellner1996:weak-convergence}, so $\sup_{h \in \HH \cap s \hat B} \P h^2 \le \sup_{h \in \HH \cap s \hat B} \Pn h^2 + \sup_{h \in \HH}\abs{(\P - \Pn)h^2} \le s^2 + o_P(1)$.
Thus, taking $s=a_n/\sqrt{\log(K+1)}$ with $a_n \to 0$,
\[  \sqrt{n}\hat R_n\p{\chi(\F) \cap \omega'_{\chi}(r) B} = o_P(1) + O_P\p{\omega_{\chi}'(r) \sqrt{(K+1)\hat N(\chi(\GG), a_n/\sqrt{\log(K+1)})}}. \]
We will use this bound to check the aforementioned sufficient conditions. 

For condition \eqref{eq:hull-fixed-pt}, it suffices
that this bound is less than $\sqrt{n}(\eta/4) r^2$ for $r=o_P(n^{-1/4})$. This happens if each term satisfies this condition individually,
and the leading $o_P(1)$ term does, so this condition reduces to
\[ \sqrt{(K+1)\hat N(\chi(\GG), a_n/\sqrt{\log(K+1)})} \le \sqrt{n}(\eta/4) r^2 /\omega_{\chi}'(r)\ \text{ for }\ r=n^{-1/4}a_n',\ a_n' = o_P(1). \]
And as $\omega_{\chi}'$ is increasing, it suffices that
\[ \sqrt{(K+1)\hat N(\chi(\GG), a_n/\sqrt{\log(K+1)})} \le \sqrt{n}(\eta/4) r^2 / \omega_{\chi}'(n^{-1/4}) 
							= (\eta/4)(a_n')^2 / \omega_{\chi}'(n^{-1/4}) \]
or equivalently that
\[ (K+1)\hat N(\chi(\GG), a_n/\sqrt{\log(K+1)}) = o_P(1/\omega_{\chi}'(n^{-1/4})^2). \]
For condition \eqref{eq:hull-localized}, taking $r=n^{-1/4} \vee s_n$  in our bound above,
it suffices that a variant of this condition holds with $\omega_{\chi}'(n^{-1/4} \vee s_n)$ 
in place of $\omega_{\chi}'(n^{-1/4})$, so our assumption involving \smash{$n^{-1/4} \vee s_n$} implies both 
conditions \eqref{eq:hull-fixed-pt} and \eqref{eq:hull-localized}. 
This concludes our proof of Theorem~\ref{theo:simple-hull}.

We will now prove our claim from Remark~\ref{rema:hull-k-constant}. For $K=O(1)$, 
the condition above reduces to \smash{$\omega_{\chi}'(s_n \vee n^{-1/4}) \to 0$}, as by taking $a_n \to 0$ slowly we can take \smash{$\hat N(\chi(\GG),a_n) \to \infty$} arbitrarily slowly ---
in particular, slowly enough that \smash{$\hat N(\chi(\GG),a_n) = o_P(b_n)$} for any sequence $b_n \to \infty$.
That the growth of \smash{$\hat N(\chi(\GG), a_n)$} is bounded as a function of $a_n$ 
in this sense is implied by Sudakov minoration \citep[e.g.,][Theorem 3.18]{ledoux1991probability}:
\[ \sqrt{\log \hat N(\chi(\GG), s)} = O(s^{-1}\hat G_n(\chi(\GG)) = O_P(s^{-1}n^{-1/2}), \]
where in the second comparison we've used the tightness of $\sqrt{n}G_n(\HH)$ for Donsker $\HH$.
And it suffices to assume this condition
on $\omega_{\chi}$ for $\chi(f)=h(\cdot,f)$ only ---
this is clearly satisfied for $\chi(f)=f$, and it was shown that it is satisfied for $\chi(f)=\riesz[\psi]f$
when $\sup_n\sup_{f \in \ff_n}\norm{f}_{\infty} < \infty$ in the proof of Theorem~\ref{theo:simple}.
\end{proof}

\begin{proof}[Proof of Lemma~\ref{lemma:empirical-vs-pop-local-complexity}]
Our proof is based on that of Theorem 4.1 in \citet{bartlett2005local}.
By \citet[Corollary 3.6]{mendelson2017extending}, if $c_1^{-1} M_p^q R_n(\F \cap c_0 s B) \le s^{1+q}$,
then with probability $1-2\exp(-c_2n (s/M_p)^{2q})$,
$\Pn f^2 \ge (1/4)\P f^2$ for all $f \in \F$ with $\P f^2 \ge s^2$. We can assume $c_0 \ge 1$ here,
as if it is true for $c_0 < 1$ it remains true for $c_0=1$.
Taking $s=r^{2/(1+q)}$, if $c_1^{-1} M_p^q R_n(\F \cap \omega''(r) B) \le r^2$
for $\omega''(r)=c_0 r^{2/(1+q)}$, then on an event of probability $1-2\exp(-c_2 n  r^{4q/(1+q)}/M_p^{2q})$,
$\Pn f^2 \ge (1/4)\P f^2$ for all $f \in \F$ with $\P f^2 \ge r^{4/(1+q)}$. And on this event, 
$\F \cap 2s\hat B \supseteq \F \cap sB$ for all $s \ge r^{2/(1+q)}$. 
Furthermore, $R_n(\F \cap sB) < \delta^{-1}\hat R_n(\F \cap s B)$ with probability $1-\delta$
by Markov's inequality. And by the union bound, 
both hold on an event of probability $1-\delta - 2\exp(-c_2 n r^{4q/(1+q)}/M_p^{2q})$.
For the remainder of our argument, we work on this event, and let 
$\psi_0(r) := c_1^{-1} M_p^q R_n(\F \cap \omega'(r) B)$ for $\omega'(r) = \omega(r) \vee \omega''(r)$.
If $r^2 \ge \psi_0(r)$, because $\omega''(r) \ge r^{2/(1+q)}$,
\begin{align*} 
\psi(r) &:= \eta^{-1} R_n(\F \cap \omega''(r)B) \\
&\le  (\delta \eta)^{-1} \hat R_n(\F \cap \omega''(r) B) \\
&\le  (\delta \eta)^{-1} \hat R_n(\F \cap 2 \omega''(r) \hat B)  \\
&\le  \hat\psi(r) := 2(\delta \eta)^{-1} \hat R_n(\F \cap \omega''(r) \hat B).
\end{align*}
It follows that the bound $\psi(r) \le \hat \psi(r)$ holds for any $r$ satisfying $\psi(r) \le r^2$, 
as $\psi_0(r) \le \kappa \psi(r)$ for $\kappa = (c_1^{-1} M_p^q)/\eta^{-1} < 1$. Now suppose that $\psi(r)/r$ is non-increasing --- we will show this below.
This means that this bound holds for any $r > r_{\star}$, as if $\psi(r)/r \le r$ then $\psi(r_+)/r_+ \le r_+$
for all $r_+ \ge r$. Furthermore, our bound above holds for 
$r=r_-$ slightly smaller than $r_{\star}$, as for $r=r_+ := r_-/\sqrt{\kappa} $ slightly larger,
$\psi_0(r_-) \le \psi_0(r_+) \le \kappa \psi(r_+) \le \kappa r_+^2 = r_-^2$.

It follows that the bound $\psi(r_-) \le \hat \psi(r_-)$ holds for $r_-$ smaller than, but sufficiently close to, $r_{\star}$.
If $\hat r_{\star}$ were less than $r_{\star}$, then this bound would hold for some $r_- \in (\hat r_{\star}, r_{\star})$,
and implying that $\psi(r_-) \le \hat \psi(r_-) \le r_-^2$. As by definition $r_{\star}$ is a lower bound on 
the set $\set{r > 0 : \psi(r) \le r^2}$, our premise cannot be true --- it must be the case that $\hat r_{\star} \ge r_{\star}$.

We conclude by showing that $\psi(r)/r$ is non-increasing.
Because $\psi(r)/r = \eta^{-1} R_n(r^{-1}\F \cap [\omega''(r)/r]B)$
and $R_n$ is increasing in the order of inclusion in the sense that $A \subseteq B \implies R_n(A) \le R_n(B)$,
it suffices to show that $r_+^{-1}\F \cap [\omega''(r_+)/r_+] B \subseteq r^{-1}\F \cap [\omega''(r)/r] B$
when $r_+ \ge r$. This holds because $\F$ and $B$ are star-shaped around zero
and $r^{-1}$ and $\omega(r)/r$ are non-increasing.
\end{proof}

\begin{proof}[Proof of Lemma~\ref{lemma:non-increasing-modulus}]
If $s \ge r$ and a sequence $f_j \in \set{\F : \norm{f} \le s}$ satisfy $\lim_{j}\norm{\chi(f_j)}' = \omega(s)$,
then $(r/s)f_j \in \set{\F : \norm{f} \le r }$, so $\omega(r) \ge \lim_{j} \norm{\chi((r/s)f_j)}' = (r/s)\omega(s)$ and equivalently $\omega(r)/r \ge \omega(s)/s$.
\end{proof}

\begin{proof}[Proof of Corollary~\ref{cor:empirical-vs-pop-local-complexity}]
Define $r_{\star}$ and $\hat r_{\star}$ as in Lemma~\ref{lemma:empirical-vs-pop-local-complexity} with $\F = \chi(\F_n)$,
$\omega(r)=\sup_{f \in \F_n \cap r B} \norm{\chi(f)}_{L_2(\P)}$, $\delta=1/2$, and $r_L$ satisfying $n^{(1+q)/4q}r_L \to 0$
for $q=p/(p-2)$. The definition of $\omega'$ that we use here agrees with that of Lemma~\ref{lemma:empirical-vs-pop-local-complexity},
as 
\[ \frac{2}{1 + p/(p-2)} = \frac{2}{2(p-1)/(p-2)} = \frac{p-2}{p-1}. \] 
And as the restriction $r_{\star} \ge r_L$ implies that $n r_\star^{4q/(1+q)} \to \infty$,
$P(r_\star \ge \hat r_{\star}) \to \delta$ by Lemma~\ref{lemma:empirical-vs-pop-local-complexity}.
Furthermore, as $\hat r_{\star} = o_P(r_n)$ by assumption,
$P(\hat r_{\star} \ge \epsilon r_n) \to 0$ for any $\epsilon > 0$, and by the union bound
$P(r_\star \ge \epsilon r_n) \to \delta$. As $r_\star \le \epsilon r_n$ is a deterministic comparison,
and it is true with probability tending to the nonzero limit $1 - \delta=1/2$, 
it follows that it is true deterministically: $r_\star = o(r_n)$. 

Now consider the latter claim. By Markov's inequality, $R_n(\chi(\F_n) \cap \omega_n'(s_n)B) = O_P( \hat R_n(\chi(\F_n) \cap \omega_n'(s_n) B))$,
and using a property used in the proof of Lemma~\ref{lemma:empirical-vs-pop-local-complexity}
--- the property that $\F \cap 2s\hat B \supseteq \F \cap sB$ for all $s > \omega_n'(r_{\star})$ 
with probability tending to one --- it follows that so long as $r_n' > r_{\star}$,
$R_n(\chi(\F_n) \cap \omega_n'(r_n')B) = O_P(R_n(\chi(\F_n) \cap 2\omega_n'(r_n')\hat B))$.
This implies our claim, as $r_n' \ge r_n > r_{\star}$ with probability tending to one, and 
$\chi(\F_n) \cap 2\omega_n'(r_n')\hat B \subseteq  2[ \chi(\F_n) \cap \omega_n'(r_n')\hat B)]$.
\end{proof}

\begin{proof}[Proof of Lemma~\ref{lemma:gaussian-comparison-convex-hull}]
Observe that $\F$ is contained in the Minkowski sum $\conv(\F_0)-\conv(\F_0)$
where $\F_0 := \MM - \GG$ for $\MM = \set{0, m_1, m_2, \ldots, m_K}$. Thus, for any $s$,
\begin{align*} 
\hat G_n(\F \cap r \hat B) 
&\le \hat G_n\p{(\conv(\F_0) - \conv(\F_0)) \cap r \hat B} \\
&\le 2\hat G_n\p{(\F_0 - \F_0] \cap s \hat B} + r \sqrt{\hat N(\F_0, s)/n} \\
&\le 2\hat G_n\p{(\F_0 - \F_0) \cap s \hat B} + r \sqrt{(K+1)\hat N(\GG, s)/n}. 
\end{align*}
The first comparison holds because of this containment, the second via a bound of \citet[Theorem 1]{bousquet2002some}
relating the moduli of continuity of the isonormal gaussian processes $f \to n^{1/2}\Pn \xi_i f(Z_i)$
indexed by a set $\F_0$ and its convex hull, and the third because given any $s$-cover $\GG^s$ 
of $\GG$, $\MM -  \GG^s$ is an $s$-cover of $\F_0$.

To bound the first term here, observe that every function $f \in \F_0-\F_0$
can be written as $m - g$ for $m \in \mathcal M - \mathcal M$ and $g \in \GG-\GG$.
Letting $\Pi$ be the $L_2(\Pn)$-orthogonal projection onto the convex set $\GG-\GG$, we can write this as
$(m - \Pi m) + (\Pi m - g)$, and by the Hilbert space projection theorem \citep[Proposition 1.37]{peypouquet2015convex},
$\Pn (m-\Pi m) (g-\Pi m) \le 0$ for all $g \in \GG-\GG$, so
\begin{align*} 
\norm{f}_{L_2(\Pn)}^2 &= \norm{m - \Pi m}_{L_2(\Pn)}^2 + \norm{\Pi m - g}_{L_2(\Pn)}^2 + 2 \Pn (m-\Pi m)(\Pi m - g) \\
		    &\ge \norm{m - \Pi m}_{L_2(\Pn)}^2 + \norm{\Pi m - g}_{L_2(\Pn)}^2. 
\end{align*}
Thus, every $f \in [\F_0 - \F_0] \cap s\hat B$ can be written as a sum $(m-\Pi m) + (\Pi m - g)$ 
where $\Pi m - g \subseteq [2(\GG-\GG)] \cap s\hat B$ and 
$m-\Pi m \in \MM' \cap s \hat B$ for $\MM' := \set{m_i - m_j - \Pi (\hat m_i - \hat m_j) : i,j \in 0 \ldots K}$ with $m_0=0$.
Thus, for some universal constant $c$, 
\begin{align*}
\hat G_n\p{(\F_0 - \F_0) \cap s \hat B} 
&\le \hat G_n\p{\MM' \cap s B} + \hat G_n\p{[2(\GG-\GG)] \cap s \hat B} \\
&\le c n^{-1/2} s \sqrt{\log(K+1)} +  2 \hat G_n\p{(\GG-\GG) \cap s \hat B}. 
\end{align*}
Here we bound $\hat G_n([2(\GG-\GG)] \cap s \hat B)$ using the inclusion $[2(\GG-\GG)] \cap s \hat B \subseteq 2[(\GG - \GG) \cap s \hat B]$
and $\hat G_n(\MM' \cap s \hat B)$ using the finite class bound 
$\hat G_n(\set{f_1 \ldots f_k}) \le cn^{-1/2}\sup_{j}\norm{f_j}_{L_2(\Pn)} \sqrt{\log(k)}$ \citep[e.g.,][Exercise 7.5.10]{vershynin2018high}.
\end{proof}

\subsection{Regularity and Efficiency}
\label{sec:proof-of-efficiency} 

In this section, we will prove the claims about regularity and efficiency in Theorem~\ref{theo:simple}. We express our estimand
as a functional $\chi(P)$ of the distribution of the observed data, defined by $\chi(P)=\psi_P(m_P)$ for $\psi_P(m) = \E_P h(Z,m)$ and $m_P(z)=E_P[Y \mid Z=z]$.
The first step of our proof is characterizing the tangent space at $\P$ in our model. 
Having done this, we calculate the derivative $\dot\chi_P$ of $\chi$ at $\P$ on this tangent space.
An estimator $\hat \chi$ for a differentiable functional $\chi(\P)$ with the asymptotic characterization 
$\hat \chi - \chi(\P) = \Pn \influence(Y,Z) + o_P(1/\sqrt{n})$ 
is regular iff $\E_P \influence(Y,Z)g(Y,Z) = \dot\chi_P(g)$ for all scores $g$ in the tangent space and
asymptotically efficient iff it is regular and $\influence(y,z)$ is in the closure of the tangent space \citep[Section 1.2 and Example 4.6]{van2000semiparametric}.

\subsubsection{The tangent space}
We will show that the tangent space at $\P_0=\P$ to the set of all submodels $\P_t$ 
for which the regression functions satisfy $m_{\P_t} \in \mm$
and $\norm{m_{\P_t} - m_{\P}}_{L_2(\P)} \to 0$ as $t \to 0$ and the squares of $\epsilon_t=Y_t-m_{P_t}(Z_t)$ for $Y_t,Z_t \sim \P_t$ are uniformly integrable is
\[ \tangent = \set*{ a(z) + b(y,z) \in L_2(\P) : \E_P[a(Z)]=0, \E_P[b(Y,Z) \mid Z]=0, \E_P[Y b(Y,Z) \mid Z=z] \in \mm}. \]
To show that $\tangent$ contains the tangent space, we will show that
the score $g$ of every such submodel $P_t$ is in $\tangent$.
To show that $\tangent$ is contained in the tangent space, we construct such a submodel
for each score $g \in \tangent$. 

\paragraph*{Containment of the tangent space in $\tangent$}
We'll begin with a non-rigorous argument. 
Consider a submodel $P_t$ with factored density $p_t(y,z)=p_t(z) p_t(y \mid z)$ with respect to a product measure $\mu_y \times \mu_z$
and suppose that it is differentiable both pointwise and in quadratic mean, so its score
function is the derivative at $t=0$ of the log likelihood $\log p_t (z) + \log p_t(y \mid z)$.
Call the first term $a(z)$ and the second term $b(y,z)$ --- it is well known that given the score 
$g(y,z)=a(z)+b(y,z)$, we can uniquely recover $a(z)=\E_P[g(Y,Z) \mid Z=z]$ and $b(y,z)=g(y,z)-a(z)$.
A submodel must satisfy $\int y p_t(y \mid z)d\mu_y =\E_{P_t}[Y \mid Z=z] = m_{P_t}(z)$ for $m_{P_t} \in \mm$, 
and assuming we can interchange differentiation and integration,
this implies  
\begin{align*} 
\lim_{t=0}t^{-1}(m_{P_t}(z) - m_P(z))
&= \int y \frac{\partial}{\partial t}\mid_{t=0} p_t(y \mid z) d\mu_y \\
&= \int y \frac{\partial}{\partial t}\mid_{t=0} \log p_t(y \mid z) p_0(y \mid z) d\mu_y \\
&= \E_{P}[Y b(Y,Z) \mid Z=z]. 
\end{align*}
And as $m_{P_t} - m_P \in \mm$ for all $t$, this implies that $\E_{P}[Y b(Y,Z) \mid Z=z] \in \mm$,
so $g \in \tangent$.

To prove this rigorously, we must show that that for any quadratic-mean differentiable submodel $P_t$,
its score $g$ is in $\tangent$. To do this, we begin by simplifying the condition $\E_P[Y b(Y,Z) \mid Z=z] \in \mm$ 
characterizing $\tangent$. This condition is equivalent to the condition $\E_P[f(Z) \E_P[Y b(Y,Z) \mid Z]]=0$
for all $f \in \mm_{\perp}$, the $L_2(\P)$-orthogonal complement of $\mm$. Furthermore, for all bounded $f$,
this is equivalent to the condition $\E_P[f(Z) Y  b(Y,Z)]=0$. Now suppose that this condition holds for all bounded $f$
and recall that we've assumed that $\mm_{\perp}$ has a $\norm{\cdot}_{L_2(P)}$-dense subset of bounded functions.
Each unbounded $f \in \mm_{\perp}$ is the limit of a sequence $f_j \in \mm_{\perp}$ of bounded functions satisfying
$\E_P[f_j(Z) \E_P[Y b(Y,Z) \mid Z]]=0$, and by the Cauchy-Schwarz inequality, $\E_P[(f(Z)-f_j(Z)) \E_P[Y b(Y,Z) \mid Z]] \to 0$ 
and therefore $\E_P[f(Z) \E_P[Y b(Y,Z) \mid Z]] = 0$. Thus, it suffices to show that $\E_P[f(Z) Y b(Y,Z)]=0$ 
for all bounded $f \in \mm_{\perp}$. We will use this to formalize the argument above.

Let $P_t$ be any one-dimensional parametric submodel with score $g$. For a sequence $t_j \to 0$, let $p_t$ and $p$ be densities of $P_{t}$ and $P$ 
with respect to a dominating probability measure $\mu$, so $t^{-1}(\sqrt{p_t}-\sqrt{p}) \to (1/2)g\sqrt{p}$ in $L_2(\mu)$.
Letting $h_t = f(z)(y-m_{P_t}(z))$, 
\begin{equation}
\label{eq:bounded-score-argument}
\begin{aligned} 
&t^{-1}(\E_{P_{t}}[h_{t}(Y,Z)] - \E_{P}[h_{t}(Y,Z)]) - \E_{P}[h_0(Y,Z)g(Y,Z)] \\
&= \int h_t \cdot t^{-1}(p_t - p) d\mu - \int h_0 g p d\mu \\
&= \int h_t \cdot [t^{-1}(\sqrt{p_t} - \sqrt{p})(\sqrt{p_t} + \sqrt{p}) - gp]d\mu + \int (h_t-h_0) g p d\mu.
\end{aligned}
\end{equation}
If this difference goes to zero for every $f \in \mm_{\perp}$, 
this would imply that $g \in \tangent$,
as $\E_{P_{t}}[h_{t}(Y,Z)] - \E_{P}[h_{t}(Y,Z)]=0$ for all $t$ and $\E_{P}[h_0(Y,Z)g(Y,Z)]=\E_P[f(Z)Yb(Y,Z)]$.
To see this, observe that for every $f \in \mm_{\perp}$,
$\E_{P_t}[f(Z)(Y-m_{P_t}(Z))] = 0$ because $\E_{P_t}[Y-m_{P_t}(Z)\mid Z]=0$,
$\E_{P}[f(Z)(Y-m_{P_t}(Z))] = \E_{P}[f(Z)(m_P-m_{P_t})(Z)] = 0$ 
because $f \in \mm_{\perp}$ and $m_P - m_{P_t} \in \mm$ for any submodel,
and $\E_P[f(Z)(Y-m_P(Z))a(Z)] = 0$ and $\E_P[f(Z)m_P(Z)b(Y,Z)]=0$
because $\E_P[Y-m_{P}(Z)\mid Z]=0$ and $\E_P[b(Y,Z) \mid Z]=0$.

If $h_t$ were bounded, this difference would go to zero.
Because $t^{-1}(\sqrt{p_t} - \sqrt{p}) \to (1/2)g\sqrt{p}$ in $L_2(\mu)$,
$\sqrt{p_t} + \sqrt{p} \to 2\sqrt{p}$ in $L_2(\mu)$ and it follows that $t^{-1}(\sqrt{p_t} - \sqrt{p})(\sqrt{p_t} + \sqrt{p}) \to (1/2)g\sqrt{p} \cdot 2\sqrt{p} = gp$ in $L_1(\mu)$.
Thus, by H\"older's inequality, the first term in the last line of \eqref{eq:bounded-score-argument} would go to zero. And because 
$h_t-h_0 = f(z)(m_P - m_{P_t})$ converges to zero in $L_2(\P)$, the Cauchy-Schwarz bound on the second term converges to zero. 
We conclude by using a truncation argument to show that this happens though $h_t$ is not, in general, bounded.

Let $h_t^K = h_t 1(\abs{y - m_{P_t}}  \le K)$, so $\norm{h_t^K}_{\infty} \le K\norm{f}_{\infty}$.  
\begin{equation}
\label{eq:bounded-score-argument-trunc}
\begin{aligned} 
&t^{-1}(\E_{P_{t}}[h_{t}(Y,Z)] - \E_{P}[h_{t}(Y,Z)]) - \E_{P}[h_0(Y,Z)g(Y,Z)] \\
&= \int (h_t - h_t^K) \cdot t^{-1}(p_t - p) d\mu\ +\  \int h_t^K \cdot [t^{-1}(\sqrt{p_t} - \sqrt{p})(\sqrt{p_t} + \sqrt{p}) - gp]d\mu \\
&+ \int (h_t^K-h_0^K) g p d\mu \ +\  \int (h_0^K - h_0) g p d\mu.
\end{aligned}
\end{equation}
Calling these terms $\delta^k_{t,K}$ for $j \in 1\ldots 4$,
we will show that $\lim_{K \to \infty}\lim_{t \to 0}\delta^j_{t_K}=0$.
By H\"older's inequality as described above, for any finite $K$, the second term goes to zero as $t \to 0$.
And as $\abs{h_t^K - h_0^K} \le \abs{h_t - h_0}$, it follows that the Cauchy-Schwarz bound on the third term
is smaller than $\norm{h_t - h_0}_{L_2(\P)}\norm{g}_{L_2(\P)}$, which goes to zero as $t \to 0$. 
This leaves the first and fourth terms. The Cauchy-Schwarz bound on the fourth goes to zero as $K \to \infty$ if $\norm{h_0^K - h_0}_{L_2(\P)} \to 0$,
and we will now show the relevant limit of the first term is zero if, in addition, \smash{$\lim_{K \to\infty}\lim_{t \to 0}\norm{h_t^K - h_t}_{L_2(\P_t)} = 0$.}
We use the following decomposition.
\begin{align*}
\int (h_t-h_t^K) \cdot t^{-1}(p_t - p) d\mu 
&= \int (h_t-h_t^K)(\sqrt{p_t} + \sqrt{p}) \cdot t^{-1}(\sqrt{p_t} - \sqrt{p} - (t/2)g\sqrt{p}) d\mu \\
&+ \int (h_t-h_t^K)(\sqrt{p_t} + \sqrt{p}) \cdot (1/2)g\sqrt{p} d\mu \\
&\le \norm{(h_t-h_t^K)(\sqrt{p_t} + \sqrt{p})}_{L_2(\mu)} \cdot t^{-1}\norm{\sqrt{p_t} - \sqrt{p} - (t/2)g\sqrt{p}}_{L_2(\mu)} \\
& +  \norm{(h_t-h_t^K)(\sqrt{p_t} + \sqrt{p})}_{L_2(\mu)} \cdot (1/2)\norm{g\sqrt{p}}_{L_2(\mu)}. 
\end{align*}
As $t^{-1}\norm{\sqrt{p_t} - \sqrt{p} - (t/2)g\sqrt{p}}_{L_2(\mu)} \to 0$ and $\norm{g\sqrt{p}}_{L_2(\mu)} = \norm{g}_{L_2(\P)} < \infty$,
this is bounded by a constant (in $t$) multiple of $\norm{(h_t-h_t^K)(\sqrt{p_t} + \sqrt{p})}_{L_2(\mu)}$. Furthermore, 
because $(\sqrt{p_t} + \sqrt{p_t})^2 \le 2p_t + 2p$, this is bounded by a constant multiple of $\norm{h_t-h_t^K}_{L_2(\P_t)} + \norm{h_t - h_t^K}_{L_2(\P)}$.
And the relevant limit of this sum goes to zero if 
\begin{equation}
\label{eq:two-limits}
 \lim_{K \to \infty}\lim_{t \to 0}\norm{h_t-h_t^K}_{L_2(\P_t)}=0 \ \text{ and }\ \lim_{K \to \infty}\norm{h_0 - h_0^K}_{L_2(\P)}=0,
\end{equation}
as the second and third terms in the following bound go to zero as $t \to 0$ for any $K$.
\begin{align*} 
\norm{h_t - h_t^K}_{L_2(\P)}
&=\norm{ (h_0 - h_0^K) + (h_t - h_0) + (h_t^K-h_0^K) }_{L_2(\P)} \\
&\le \norm{ h_0 - h_0^K}_{L_2(\P)} + \norm{h_t - h_0}_{L_2(\P)} + \norm{h_t^K-h_0^K}_{L_2(\P)}. 
\end{align*}
All that is left is to show that \eqref{eq:two-limits} holds. It suffices to consider the case of $h_t = y-m_{P_t}$, as
\[ \norm{h_t - h_t^K}_{L_2(\P_t)} = \norm{f \cdot (y-m_{P_t}) 1(\abs{y - m_{P_t}}  > K)}_{L_2(\P_t)} \le \norm{f}_{\infty} \norm{(y-m_{P_t}) 1(\abs{y - m_{P_t}}  > K)}_{L_2(\P_t)}. \]
And for this $h_t$, \eqref{eq:two-limits} is implied by the uniform integrability property we impose on our paths.

\paragraph*{Containment of $\tangent$ in the tangent space}

We will show that each element $a(z)+b(y,z)$ of $\tangent$ 
is the score of a one-dimensional parametric submodel $P_t$
with $E_{P_t}[Y \mid Z=z] = m_P(z) + t m'(z)$ for $m'(z) = \E[ Y b(Y,Z) \mid Z=z]$.
To do this, we use a tilting construction \citep[see e.g.,][Section 4.5]{tsiatis2007semiparametric}. 
Factor $P$ into the product of a regular conditional distribution $P(\cdot \mid Z=z)$ on $Y$ and a marginal $P^M$ on $Z$.
We define this submodel $P_t$ by  
choosing a nonnegative continuously differentiable function $k(x)$ satisfying $k(0)=k'(0)=1$ with $k,k'$ bounded in a neighborhood of zero, 
e.g., $k(x)=2(1+e^{-2x})^{-1}$ \citep[Example 1.12]{van2000semiparametric},
and taking
\begin{equation}
\label{eq:submodels}
\begin{aligned}
&d\P_t(y \mid Z=z)/d\P(y \mid Z=z) = \frac{k(c_t(z)y + tb(y,z))}{\E_P[ k(c_t(Z)Y + tb(Y,Z)) \mid Z=z]}, \\
&d\P_t^M(z) / d\P^M(z) = \frac{k(ta(z))}{\E_P k(ta(Z))}, 
\end{aligned}
\end{equation}
for $c_t$ satisfying 
\[ \E_{P_t}[Y \mid Z=z] = \frac{\E_P[Y k(c_t(Z)Y + tb(Y,Z)) \mid Z=z]}{\E_P[k(c_t(Z)Y + tb(Y,Z)) \mid Z=z]} = m_P(z) + tm'(z). \] 
If we take $c_t(0)=0$, this condition is satisfied for $t=0$, and we will use the implicit function theorem 
to characterize $c_t$ for which it holds on a neighborhood of zero. This requires that the function 
\[ f(t,c) = \frac{\E_P[Y k(cY + tb(Y,Z)) \mid Z=z]}{\E_P[k(c Y + tb(Y,Z)) \mid Z=z]} - \sqb{m_P(z) + tm'(z)} \]
be continuously differentiable and that its partial derivative with respect to $c$ be nonzero at $(c,t)=0$,
and it implies that the solution $c_t$ is continuously differentiable with 
$c_t'(z) = -(\partial f/\partial t)(t, c_t(z)) / (\partial f/\partial c)(t, c_t(z))$.
By the sum and quotient rules of calculus, $f$ is continuously differentiable if the numerator and denominator of
the first term of $f$ are, and this reduces to continuous differentiability of the integrands in the numerator and denominator, 
as we can interchange integration and differentiation because the derivatives of the integrand are dominated.
In particular, because $k'$ is bounded, the partials with respect to $c$ and $t$ 
of the integrand in the numerator are dominated by multiples of $Y^2$ and $Yb(Y,Z)$ respectively;
in the denominator the same goes for $\abs{Y}$ and $\abs{b(Y,Z)}$. Thus, we calculate
\begin{align*} 
(\partial f/\partial t)(t,0) \mid_{t=0} &= \E_P[Yb(Y,Z) \mid Z=z] - \E_P[Y \mid Z=z]\E_P[b(Y,Z) \mid Z=z] - m'(z) \\
                                        &= \E_P[Yb(Y,Z) \mid Z=z] - m'(z) = 0, \\
(\partial f/\partial c)(0,c) \mid_{c=0} &= \E_P[Y^2 \mid Z=z] - (\E_P[Y \mid Z=z])^2 \\
                                        &= \var_P[Y \mid Z=z], \\
c_0'(z) &= -0 \ /\ \var_P[Y \mid Z=z] = 0.
\end{align*}
We will check that this yields a valid submodel. By construction, our densities are nonnegative and integrate to one
and $m_{P_t}(z) = m(z) + tm'(z)$ is in $\mm$ for all $t$ and satisfies $\norm{m_{P_t} - m}_{L_2(\P)} = t\norm{m'}_{L_2(\P)} \to 0$ as $t \to 0$.
The remaining condition is the uniform integrability property $\lim_{K \to \infty}\E_{P_t}(Y-m_{P_t})^2 1((Y-m_{P_t})^2 \ge K)=0$ for $t$ in a neighborhood of zero.
Because the denominators in \eqref{eq:submodels} are near one in a neighborhood of $0$, it suffices that
\[ \lim_{K \to \infty}\E_P[ k(c_t(Z) + t b(Y,Z))k(t(a(Z))) (Y - m(Z)-tm'(Z))^2 1((Y-m(Z) - tm'(Z))^2 \ge K) = 0. \]
Furthermore, because $k$ is bounded this is equivalent to the uniform integrability of $(Y - m(Z)-tm'(Z))^2$ under $\P$,
and this holds because for $t \le 1$, these are dominated by the integrable quantity $2(Y-m(Z))^2 + 2 m'(Z)^2$.

Finally, we check that this submodel is differentiable in quadratic mean with the intended score  $a(Z) + b(Y,Z)$.
By design, this intended score is the derivative at zero of the log of the density $p_t = d\P_t/d\P$.
\begin{align*}
&\frac{d}{dt}\mid_{t=0} \log p_t =\frac{d}{dt}\mid_{t=0} \log \p{ d\P_t^M / d\P^M } +\frac{d}{dt}\mid_{t=0} \log\p{ d\P_t(\cdot \mid Z)/d\P(\cdot \mid Z) } \\ 
&= \frac{d}{dt}\mid_{t=0} \set*{\log k(ta(Z)) - \log \E_P[ k(ta(Z))]} \\
& + \frac{d}{dt}\mid_{t=0}\set*{\log k(c_t(Z)Y + tb(Y,Z)) - \log \E_P[ k(c_t(Z)Y + t b(Y,Z))  \mid Z] } \\ 
&=a(Z) - \E_P[ a(Z) ] + c_0'(Z)Y + b(Y,Z) - \E_P[ c_0'(Z)Y + b(Y,Z)  \mid Z]  \\ 
&=a(Z) + b(Y,Z).
\end{align*}
Here we've used the property $c'_0=0$ derived above. All that is left is to show that our submodel is differentiable in
quadratic mean. Via \citet[Lemma 1.8]{van2000semiparametric}, 
it suffices to show that $\sqrt{p_t}$ is continuously differentiable near zero and
$\int [(p'_t)^2 / p_t] d\P$ is finite. These properties follow from our assumptions on $k$ and
and our characterization of $c_t$ as continuously differentiable in a neighborhood of zero.

\subsubsection{The Pathwise Derivative of $\chi$, Regularity, and Efficiency}
\label{sec:pathwise-derivative}
We will calculate the derivative of our functional $\chi(P)$ on the tangent space $\tangent$.
As discussed above, for any score $g \in \tangent$, there is a submodel $P_t$ of the form defined in \eqref{eq:submodels}, with regression function $m_{P_t}(z) = m_P(z) + t m'(z)$ for $m' \in \mm$. Furthermore, its score satisfies $m'(z) = \E_P[Y b(Y,Z) \mid Z=z]=\E_P[(Y-m_P(Z))g(Y,Z) \mid Z=z]$. 
The form of our path makes it easy to calculate the derivative of $\chi(P_t)$, as $\chi(P_t)$ depends only on $m_{P_t}$ and the marginal distribution of $Z$.

\begin{align*}
&(\partial/\partial t)\mid_{t=0} \chi(P_t) \\ 
&= (\partial/\partial t) \mid_{t=0}   \E_{P_t}[ h(Z,m_{P_t}) ]  \\
&= (\partial/\partial t) \mid_{t=0} \p{  \E_{P}[ k(ta(Z)) h(Z,m_{P_t})] / \E_{P}[ k(ta(Z))] } \\
&=  \p{ (\partial/\partial t) \mid_{t=0} \E_{P}[ k(ta(Z)) h(Z,m_{P_t}) ] } -  \p{(\partial/\partial t)\mid_{t=0}\E_P[ k(ta(Z))]}\E_{P}[ h(Z,m_{P}) ] \\
&= \E_{P}[ (\partial/\partial t)\mid_{t=0}  k(ta(Z)) h(Z,m_P + t m') ] - \E_P[ (\partial/\partial t)\mid_{t=0} k(ta(Z))]\E_{P}[ h(Z,m_{P}) ] \\
&= \E_P[ a(Z) h(Z, m_P)] + \E_P[ h(Z, m') ] - \E_P[a(Z)]\E_{P}[ h(Z,m_{P}) ]
\end{align*}
The third term here is zero, as $\E_P[a(Z)]=0$, and we rely on dominated convergence to interchange integration and differentiation.
Dominatedness follows, via the mean value theorem, from the boundedness of $k'(\cdot)$ and the square integrability of $a(z)$,
 $h(z,m)$, and $h(z,m')$. 

Recalling that $g(Y,Z)=a(Z) + b(Y,Z)$ where $\E_P[b(Y,Z) \mid Z] = 0$ and $\E_P[a(Z)]=0$, 
the first term above is equal to $\E_P[ g(Y,Z) (h(Z,m_P) - \psi_P(m_P))]$.
Furthermore, if $\riesz[\psi]$ is the Riesz representer for $\psi_P(\cdot) = \E_P[ h(Z, \cdot)]$ on a superset of $\mm$, we can write the second term as 
$\E_P[\riesz[\psi](Z) m'(Z)]= \E_P[\riesz[\psi](Z) \E_P[(Y - m_P(Z)) g(Y,Z) \mid Z]]$. 
Thus, $\influence(y,z) = h(z,m_P) - \psi_P(m_P) + \riesz[\psi](z)(y-m_P(z))$ is an influence function, as 
$\E_P[ \influence(Y,Z) g(Y,Z)]=(\partial/\partial t)\mid_{t=0}$ $\chi(P_t)$. This establishes our regularity claim.

Furthermore, $\influence$ is in the closure of the tangent space $\tangent$, and therefore the efficient influence function,
 if and only if $\E_P[ (Y-m_P(Z)) \influence(Y,Z) \mid Z=z]$ is in the closure of $\mm$.
As $\E_P[ (Y-m_P(Z)) \influence(Y,Z) \mid Z=z] = \E_P[(Y-m_P(Z))^2 \mid Z=z] \riesz[\psi](z)$,
this happens if and only if $v \riesz[\psi]$ is in the closure of $\mm$ for $v(z) = \E_P[(Y-m_P(Z))^2 \mid Z=z]$. 
This completes our proof.

Note that if $\psi_P$ is not continuous on $\mm$, $\chi$ is not differentiable at $P$:
for a submodel $P_t$ with constant marginal distribution on $Z$, $t^{-1}(\chi(P_t)-\chi(P)) = \psi_P(m')$.
Thus, as the existence of a regular estimator for $\chi(P)$ implies the differentiability of $\chi$ at $\P$ \citep[Theorem 2.1]{van1991differentiable}, 
it implies the continuity of $\psi_P$ on $\mm$. 

\subsection{Estimating $\riesz[\psi]$ at the optimal rate}
\label{sec:estimating-riesz-optimal}
Here we consider the optimality of the rate $\norm{\hriesz - \riesz[\psi]}_{L_2(P)} = O_P(r)$ discussed in Section~\ref{sec:tuning-sigma}.
We use the notation of Theorem~\ref{theo:simple-rate}. 

If $\ff$ is a class of uniformly bounded functions with empirical metric entropy 
\smash{$\log N(\ff; $} \smash{$\L{2}{\Pn}; \epsilon) = O(\epsilon^{-2\rho})$} for $\rho > 1$, it can be shown that $r_Q = $ \smash{$O(n^{-1/(2+2\rho)})$}
\citep[see e.g.,][Equation 2.4]{koltchinskii2006local}. Furthermore, if $\riesz[\psi]$ is bounded
and the map $f \to h(\cdot,f)$ is well-behaved, $\riesz[\psi]\ff$ and $h(\cdot,\ff)$ will satisfy the same entropy bound, and
$r$ will have this rate as well. When $\ff$ is the unit ball of a H\"older space $C^s$ of functions on $[0,1]^d$
with $s>d/2$, we have such an entropy bound with $\rho = d/(2s)$ \citep{tikhomirov1993varepsilon, van1994bracketing},
and we get the minimax rate $n^{-1/(2+d/s)}$ for estimating a function $f$ with \smash{$\norm{f}_{\ff} < \infty$}
from direct observations of $f(Z_i) + \xi_i$ with gaussian noise $\xi_i$ \citep[Theorem 3.2]{gyorfi2006distribution}.

While the general problem of estimating a Riesz representer is nonstandard, one point of reference is Example~\ref{exam:mar},
in which $\riesz[\psi](x,w)=w/e(x)$ for $e(x)=\E[W_i \mid X_i=x]$. If $e(x)$ is bounded away from zero,
$\riesz[\psi]$ and $e(x)$ are estimable at the same rate, and in the case that \smash{$\norm{e}_{C^s} < \infty$},
the minimax rate for estimating directly observed $e(x)$ is \smash{$n^{-1/(2+d/s)}$.}
In this example, our estimator $\hriesz[\psi]$ for \smash{$\ff = \set{ w g : \norm{g}_{C^s}  \le 1}$} 
attains the minimax rate, as $\norm{\riesz[\psi]}_{\ff} < \infty$ and $\ff$ has entropy comparable to that of the unit ball of our H\"older space.

\section{Simulation Study: Details}
\label{sec:simu_details}

Here we describe the cross-fitting scheme used to estimate 
$\htau(X_i)$ and $\hmu(X_i)$ in the simulation study discussed in Section~\ref{sec:simu}.
Ten-fold cross-fitting is used throughout: where $\htau(X_i)$ and $\hmu(X_i)$ appear in
\eqref{eq:ape_aml} and \eqref{eq:ape_dr}, we use estimators $\htau^{(-i)}$ and $\hmu^{(-i)}$
trained on the folds that do not include unit $i$. This 
reduces dependence on $(Y_i,X_i,W_i)$ and therefore mitigates potential own-observation bias 
in $\hpsi_{DR}$ \citep[see e.g.,][]{chernozhukov2016double}. However, we do get some dependence
through the estimates of $\hat r$ and $\he$ used to train $\htau$ and
through lasso tuning parameters, which are chosen once for all $i$ by cross-validation.
This dependence can be eliminated using a computationally demanding nested sample splitting scheme;
we follow the \texttt{grf} package of \citet*{athey2016generalized} in using the following simplified scheme.

\begin{enumerate}
\item Partition the indices $1 \ldots n$ into $K=10$ folds of equal size, associating each index with a fold $f_i \in 1 \ldots K$.
\item For each fold $j \in 1 \ldots K$, train $\hat r_{j,\lambda_r}$ and $\hat e_{j,\lambda_e}$ on observations for $\set{ i : f_i \neq j}$
      for fixed values $\lambda_r, \lambda_e$ of the lasso penalty parameter. 
      Choose values $\hat\lambda_r, \hat\lambda_e$  by cross-validation, solving
     \[ \hat\lambda_r = \argmin_{\lambda_r} \sum_{i=1}^n (Y_i - \hat r_{f_i, \lambda_r}(X_i))^2, \quad  \hat\lambda_e = \argmin_{\lambda_e}\sum_{i=1}^n (W_i - \hat e_{f_i, \lambda_e}(X_i))^2. \] 
\item For each fold $j \in 1 \ldots K$, train $\htau_{j,\lambda_{\tau}}(x) = \phi(x)^T \hat\beta_j$ for fixed $\lambda_{\tau}$ by  
\[ \hbeta_{j} = 
        \argmin_{\tau} \sum_{i : f_i \neq j} (Y_i - \hat r_{f_i,\hat\lambda_r}(X_i) - (W_i - \hat e_{f_i,\hat\lambda_e}(X_i)) \phi(X_i)^T \beta_j)^2 + \lambda_{\tau}\norm{\beta_j}_{\ell_1}. \]
Choose a value $\hat\lambda_{\tau}$ by cross-validation, solving
\[ \hat\lambda_{\tau} = \argmin_{\lambda_\tau}\sum_{i=1}^n (Y_i - \hat r_{f_i,\hat\lambda_r}(X_i) - (W_i - \hat e_{f_i,\hat\lambda_e}(X_i)) \htau_{f_i,\lambda_{\tau}}(X_i))^2. \] 
\item Define $\htau^{(-i)}=\htau_{f_i, \hat\lambda_{\tau}}$ and $\hmu^{(-i)}=\hat r_{f_i,\hat\lambda_r}(x)- \htau_{f_i, \hat\lambda_\tau}(x) \he_{f_i,\hat\lambda_e}(x)$.
\end{enumerate}

\section{Computing the Weights}
\label{sec:computation}

The optimization problem \eqref{eq:aml-primal} that defines our weights $\hriesz$, 
\[ \hriesz = \argmin_{\gamma \in \R^n} I_{h,\ff}^2(\gamma) + \frac{\sigma^2}{n^2}\norm{\gamma}^2,
\ \ I_{h,\ff}(\gamma) = \sup_{f \in \ff} \frac{1}{n} \sum_{i = 1}^n [\gamma_i f(Z_i) - h(Z_i,f)]. \] 
is strongly convex and not extremely high dimensional, so it is often fairly tractable.
However, because the objective function involves a supremum over a set $\ff$,
 it is helpful to reformulate the problem for implementation. We will discuss the case
that $\ff$ is the absolutely convex hull $\absconv\set{\phi_1,\phi_2, \ldots}$.
We will assume, in addition, that we have decay in $\phi_j$ and $h(\cdot,\phi_j)$ that justifies working with a finite
dimensional approximation to $\ff$, the absolutely convex hull
$\ff_p$ of the first $p$ basis functions. When we do this, our optimization problem above can be expressed as a finite-dimensional quadratic program,
\begin{align*}
&\argmin_{(t, \gamma) \in \RR^{n+1}} t^2 + \frac{\sigma^2}{n^2}\norm{\gamma}^2 && \text{subject to}\ \\
& \abs*{n^{-1}\sum_{i=1}^n[ \gamma_i \phi_j(Z_i) - h(Z_i,\phi_j)] } \le t && \text{for }\ j \in 1 \ldots p,
\end{align*} 
as H\"older's inequality is sharp on $\ell_1$, i.e.,
\begin{align*} 
I_{h,\ff_p}(\gamma)&=\sup_{\norm{\beta}_{\ell_1} \le 1} \sum_{j=1}^p\beta_j \p{n^{-1}\sum_{i=1}^n[\gamma_i \phi_j(Z_i) - h(Z_i,\phi_j)]} \\
    &= \max_{j \le p} \abs*{n^{-1}\sum_{i=1}^n [\gamma_i \phi_j(Z_i) - h(Z_i,\phi_j)]}.
\end{align*} 

Our implementation of the average partial effect estimator described in Section~\ref{sec:simu},  included in the
\texttt{R} package \texttt{amlinear}, uses a variant of this formulation appropriate to the class
$\ff_{\hh}$ \eqref{eq:F_ape}:
\begin{align*}
&\argmin_{(t_{\mu},t_{\tau}, \gamma) \in \RR^{n+2}} t_{\mu}^2+t_{\tau}^2 + \frac{\sigma^2}{n^2}\norm{\gamma}^2 && \text{subject to}\ \\
& \abs*{n^{-1}\sum_{i=1}^n \gamma_i \phi_j(Z_i) } \le t_{\mu} && \text{for }\ j \in 1 \ldots p, \\
& \abs*{n^{-1}\sum_{i=1}^n (W_i \gamma_i - 1) \phi_j(Z_i) } \le t_{\tau} && \text{for }\ j \in 1 \ldots p.
\end{align*} 

\paragraph*{Finite-dimensional approximation}
To determine the number of basis functions we need to include in $\ff_p$, 
we consider the conditional bias term in \eqref{eq:error-decomp}. For any weights $\gamma$, it is bounded by 
\[ \norm{\hm - m}_{\ff}I_{h,\ff}(\gamma) = \norm{\hm - m}_{\ff}\sqb{ I_{h,\ff_p}(\gamma) + I_{h,\ff}(\gamma) - I_{h,\ff_p}(\gamma)}. \]
In particular, for 
\[ \hgamma  = \argmin_{\gamma \in \R^n} I_{h,\ff}^2(\gamma) + \frac{\sigma^2}{n^2}\norm{\gamma}^2,\quad
   \hgamma_p = \argmin_{\gamma \in \R^n} I_{h,\ff_p}^2(\gamma) + \frac{\sigma^2}{n^2}\norm{\gamma}^2, \]
we have the bound
\begin{align*} 
&\norm{\hm - m}_{\ff} I_{h,\ff_p}(\hgamma_p) + \norm{\hm - m}_{\ff}\p{I_{h,\ff}(\hgamma_p) - I_{h,\ff_p}(\hgamma_p)} \\
&\le \norm{\hm - m}_{\ff} I_{h,\ff}(\hgamma) + \norm{\hm - m}_{\ff}\p{I_{h,\ff}(\hgamma_p) - I_{h,\ff_p}(\hgamma_p)}.
\end{align*}
The excess that results from the use of $\hgamma_p$ instead of $\hgamma$ is bounded by the latter term. In this term, the difference 
$I_{h,\ff}(\hgamma_p) - I_{h,\ff_p}(\hgamma_p)$ is bounded by 
\begin{align*} 
&\sup_{f \in \ff}\inf_{f \in \ff_p}\p{\norm{\hgamma_p}\norm{f-f'}_{L_2(\Pn)} + \norm{h(\cdot,f-f')}_{L_2(\Pn)}} \\
&\le \norm{\hgamma_p}\sup_{j > p}\norm{\phi_j}_{L_2(\Pn)} + \sup_{j > p}\norm{h(\cdot,\phi_j)}_{L_2(\Pn)},
\end{align*}
so we choose $p$ to control these suprema. To ensure that this excess is negligible relative to variance, so our estimator is
asymptotically linear under the assumptions of Theorem~\ref{theo:simple}, these suprema must be $o_P(n^{-1/2})$.

\subsection{A dual approach}
It is also possible to use the dual characterization of $\hgamma$ given by Lemma~\ref{lemma:duality},
\[ \hriesz_i = \hg(Z_i) \ \text{ for }\ \hg = \argmin_g n^{-1}\sum_{i=1}^n [g(Z_i)^2 - 2 h(Z_i, g)] + \sigma^2 n^{-1}\norm{g}_{\ff_p}^2, \]
i.e., $\hriesz_i = \phi(Z_i)^T \hat\beta$ for 
\begin{align*}
&\hbeta = \argmin_{\beta \in \R^p} \beta^T \Phi \beta - 2h^T \beta + \sigma^2 n^{-1}\norm{\beta}_{\ell_1}^2, \\ 
&\Phi = n^{-1}\sum_{i=1}^n \phi(Z_i)\phi(Z_i)^T, \ h_j = n^{-1}\sum_{i=1}^n h(Z_i, \phi_j).
\end{align*}
We can solve this by splitting $\beta$ into positive and negative parts, which gives an equivalent second order cone program: $\hbeta=\hbeta^+ - \hbeta^-$ where the latter solve 
\begin{align*}
&\argmin_{(s,t,\beta^+, \beta^-) \in \R^{2p+2}} s - 2\begin{pmatrix} h & -h\end{pmatrix}\begin{pmatrix}\beta^+ \\ \beta^-\end{pmatrix} + \sigma^2 n^{-1} t \quad \text{ subject to} \\
&\beta^+,\ \beta^- \ge 0, \\
&\begin{pmatrix}
\beta^+ \\
\beta^-
\end{pmatrix}^T
\begin{pmatrix}
\hphantom{-}\Phi &\ -\Phi \\ 
-\Phi            &\ \hphantom{-}\Phi
\end{pmatrix}
\begin{pmatrix}
\beta^+ \\
\beta^-
\end{pmatrix} \le s, \\
&\p{ \sum_{j=1}^p \beta^+_j + \sum_{j=1}^p \beta^{-}_j}^2 \le t, \\
\end{align*}

\subsection{Computation in Hilbert Spaces}
The approaches discussed above rely on finite-dimensional approximation and efficient solvers for quadratic and second order cone programs.
In contrast, when $\ff$ is the unit ball of a Reproducing Kernel Hilbert Space, we can often solve the dual 
\[ \hg = \argmin_g n^{-1}\sum_{i=1}^n [g(Z_i)^2 - 2 h(Z_i, g)] + \sigma^2 n^{-1}\norm{g}_{\ff}^2 \]
without approximation by solving a $n \times n$ linear system. In particular, if $h(Z_i, g)$ is a function of $g(Z_i)$,
a well-known Representer theorem  states that the solution to this problem
has the form $\hg(z) = \sum_{j=1}^n \hat\alpha_j K(Z_j,z)$, where $K(z',z)$ is the kernel 
associated with our space \citep{scholkopf2001generalized}.
 When $\hg$ is known to have this form, we can calculate $\hat\alpha$ by substituting $g(z)=\sum_{j=1}^n \alpha_j K(Z_j, z)$
into our dual problem above and solving the resulting unconstrained quadratic optimization problem over $\alpha$,
\begin{align*} 
&\hat \alpha = \argmin_\alpha \alpha^T \bar K \alpha - 2 \bar h^T \alpha + \sigma^2 n^{-1} \alpha^T G \alpha, \\
&\bar{K}_{ij} = n^{-1}\sum_{k=1}^n K(Z_i, Z_k) K(Z_j,Z_k), \\ 
&\bar h_i = n^{-1}\sum_{k=1}^n h(Z_k, K(Z_i,\cdot)),\\
& G_{ij} = K(Z_i,Z_j).
\end{align*}
This approach works in Example~\ref{exam:mar}, as $h(Z_i,g)$ has the required form,
and variations apply in our other examples, 
as similar representer theorems hold under appropriate conditions \citep[see e.g.][]{argyriou2014unifying}.

\section{Consistency of penalized least squares estimators}
\label{appendix:penalized-least-squares}
In this section, we state and prove a consistency result for penalized least squares 
relevant to our claims in Remark~\ref{rema:consistency}. We base our presentation 
on that of \citet{lecue2017regularization}.

\begin{theo}
\label{theo:donsker-consistency}
Let $(Y_1,Z_1)\ldots(Y_n,Z_n) \sim \P$ be independent and identically distributed, 
let $m_{\star} = \argmin_{m \in \mm}\P (Y - m(Z))^2$ for closed convex $\mm \subseteq L_2(\P)$,
and let $\ell(m) =  \Pn (Y_i - m(Z_i))^2 + \lambda \norm{m}_{\F}$ for some norm $\norm{\cdot}_{\F}$.
If $\ell(\hm) \le \ell(m_{\star})$ for $\hm \in \mm$, then 
$\norm{\hm-m}_{\F} \le \alpha$, $\norm{\hm-m}_{L_2(\P)} \le \alpha r$, and 
$\norm{\hm-m}_{L_2(\Pn)} \le (2\eta_M \alpha r^2 + 2\lambda \norm{m_{\star}}_{\F})^{1/2}$   
\[ \text{ where }\ \alpha = \max\set{ 2\eta_M/\eta_Q + \sqrt{2\lambda \norm{m_{\star}}_{\F}/(\eta_Q r^2)},\ 2\norm{m_{\star}}_{\F}/(1-2\eta_M r^2/\lambda)}, \]
on an event on which, for all $f \in \F = \set{f : \norm{f}_{\F} \le 1}$,
\begin{equation}
\label{eq:least-squares-ratio-process}
\begin{aligned}
&\Pn f^2 \ge \eta_Q \P f^2          	       && \text{ if }\quad \P f^2 \ge r^2, \\
&\abs{(\P - \Pn)(Y-m_{\star})f} \le \eta_M r^2  && \text{ if }\quad \P f^2 \le r^2.
\end{aligned}
\end{equation}
\end{theo}

Taking $\mm = L_2(\P)$, this characterizes the consistency of an estimator $\hm$ of 
the regression function $m_{\star}(z) = \E[Y \mid Z=z]$. And with appropriate tuning, this simplifies.

\begin{coro}
Under the assumptions of Theorem~\ref{theo:donsker-consistency}, 
taking $\lambda=c\eta_Mr^2$ for any constant $c > 2$,  
$\norm{\hm-m}_{\F} \lesssim \alpha$, 
$\norm{\hm - m}_{L_2(\P)} \lesssim \alpha r$,
and $\norm{\hm - m}_{L_2(\Pn)} \lesssim \sqrt{\eta_M \alpha}r$ 
for $\alpha = \max(\eta_M/\eta_Q, \norm{m_{\star}})$.
\end{coro} 

The first condition in \eqref{eq:least-squares-ratio-process},
a uniform lower bound, holds with high probability for $r$ satisfying 
$R_n(\F_{c_0r}) \le c_1 r^2/\sup_{f \in \F}\norm{f}_{\infty}$ for constants $c_0$ and $c_1$ dependent on $\eta_Q$,
and variants apply to unbounded classes \citep{mendelson2017extending}.
The second, if the `noise' $\xi_i = Y_i - m_{\star}(Z_i)$ is in some bounded interval $[-b,b]$,
holds with high probability if $R_n(\F_{r}) \le \eta_M r^2/(2b)$
by symmetrization and contraction, and similar claims hold for unbounded but relatively well behaved noise
via multiplier inequalities \cite[see e.g.,][Section 3.14]{gine2015mathematical}.

\begin{proof}
The Hilbert space projection theorem implies that the minimizer $m_{\star}$ exists
and satisfies $\P(Y-m_{\star})(m-m_{\star}) \le 0$ for all $m \in \mm$ \citep[Proposition 1.37]{peypouquet2015convex}.
We use this property to lower bound the excess loss $\ell(m) - \ell(m_{\star})$. 
\begin{align*}
\ell(m) - \ell(m_{\star}) 
&= \Pn [(Y - m)^2 - (Y - m_{\star})^2] + \lambda (\norm{m}_{\F}- \norm{m_{\star}}_{\F}) \\
&= \Pn (m-m_{\star})^2 - 2\Pn (Y-m_{\star})(m-m_{\star}) + \lambda (\norm{m}_{\F}- \norm{m_{\star}}_{\F}) \\
&\ge \Pn (m-m_{\star})^2 + 2(\P - \Pn)(Y-m_{\star})(m-m_{\star})  + \lambda (\norm{m}_{\F} - \norm{m_{\star}}_{\F}) \\
&\ge \Pn (m-m_{\star})^2 - 2\abs{(\P - \Pn)(Y-m_{\star})(m-m_{\star})}  + \lambda (\norm{m-m_{\star}}_{\F} - 2 \norm{m_{\star}}_{\F}).
\end{align*}
In the last step, we use the triangle inequality bound $\norm{a-b} \ge \norm{a} - \norm{b}$ for $a=m-m_{\star}$ and $b=-m_{\star}$.
The last bound is a function of $\delta = m-m_{\star}$, which we will call $\ell'(\delta)$.

Our argument will be based on scaled versions of our assumed bounds \eqref{eq:least-squares-ratio-process}.
Using the scale invariance of the first and Lemma~\ref{lemm:rescaling} for the second, for all $f \in \alpha \F$,
\begin{equation}
\label{eq:erm-ratio-bounds-scaled}
\begin{aligned}
& \Pn f^2 \ge \eta_Q \P f^2                                  & \text{ if }\ \P f^2 \ge (\alpha r)^2, \\ 
&\abs{(\P-\Pn) (Y-m_{\star})f} \le \eta_M \P f^2 / \alpha    & \text{ if }\ \P f^2 \ge (\alpha r)^2, \\
&\abs{(\P-\Pn) (Y-m_{\star})f} \le \eta_M \alpha r^2         & \text{ if }\ \P f^2 \le (\alpha r)^2. \\
\end{aligned}
\end{equation}

We will show that $\ell'(f) > 0$ if $\norm{f}_{\F} \ge \alpha$. 
It suffices to do so for $\norm{f}_{\F}=\alpha$, as if $\norm{f}_{\F} \ge \alpha$,
$f=tg$ for $\norm{g}_{\F} = \alpha$ and $t \ge 1$,
and $\ell'(f) = \ell'(tg) \ge t\ell'(g)$, as  $\ell'(tg) - t\ell'(g)= (t^2-t)\Pn g^2 - 2\lambda (1-t)\norm{m_{\star}}_{\F}$.
When $\norm{f}_{\F}=\alpha$, 
$\ell'(f) \ge (\eta_Q - 2\eta_M/\alpha) \P f^2 +  \lambda (\alpha - 2 \norm{m_{\star}}_{\F})$ if $\P f^2 \ge (\alpha r)^2$
and $\ell'(f) \ge -2\eta_M\alpha r^2 + \lambda (\alpha - 2 \norm{m_{\star}}_{\F})$ otherwise,
so it will be positive if $\eta_Q - 2\eta_M/\alpha \ge 0$ and  $-2\eta_M\alpha r^2 + \lambda (\alpha - 2 \norm{m_{\star}}_{\F}) > 0$.
This holds for $\alpha \ge 2\max\set{\eta_M/\eta_Q, \norm{m_{\star}}_{\F}/(1-2\eta_M r^2/\lambda)}$.

We will now consider the case that $\norm{f}_{\F} \le \alpha$. In this case,
$\ell'(f) \ge (\eta_Q - 2\eta_M/\alpha) \P f^2 -  2 \lambda \norm{m_{\star}}_{\F}$ if $\P f^2 \ge (\alpha r)^2$
and $\ell'(f) \ge \Pn  f^2 - 2\eta_M\alpha r^2 - 2 \lambda \norm{m_{\star}}_{\F}$ otherwise. 
Thus, $\ell'(f) > 0$ if $\P f^2 > \max\set{(\alpha r)^2, 2 \lambda \norm{m_{\star}}_{\F}/(\eta_Q - 2\eta_M/\alpha)}$
or if $\P f^2 \le (\alpha r)^2$ and $\Pn f^2 > 2(\eta_M \alpha r^2 + \lambda \norm{m_{\star}}_{\F})$.
And when $(\alpha r)^2 \ge 2 \lambda \norm{m_{\star}}_{\F}/(\eta_Q - 2\eta_M/\alpha)$, 
and therefore for $\alpha \ge 2\eta_M/\eta_Q + \sqrt{2\lambda \norm{m_{\star}}_{\F}/\eta_Q r^2}$,
this conclusion simplifies to $\ell'(f) > 0$ if $\P f^2 > (\alpha r)^2$
or if $\Pn f^2 > 2(\eta_M \alpha r^2 + \lambda \norm{m_{\star}}_{\F})^2$.
Our claim follows, as $\ell'(\hm-m) \le 0$.
\end{proof}

\end{appendix}

\fi

\end{document}